\DeclareMathOperator*{\argmin}{arg\,min}
\DeclareMathOperator*{\argmax}{arg\,max}
\newtheorem{lemma}{Lemma}
\newtheorem{theorem}{Theorem}
\newtheorem{assumption}{Assumption}
\newtheorem{example}{Example}
\newtheorem{remark}{Remark}
\algrenewcommand\alglinenumber[1]{\scriptsize #1}
\def \ETx {\mathcal{E}}
\def \cR {\mathcal{R}}
\def \TRx {\Gamma}	
\let\OldStatex\Statex
\renewcommand{\Statex}[1][3]{%
  \setlength\@tempdima{\algorithmicindent}%
  \OldStatex\hskip\dimexpr#1\@tempdima\relax}
\title{Optimal Offline and Competitive Online Strategies for Transmitter-Receiver Energy Harvesting}
\author{Siddhartha Satpathi, Rushil Nagda, Rahul Vaze
\thanks{This paper  will appear in part in Proc. IEEE ICC 2015.}
}
\begin{document}
\maketitle
\thispagestyle{empty}
\pagestyle{empty}
\begin{abstract}
A joint transmitter-receiver energy harvesting model is considered, where both the transmitter and receiver are powered by (renewable) energy harvesting source. Given a fixed number of bits, the problem is to find the optimal transmission power profile at the transmitter and ON-OFF profile at the receiver to minimize the transmission time. With infinite capacity at both the transmitter and receiver, 
optimal offline and optimal online policies are derived. The optimal online policy is shown to be two-competitive in the arbitrary input case.  With finite battery capacities at both ends, only random energy arrival sequence with given distribution are considered, for which an online policy with bounded expected competitive ratio is proposed.
\end{abstract}

\begin{IEEEkeywords}Energy harvesting, offline algorithm, online algorithm, competitive ratio.
\end{IEEEkeywords}

\section{Introduction}
Extracting energy from nature to power communication devices has been an emerging area of research.  Starting with \cite{ozel2012achieving, SharmaEH2014}, a lot of work has been reported on finding the capacity, approximate capacity \cite{dong2014near}, structure of optimal policies \cite{sinha2012optimal}, optimal power transmission profile  \cite{UlukusEH2011b, UlukusEH2011c, michelusi2012optimal, VazeEHICASSP14}, competitive online algorithms \cite{VazeEH2011}, etc. One thing that is common to almost all the prior work is the assumption that energy is harvested only at the transmitter while the receiver has some conventional power source. This is clearly a limitation, however, helped to get some critical insights into the problem. 

In this paper, we broaden the horizon, and study the more general problem when energy harvesting is employed both at the transmitter and the receiver. 
The joint (tx-rx) energy harvesting model has not been studied in detail and only some preliminary results are available, e.g., a constant approximation to the maximum throughput has been derived in \cite{VazeEH2014} or \cite{Yener2012enernodes}, \cite{Sharma2010enernodes}.
This problem is fundamentally different than using energy harvesting only at the transmitter, where receiver is always assumed to have energy to stay {\it on}. In contrast to the variable power model at the transmitter where it can choose to transmit any power level given the available energy constraint,  the receiver energy consumption model is binary, as it uses a fixed amount of energy to stay {\it on}, and is {\it off} otherwise. Since useful transmission happens only when the receiver is \textit{on}, the problem is to find jointly optimal decisions about transmit power and receiver ON-OFF schedule. Under this model, there is an issue of coordination between the transmitter and the receiver to implement the joint decisions, however, we ignore that in the interest to make some analytical progress, and assume that the decisions are made by a centralized controller.

We study the canonical problem of finding the optimal transmission power and receiver ON-OFF schedule to minimize the time required for transmitting a fixed number of bits, first in the case when there is no limit on the battery capacities and then generalize it for finite battery capacities at both the transmitter and the receiver. 
We first consider the offline case, where the energy arrivals both at the transmitter and the receiver are assumed to be known non-causally. Even though offline scenario is unrealistic, it still gives some design insights. 
Then we consider the more useful online scenario, where both the transmitter and the receiver only have causal information about the energy arrivals. To characterize the performance of an online algorithm, typically, the metric of competitive ratio is used that is defined as the maximum ratio of `profit' of the online and the offline algorithm over all possible inputs. 


For the infinite battery capacity case, in prior work \cite{UlukusEH2011b}, an optimal offline algorithm has been derived for the case when energy is harvested only at the transmitter, which cannot be generalized with energy harvesting at the receiver together with the transmitter. To understand  the difficulty, assume that the receiver can be \textit{on} for maximum time $T$.
The policy of \cite{UlukusEH2011b} starts transmission at the first energy arrival time, and power transmission profile is the
one that yields the tightest piecewise linear energy consumption curve that lies under the energy harvesting curve at all times and touches the energy harvesting curve at end time.
 The policy of \cite{UlukusEH2011b}, however, may take more than $T$ time and hence may not be feasible with the receiver {\it on} time constraint.
So, we may have to either delay the start of transmission and/or keep stopping in-between  to accumulate more energy to transmit with higher power for shorter bursts, such that the total time for which transmitter and receiver is \textit{on}, is less than $T$. Similarly, for the finite battery capacity, an optimal offline algorithm has been derived for the case when energy is harvested only at the transmitter in \cite{Yener2012optbat}. However, once again there is no easy way of extending the results of \cite{Yener2012optbat}, when both the transmitter and receiver are powered by EH, and we need a new approach.

%
%
%
%

With infinite battery capacity at both the transmitter and the receiver, in the offline scenario, we derive the structure of the optimal algorithm, and then propose an algorithm that is shown to satisfy the optimal structure. The power profile of the proposed algorithm is fundamentally different than the optimal offline algorithm of \cite{UlukusEH2011b}, however, the two algorithms have some common structural properties. 
The recipe of our solution is to first solve the simpler problem of finding the optimal offline algorithm when there is only one energy arrival at the receiver. Building upon this solution, we then derive the optimal offline solution to the problem with multiple energy arrivals at the receiver, to be one among finitely many solutions of the problem with only one energy arrival at the receiver, where corresponding single energy arrivals are suitably constructed. This technique not only gives an elegant method to prove the optimality, but also helps in simplifying the complexity of the optimal algorithm.  

Next, we consider the more useful setup of online algorithms that use only causal information. 
With infinite battery capacities at both ends, for the online scenario, we propose an online algorithm, which starts at time where the accumulated energy at both the transmitter and the receiver is sufficient to transmit the given number of bits eventually. The transmit power at any time (only updated at energy arrival epoch of the transmitter) is such that using the available energy, the remaining number bits are transmitted in minimum time assuming no more energy is going to arrive in future.
We
show that the competitive ratio of the proposed online algorithm is strictly less than $2$ for any energy arrival inputs, even if chosen by an adversary. With only energy harvesting at the transmitter, a $2$-competitive online algorithm has been derived in \cite{VazeEH2011}. This result is more general with different proof technique that allows energy harvesting at the receiver. To prove that the proposed online algorithm is optimal, we show a lower bound on the competetive ratio that is arbitrarily close to $2$ for any online algorithm. This is accomplished by constructing two ``bad" sequences of energy arrivals at the transmitter and the receiver, for which any algorithm fails to achieve a competitive ratio of better than $2$ for at least one of the two sequences.

Finally, we consider the case of finite battery capacity. With finite battery capacity, it is easy to show that the competitive ratio of any online algorithm with the worst case input is unbounded as follows. Suppose, by time slot $t$, any online algorithm consumes more (less) energy than the optimal offline algorithm, then it is easy to construct future energy arrival sequences, for which the optimal offline algorithm can finish transmission of given number of bits, on account of knowing the input sequence and transmitting at a slower (faster) rate, while the online algorithm can never finish the transmission.
Thus, we restrict ourselves to scenario where energy arrivals follow a 
known distribution, but the realization information is only known causally. We propose a simple Accumulate and Dump algorithm, that waits for battery to fill up to a certain prefixed level, and as soon as the accumulated energy is above the level, uses all the energy in the next slot, and restarts accumulating all over again. We show that the expected competitive ratio of the proposed algorithm is finite, which can be computed explicitly given the energy arrival distribution. In prior work \cite{Yener2012optbat,erkal2013optimal, ozel2012optimal}, optimal offline algorithm has been derived when only the transmitter is powered with EH and has a finite battery capacity. Instead of the offline regime, in this paper, we concentrate on the online setting which is more relevant in practice and propose algorithms that have a finite penalty with respect to the optimal offline algorithm.

\vspace{-0.3cm}
\section{System Model}
\vspace{-0.15cm}
The energy arrival instants at transmitter are marked by $\tau_i$'s with energy $\ETx_i$'s for $i \in \{0,1,\cdots\}$. The total energy harvested at the transmitter till time $t$ is given by \begin{equation}
\ETx(t)=\sum\limits_{i:\tau_i \le t}\ETx_i.
\end{equation} 
Similarly, the energy arrival instants at the receiver are denoted as $r_i$ with energy $\cR_i$. We initialize $\tau_0, r_0$ to $0$ without affecting the system model as follows. If $r_0\le\tau_0$, i.e. the first energy arrival at the receiver occurs before the first energy arrival at the transmitter, then we assume that  $\sum_{i:r_i\le \tau_0}\cR_i$ energy is harvested at the receiver at time $\tau_0$, i.e. $r_0=\tau_0$. We shift the time origin to $\tau_0=r_0$, i.e. $\tau_0=r_0=0$. Note that, since the transmitter has $0$ energy to transmit before time $\tau_0$, no transmission policy can start transmission before $\tau_0$. Therefore, assuming $r_0=\tau_0$ whenever $r_0\le\tau_0$, does not affect any transmission policy. 
Similarly, whenever $\tau_0<r_0$, we assume $\sum_{i:\tau_i\le r_0}\ETx_i$ energy arrives at the transmitter at time $r_0$, i.e. $\tau_0=r_0$, and we offset time origin to $\tau_0=r_0=0$.

The receiver spends a constant $P_{r}$ amount of power to be in `\textit{on}' state during which it can receive data from the transmitter. When it is in `\textit{off}' state it does not receive data, and uses no power. Hence, each energy arrival of $\cR_i$ adds $\Gamma_i= \frac{\cR_i}{P_{r}}$ amount of receiver {\it on} time. The total `time' harvested at the receiver till time $t$ is given by, 
\begin{equation}
\TRx(t)=\sum\limits_{i:r_i \le t}\TRx_i.
\end{equation}

 

The rate of transmission using transmit power $p$ when the receiver is {\it on} is given by a function $g(p)$ which is assumed to follow the following properties,
\begin{align*}
&\text{P1) } &&g(p)\text{ is monotonically increasing in } p, \text{ such that } \ g(0)=0\text{ and }\lim_{p\rightarrow \infty} g(p)= \infty,
\\
&\text{P2) } &&g(p)\text{ is concave in nature with } p,
\\ 
&\text{P3) } &&\frac{g(p)}{p} \text{ is convex, monotonically decreasing with } \  p \text{ and } \lim_{p\rightarrow \infty} \frac{g(p)}{p}= 0.\label{property_decreasing}
\end{align*}
Assuming an AWGN channel, $\log$ function is one such example satisfying all the above properties.


Let a transmission policy change its transmission power at time instants $s_i$'s, i.e. $p_i$ is the transmitter power between time $s_i$ and $s_{i+1}$. The receiver is \textit{on} from time $s_i$ to $s_{i+1}$ whenever $p_i\neq 0$ and is \textit{off} only if $p_i=0$. Thus, succinctly, we say that receiver is \textit{on} at time $t$ to mean that transmit power $p_i\neq 0$ for $t\in[s_i, s_{i+1}]$ and receiver is \textit{on}.
The start and the end time of any policy is denoted by $s_1$ and $s_{N+1}$, respectively. Thus, any policy can be represented as $\{\bm{p}$, $\bm{s}, N\}$, where $\bm{p}=\{p_1, p_2, \cdots , p_N\}$ and $\bm{s}=\{s_1, s_2, \cdots , s_{N+1}\}$. The energy used by a policy at the transmitter upto time $t$ is denoted by $U(t)$, and the number of bits sent by time $t$ is represented by $B(t)$. Clearly, for $j=\argmax_{i}\{s_i < t\}$,
\begin{align}
U(t)&=\;\;\mathclap{\sum_{i=1, p_i\neq 0}^{j-1}}\;\;\; p_i(s_{i+1}-s_i)+p_{j}(t-s_j), \;\; s_1 < t \le s_{N+1},
\\
    &= U(s_{N+1}), \;\; t>s_{N+1},
\\
 &=0, \;\; t\le s_1, 
\\ 
B(t) &=\;\mathclap{\sum_{i=1, p_i\neq 0}^{j-1}}\;\;\; g(p_i)(s_{i+1}-s_i)+g(p_{j})(t-s_j), s_1 < t \le s_{N+1} ,
\\
    &= B(s_{N+1}),\;\; t>s_{N+1},
\\
 &=0, \;\; t\le s_1.
\end{align}
Similarly, the total time for which the receiver is {\it on} till time $t$ is denoted as $C(t)$.

Except for section \ref{sec:finite_battery}, we assume that an infinite battery capacity is available both at the transmitter and the receiver to store the harvested energy. 
Our objective is, given a fixed number of bits $B_0$, minimize the time of their transmission. For any policy, the total time for which the receiver is \textit{on} is referred to as the `transmission time' or the `transmission duration',  and the time by which the transmission of $B_0$ bits is finished, is called as the `finish time'.
Thus, we want to minimize the finish time. Also, since the receiver may not be always \textit{on} before finish time, we have transmission time less than or equal to finish time. Formally, we want to solve,

%

\begin{align}
&\min_{\{\bm{p},\bm{s},N\},T=s_{N+1}}			&& T \label{pb2}
\\
&\text{subject to} 				&& B(T)=B_0, 
\label{pb2_constraint_bits}
\\
&     										&& U(t)\le \mathcal{E}(t)  		\;\;\;\;\;\; \forall \; t\;\in\;[0,T], \label{pb2_constraint_energy}
\\
&    										&&C(t) \leq \TRx(t) \;\;\;\;\;\;\forall \; t \; \in \; [0,T]. 
\label{pb2_constraint_time}
\end{align}
Under transmission policy $\{\bm{p},\bm{s},N\}$, the total receiver \textit{on} time till time $t$ for $s_1<t\le s_{N+1}$ is given by,
\begin{equation}
C(t)=\displaystyle \sum_{i=1}^{k-1} \mathbbm{1}_i(s_{i+1} - s_i) + \mathbbm{1}_k(t - s_{k}),
\label{reciever_used}
\end{equation}
where $k = \max\{i| s_{i} < t\}$ and $\mathbbm{1}_i:\mathbb{R}\rightarrow\{0,1\}$ is a function that takes value $1$ if $p_i>0$ and $0$ if $p_i=0$. 
Constraints \eqref{pb2_constraint_energy} and \eqref{pb2_constraint_time} are the energy neutrality constraints at the transmitter and the receiver, i.e.  energy/on-time used cannot be more than available energy/on-time

\section{OPTIMAL OFFLINE ALGORITHM FOR SINGLE ENERGY ARRIVAL AT THE RECEIVER}

In this section, we consider an offline scenario, i.e., all energy arrival epochs $\tau_i$'s and energy harvest amounts $\ETx_i$'s at the transmitter are known ahead of time non-causally.
Moreover,
we assume that the receiver gets only one energy arrival of $\cR_0$ at time $0$, and hence the total receiver {\it on} time is $\Gamma_0= \frac{\cR}{P_r}$.
The crux of problem in both cases (with single/multiple energy arrivals at the receiver) lies in overcoming the problem of the limited transmission time available at the receiver and is not affected much by the number of energy harvests at the receiver. As we shall see, the optimal offline algorithm with multiple energy arrivals at the receiver (solving \eqref{pb2}) consists of repeated application of the derived optimal algorithm for the single energy arrival case. Hence, we postpone the analysis with multiple energy arrivals at the receiver to section \ref{sec:OFFM}.

With only one energy harvest at the receiver, i.e. with  total receiver time $\TRx_0$ harvested at time $0$, a special case of \eqref{pb2}  to minimize the finish time of transmission of $B_0$ bits is,

\begin{align}
\min_{\{\textbf{p},\textbf{s},N\},T=s_{N+1}}	\;\;\;\;\;\;\;\;	&T\label{pb1}
\\
\text{subject to}\;\;\;\;\;\;\; B(T) &=B_0, 
\label{pb1_constraint_bits}
\\
    										 U(t) &\le \mathcal{E}(t), \;\;\; \forall \; t\;\in\;[0,T], \label{pb1_constraint_energy}
\\
    										\sum_{i=1:p_i\neq 0}^{N} (s_{i+1}-s_i)  &\le \TRx_0.
\label{pb1_constraint_time}
\end{align}
Compared to the no receiver constraint \cite{UlukusEH2011b}, Problem \eqref{pb1} is far more complicated, since it involves jointly solving for optimal transmitter power allocation and time for which to keep the receiver {\it on}.

We next present some structural results on the optimal policy to  \eqref{pb1} starting with
Lemma \ref{lemma_increasing_power}, which states that transmission powers in the optimal policy to \eqref{pb1} are non-decreasing over time. 
\begin{lemma}
In an optimal solution to Problem \eqref{pb1}, if $p_i\neq 0$, then $p_i\ge p_j$ $\ \forall \ j<i$ with $i,j\in \{1,2,\cdots ,N\}$\footnote{\label{note1}Observe that without receiver energy harvesting constraint \eqref{pb1_constraint_time}, $p_i\neq 0,\forall i$ from \cite{UlukusEH2011b} and Lemma \ref{lemma_increasing_power} would be same as Lemma 1 in \cite{UlukusEH2011b}. But, as we have constraint on the total receiver time, in optimal solution, transmitter may shut \textit{off} for some time and resume transmission when enough energy is harvested. Hence, $p_i$ may be $0$ in-between transmission. Lemma \ref{lemma_increasing_power} shows that even if this happens, non-zero powers still remain non-decreasing.}.
 
\label{lemma_increasing_power}
\end{lemma}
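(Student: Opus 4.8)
The plan is to argue by contradiction via an exchange argument: suppose an optimal policy $\{\bm p,\bm s,N\}$ has two nonzero powers $p_j > p_i$ with $j < i$, and show we can construct a feasible policy with strictly smaller finish time (or at least no larger, together with slack that can then be exploited), contradicting optimality. The central observation is that $g$ is concave (property P2), so averaging two distinct power levels over their respective durations produces \emph{more} bits for the \emph{same} energy and the \emph{same} total receiver on-time. Concretely, let $\Delta_j = s_{j+1}-s_j$ and $\Delta_i = s_{i+1}-s_i$ be the durations of the two slots. Replace both powers by a common level $\bar p$ chosen so that total transmitter energy spent on these two slots is unchanged, i.e. $\bar p(\Delta_i+\Delta_j) = p_i\Delta_i + p_j\Delta_j$; this keeps $U(\cdot)$ unchanged at the right endpoint $s_{i+1}$ and, because we are lowering the later, larger power and raising the earlier, smaller one, it does not violate the energy-neutrality constraint \eqref{pb1_constraint_energy} at intermediate times (the cumulative energy use only decreases on $[s_j,s_i]$, since we shift consumption later). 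The receiver on-time constraint \eqref{pb1_constraint_time} is untouched because both slots remain nonzero and their total length is preserved. By Jensen/concavity, $g(\bar p)(\Delta_i+\Delta_j) \ge g(p_i)\Delta_i + g(p_j)\Delta_j$ with strict inequality since $p_i\neq p_j$, so strictly more than $B_0$ bits are now delivered; shaving the excess off the end of the transmission gives a strictly smaller finish time, contradicting optimality.

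A few details need care. First, one must handle the possibility that slots $j$ and $i$ are not adjacent and that there are zero-power (receiver-\emph{off}) gaps or other nonzero slots between them; the cleanest route is to first reduce to the adjacent case. If between $s_{j+1}$ and $s_i$ there is a nonzero slot $p_k$, then either $p_k$ already violates monotonicity with one of $p_j,p_i$ (apply the argument to that closer-together pair) or we can cascade the exchange; alternatively, just verify directly that the energy curve argument above works for non-adjacent slots as long as the energy released earlier (by lowering $p_j$) is not needed before $s_{j+1}$ — which holds since we only \emph{decrease} cumulative transmitter energy use on $[s_j, s_i]$ and restore it exactly at $s_{i+1}$. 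Second, there is the boundary subtlety that lowering $p_j$ might make it zero; but $\bar p$ is a strict convex combination of $p_i>0$ and $p_j>0$, hence $\bar p>0$, so both slots stay nonzero and the on-time accounting is unaffected. Third, after delivering surplus bits one should note the finish time is defined via $B(T)=B_0$, so truncating the last slot strictly earlier is legitimate and strictly decreases $T$.

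The main obstacle I expect is the bookkeeping around non-adjacent slots and intervening receiver-\emph{off} periods: one has to be sure that rearranging power between slots $j$ and $i$ never forces the transmitter to use energy it has not yet harvested at some intermediate epoch $\tau_m \in (s_j, s_i)$. The safe resolution is the monotone-shift observation — the modification only postpones energy consumption, never advances it — so \eqref{pb1_constraint_energy} is preserved pathwise; this is exactly the place where the direction of the swap ($p_j$ down, $p_i$ up, with $j<i$) matters, and it is why the lemma is stated for $j<i$ rather than symmetrically. Everything else is a routine concavity/Jensen estimate plus the truncation step.
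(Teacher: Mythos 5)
Your proposal is correct, and its core mechanism is the same as the paper's: argue by contradiction, replace the two offending powers by their duration-weighted, energy-preserving average $\bar p$, invoke concavity of $g$ (Jensen) to get more bits for the same energy and the same receiver on-time, and truncate the tail to finish strictly earlier. One sentence is misstated: in the violating configuration $p_j>p_i$ with $j<i$, the \emph{earlier} power is the larger one, so the swap lowers the earlier power and raises the later one — you say it backwards once, but your closing remark (``$p_j$ down, $p_i$ up'') and your feasibility reasoning (consumption is only postponed, so $U(t)$ can only decrease on the intermediate interval and is restored at $s_{i+1}$) are the correct version, so this is a verbal slip rather than a gap. Where you genuinely diverge from the paper is the non-adjacent case. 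The paper chooses $i$ minimal and $j$ maximal so that every intervening power is forced to be zero, and then explicitly shifts those receiver-\emph{off} periods to the left so that the two slots become adjacent before averaging (its \textit{Case 2}); you instead note that the averaging exchange preserves \eqref{pb1_constraint_energy} pathwise no matter what lies between the two slots, and preserves \eqref{pb1_constraint_time} because $\bar p>0$ keeps both slots \emph{on} with unchanged durations. Your route dispenses with the off-period-shifting construction and is arguably cleaner; the only shared caveat (present in the paper's proof as well) is that the strict gain in bits requires the Jensen inequality to be strict, which holds for strictly concave rate functions such as $\log$.
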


\begin{proof}
We prove this by contradiction. Assume that the optimal policy (say $X$), with $\{\bm{p},\bm{s},N\}$ violates the condition stated in Lemma \ref{lemma_increasing_power}. Let $p_i\neq 0$ be the first transmission power such that $\exists k<i:\ p_i<p_k $. Let $j=\max \{k:p_i<p_k\}$. 

$Case\;1:$ Suppose $j=i-1$. This situation is shown in Fig. \ref{Lemma1} (a). In this case, consider a new transmission policy (say $Y$) which is same as the optimal policy till time $s_{i-1}$. From $s_{i-1}$ to $s_{i+1}$, $Y$ transmits at a constant power $p'=\dfrac{p_i(s_{i+1}-s_{i})+p_{i-1}(s_{i}-s_{i-1})}{s_{i+1}-s_{i-1}}$. Then the number of bits transmitted by policy $Y$ from time $s_{i-1}$ to $s_{i+1}$ is given by $g(p')(s_{i+1}-s_{i-1})$ while the optimal policy transmits $g(p_i)(s_{i+1}-s_{i})+g(p_{i-1})(s_{i}-s_{i-1})$ bits. Due to concavity of $g(p)$,
\begin{align*}
&g(p_i)\frac{s_{i+1} -s_{i}}{s_{i+1} -s_{i-1}}+g(p_{i-1})\frac{s_{i}-s_{i-1}}{s_{i+1}-s_{i-1}}
\\
& \hspace{20 mm} \le g\left(\frac{p_i(s_{i}-s_{i-1})+p_{i-1}(s_{i+1}-s_{i})}{s_{i+1}-s_{i-1}}\right),
\\
&  g(p_i)(s_{i+1}-s_{i}) +g(p_{i-1})(s_{i}-s_{i-1})
\\
&\hspace{20 mm}\le g(p')(s_{i+1}-s_{i-1}) . 
\end{align*}
Hence, both $X$ and $Y$ transmit equal number of bits till time $s_{i-1}$, while $Y$ transmits more number of bits than $X$ by time $s_{i+1}$. After time $s_{i+1}$, suppose policy $Y$ transmits with power same as policy $X$ till it completes transmitting $B_0$ bits. Since $Y$ has transmitted more bits than $X$ till time $s_{i+1}$, it finishes transmitting all $B_0$ bits earlier than $X$, contradicting the optimality of $X$.

$Case\;2:$ When $j<i-1$, by our assumption on choosing $j$, $p_i>p_{j+1},\cdots ,p_{i-1}$ and $p_i<p_{j}$. So, $p_{i-1},\cdots ,p_{j+1}<p_j$. If any of $p_{i-1},\cdots ,p_{j+1}$ is non zero, then $i$ no longer remains the minimum index violating the condition stated in Lemma \ref{lemma_increasing_power}. Hence, $p_{i-1},\cdots ,p_{j+1}=0$. This situation is shown in Fig. \ref{Lemma1}(b). Now, consider a policy $W$ where the transmission power is same as the optimal policy before time $s_j$ and after time $s_{i+1}$. From $s_j$ to $s_j'=s_j+s_{i}-s_{j+1}$, $W$ keeps the receiver \textit{off} (so transmitter does not transmit in this duration) and from $s_j'$ to $s_{i}$ it transmits at power $p_j$. This policy still transmits equal number of bits and ends at the same time as the optimal policy $X$. Now that $W$ matches with the form of $X$ in \textit{Case 1} from time $s_j'$ to $s_{i+1}$, we could proceed to generate another policy form $W$ (like $Y$ in \textit{Case 1}) which would finish earlier than $W$. Hence, this new policy would finish earlier than $X$ as well and we would reach a contradiction. 

\end{proof}

\begin{figure}[htb]
  \centering
  \centerline{\includegraphics[width=8cm]{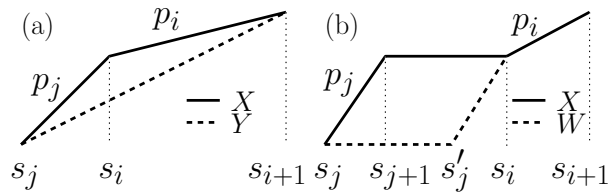}}
\caption{Figure showing the two cases of Lemma \ref{lemma_increasing_power}, (a)\textit{Case 1}   and (b)\textit{Case 2}, with $p_i>p_j$.}\label{Lemma1}
\end{figure}
Although, Lemma \ref{lemma_increasing_power} is valid for every optimal policy to \eqref{pb1}, we will narrow down the search for optimal solutions by looking at an interesting property presented in
Lemma \ref{lemma_nobreaks}, which tells us that there is no need to stop in-between transmissions, and start again. Thus, without affecting optimality, the start of the transmission can be delayed so that transmission power is non-zero throughout. 
\begin{lemma}
The optimal solution to Problem \eqref{pb1} may not be unique, but there always exists an optimal solution where once the transmission has started, the receiver remains `\textit{on}' throughtout, until the transmission is complete. \label{lemma_nobreaks}
\end{lemma}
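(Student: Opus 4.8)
The plan is to start from an arbitrary optimal policy $X = \{\bm{p}, \bm{s}, N\}$ for \eqref{pb1} and, whenever it contains an internal ``break'' (a segment with $p_k = 0$), convert it into another optimal policy with strictly fewer breaks by moving the break's duration into the initial idle interval $[0,s_1]$; an induction on the number of zero-power segments then produces an optimal policy in which the receiver stays on throughout the transmission. Note first that we may assume $p_N\neq 0$, since otherwise the policy obtained by ending at $s_N$ instead of $s_{N+1}$ is feasible, delivers $B_0$ bits, and has a smaller finish time, contradicting optimality of $X$; hence every break sits at some index $k<N$.

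Concretely, suppose $p_k=0$ on $[s_k,s_{k+1}]$ and set $\delta := s_{k+1}-s_k>0$. I would define a new policy $Y$ that is idle on $[0,s_1+\delta]$, then reuses $X$'s first $k-1$ power/duration pairs in order, and then, from time $s_{k+1}$ onward, reuses $X$'s segments $k+1,\dots,N$ with their original durations. The first block of $Y$ has total length $s_k-s_1$, so it terminates at $s_1+\delta+(s_k-s_1)=s_{k+1}$; consequently $Y$ coincides with $X$ on $[s_{k+1},s_{N+1}]$ as a function of time and finishes at $s_{N+1}=T$, the same finish time as $X$. Thus $Y$ is optimal as soon as it is shown to be feasible.

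Checking feasibility is the core of the argument, but each constraint is quick. For \eqref{pb1_constraint_bits} and \eqref{pb1_constraint_time}: the deleted segment contributes zero bits and zero receiver on-time, so $Y$ delivers the same $B_0$ bits as $X$ and keeps the receiver on for the same total time, which is $\le\Gamma_0$. For the transmitter energy-neutrality constraint \eqref{pb1_constraint_energy}: on $[0,s_1+\delta]$, $U_Y\equiv 0$; on $[s_1+\delta,s_{k+1}]$ one has $U_Y(t)=U_X(t-\delta)\le\mathcal{E}(t-\delta)\le\mathcal{E}(t)$ by monotonicity of the harvest curve $\mathcal{E}(\cdot)$; and on $[s_{k+1},T]$, because $X$ spends no energy during the break we have $U_X(s_{k+1})=U_X(s_k)=U_Y(s_{k+1})$, so $U_Y(t)=U_X(t)\le\mathcal{E}(t)$ there by feasibility of $X$.

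Finally, $Y$ is an optimal solution with one fewer zero-power segment than $X$ (removing segment $k$ may make segments $k-1$ and $k+1$ adjacent, but merging equal-power segments is a harmless relabeling and creates no new break), so the induction closes and yields an optimal policy with no breaks: once transmission starts, at the possibly postponed start time, the receiver is on continuously until $B_0$ bits are sent. The one genuinely delicate point I expect is the interval bookkeeping, i.e.\ verifying that $Y$'s first block ends exactly at $s_{k+1}$ so that $Y$'s tail aligns with $X$'s tail and nothing downstream becomes infeasible; everything else reduces to monotonicity of $\mathcal{E}(\cdot)$ together with the fact that a zero-power segment contributes nothing to bits, energy, or receiver on-time. (Lemma \ref{lemma_increasing_power} is not needed for this argument, though it is consistent with the construction.)
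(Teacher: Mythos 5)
Your proof is correct and follows essentially the same approach as the paper: both shift each zero-power interval to the front of the transmission, preserving the finish time and using monotonicity of $\mathcal{E}(\cdot)$ to keep the energy constraint satisfied. The only difference is that you perform the shift in a single step with a direct feasibility check, whereas the paper moves the idle interval left one epoch at a time through a chain of intermediate optimal policies; your bookkeeping (the first block ending exactly at $s_{k+1}$ so the tail aligns with $X$) is the right check and holds as you computed.
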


\begin{proof}
We construct an optimal solution for which $p_i>0$ for all $i\in\{1,\cdots ,N\}$, i.e., with no breaks in transmission, from any other optimal solution. Let an optimal policy $X$ be characterized by $\{\bm{p},\bm{s},N\}$. Now, if $p_i\neq 0\; \forall \ i$, then we are done. Suppose some powers, say $p_{i_1},p_{i_2},\cdots ,p_{i_k}=0$ for some $k<N$, where $i_1<i_2<\cdots <i_k$. We first look at instant $i_1$.

Consider Fig. \ref{fig_Lemma2} (a), and a new policy (say $Y$) which is same as policy $X$ before time $s_{i_1-1}$ and after time $s_{i_1+1}$. But, it keeps the receiver \textit{off} for a duration of $(s_{i_1+1}-s_{i_1})$ starting from time $s_{i_1-1}$ (i.e. from $s_{i_1-1}$ to $s_{i_1}'=(s_{i_1-1}+s_{i_1+1}-s_{i_1})$) and transmits with power $p_{i_1-1}$ from time $s_{i_1}'$ till $s_{i_1+1}$. $Y$ transmits same amount of bits in same time as $X$ and also satisfies constraints \eqref{pb1_constraint_bits}-\eqref{pb1_constraint_time}. So $Y$ is also an optimal policy. But the receiver \textit{off} duration in $Y$, $(s_{{i_1+1}}-s_{i_1})$, has been shifted to left. 

Next, we generate another policy $Z$ from $Y$ by shifting the \textit{off} duration $s_{i_1}'-s_{i_1-1}=(s_{{i_1+1}}-s_{i_1})$ to start from epoch $s_{i_1-2}$ upto $s_{i_1-1}'$, $s_{i_1-1}'-s_{i_1-2}=s_{i_1}'-s_{i_1-1}=(s_{{i_1+1}}-s_{i_1})$, as shown Fig. \ref{fig_Lemma2} (b). $p_{i_1-2}$  is shifted right to start from $s_{i_1-1}'$. Note that $Z$ is also optimal. We continue this process of shifting the receiver \textit{off} period to the left to generate new optimal policies till we reach a policy (say $W$) where the receiver is \textit{off} for time $(s_{{i_1+1}}-s_{i_1})$ from $s_1$, i.e. from $s_{1}$ to $s_1'$, $s_1'-s_1=(s_{{i_1+1}}-s_{i_1})$, as shown in Fig. \ref{fig_Lemma2} (c). As $W$ has $0$ transmission power from the start time $s_1$ to $s_1'$, the effective start time of $W$ can now be changed to $s_1'$. 

We can repeat this procedure for each \textit{off} period corresponding to $p_{i_2},\cdots ,p_{i_k}$ till the total \textit{off} period is shifted to the beginning of transmission. 
This results in a policy with no zero powers in between, that starts \textit{after} time $s_1$ (at $s_1+(s_{{i_1+1}}-s_{i_1})+\cdots +(s_{{i_k+1}}-s_{i_k})$) and ends at the same time $s_{N+1}$ as policy $X$.

\end{proof}
\begin{figure}[htb]
  \centering
  \centerline{\includegraphics[width=8cm]{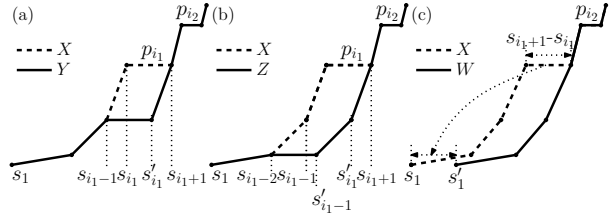}}
\caption{Illustration of Lemma \ref{lemma_nobreaks}. Receiver \textit{off} time of $(s_{j}-s_{i_1})$ is progressively shifted to left as shown in (a) to (b) to (c).}\label{fig_Lemma2}
\end{figure}
\textit{In the subsequent discussion, the optimal solution to Problem \eqref{pb1} means one with no breaks in transmission (reception).} As we shall see in Theorem \ref{th_algo1_1}, such an optimal solution is unique.

Next, we show that the transmission power changes (if at all) only at energy arrival epochs $\tau_i$'s, and the energy used up by that epoch is equal to all the energy that has arrived till then.
\begin{lemma}
For optimal policy $\{\bm{p},\bm{s},N\}$, $s_i=\tau_j$ for some $j$, $U(s_i)=\ETx(s_i^-)\ \forall i\in\{2,\cdots ,N\}$, and $U(s_{N+1})=\ETx(s^-_{N+1})$.
\label{lemma_energy_consumed} 
\end{lemma}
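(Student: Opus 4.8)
The plan is to prove the two assertions by local exchange (perturbation) arguments resting on concavity of $g$, Lemma~\ref{lemma_increasing_power}, and the ``no breaks'' normalization of Lemma~\ref{lemma_nobreaks}. Before starting I would fix a convenient representation: by Lemma~\ref{lemma_nobreaks} I may take $p_i>0$ for all $i$, and I would additionally merge equal-power adjacent segments so that $s_2,\dots,s_N$ are exactly the instants at which the transmit power changes; then Lemma~\ref{lemma_increasing_power} forces $0<p_1<p_2<\cdots<p_N$.

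To show each change instant is an arrival epoch I would argue by contradiction. If some $s_i$ ($2\le i\le N$) is not a $\tau_k$, then, the $\tau_k$'s being discrete, there is a small $\theta>0$ with $[s_i-\theta,s_i+\theta]\subset(s_{i-1},s_{i+1})$ and containing no $\tau_k$, so $\ETx(\cdot)$ is constant on it. Replacing the powers $p_{i-1},p_i$ on the two halves by their time-average $\bar p=(p_{i-1}+p_i)/2$ over the whole window leaves $U(\cdot)$ unchanged at $s_i+\theta$ (and outside the window), and since the modified $U(\cdot)$ is increasing while $\ETx(\cdot)$ is flat above it, \eqref{pb1_constraint_energy} still holds; the receiver on-time is unaffected so \eqref{pb1_constraint_time} holds; and concavity of $g$ with $p_{i-1}\ne p_i$ gives $2\theta\,g(\bar p)>\theta\,g(p_{i-1})+\theta\,g(p_i)$, so the modified policy has sent more than $B_0$ bits by $s_{N+1}$ and hence reaches $B_0$ transmitted bits strictly before $s_{N+1}$ -- contradicting optimality. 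Thus $s_i=\tau_j$ for some $j$. The same perturbation applied inside any inter-arrival interval shows, as a by-product, that the optimal power is constant between consecutive epochs.

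For $U(s_i)=\ETx(s_i^-)$, continuity of $U$ together with \eqref{pb1_constraint_energy} for $t\uparrow s_i$ already gives ``$\le$''; I would prove ``$\ge$'' by contradiction, assuming slack $U(s_i)<\ETx(s_i^-)$. The idea is to move a sliver of energy from the higher-power side of $s_i$ to the lower-power side, but -- and this is the step I expect to be the main obstacle -- the perturbation must be confined to an epoch-free neighbourhood of $s_i$: raising the power uniformly over all of $[s_{i-1},s_i]$ can violate \eqref{pb1_constraint_energy} just before an interior arrival epoch at which the original policy has already exhausted all previously harvested energy. Concretely I would pick $\theta,\theta'>0$ small with no $\tau_k$ in $(s_i-\theta,s_i)\cup(s_i,s_i+\theta')$, use power $p_{i-1}+\eta$ on $[s_i-\theta,s_i]$ and $p_i-\eta'$ on $[s_i,s_i+\theta']$ with $\eta\theta=\eta'\theta'$ and $\eta$ small; then $U(\cdot)$ is unchanged outside $[s_i-\theta,s_i+\theta']$ (the exchange is energy-neutral there), on $(s_i-\theta,s_i)$ it stays below $U(s_i)+\eta\theta<\ETx(s_i^-)$, and on $(s_i,s_i+\theta')$ it is bounded by its (unchanged) value at $s_i+\theta'$, so all constraints hold; meanwhile the bits sent on this window, $\theta\,g(p_{i-1}+\eta)+\theta'\,g(p_i-\eta\theta/\theta')$, have $\eta$-derivative $\theta\bigl(g'(p_{i-1})-g'(p_i)\bigr)>0$ at $\eta=0$ by concavity and $p_{i-1}<p_i$, so for small $\eta>0$ again more than $B_0$ bits are delivered by $s_{N+1}$, a contradiction. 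The case $i=N+1$ is the one-sided version: if $U(s_{N+1})<\ETx(s_{N+1}^-)$, raise $p_N$ slightly on an epoch-free left-neighbourhood of $s_{N+1}$; causality survives a small perturbation and strictly more than $B_0$ bits get through, contradiction.

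Finally, a remark on hypotheses: turning ``$\ge B_0$'' into ``$>B_0$'' above uses strict concavity of $g$ (which holds for the $\log$ example); for merely concave $g$ these perturbations instead show that \emph{some} optimal policy has the stated structure, which is all that the subsequent development needs.
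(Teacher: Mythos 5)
Your proposal is correct and is essentially the argument the paper relies on: the paper's own "proof" simply invokes Lemmas 1 and 2 to reduce to the setting of Lemmas 2--3 of the cited reference \cite{UlukusEH2011b}, whose proofs are exactly these local power-averaging and energy-exchange perturbations exploiting concavity of $g$. You have merely written the details out in full (with appropriate care in confining the perturbations to epoch-free neighbourhoods, and an honest caveat about strict versus non-strict concavity that the paper itself glosses over).
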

\begin{proof}
By Lemma \ref{lemma_increasing_power} and \ref{lemma_nobreaks},  $p_i\neq 0$ and $p_{i+1}\ge p_i,\forall 1\le i\le N$. So, the proof follows similar to Lemma 2,3 in \cite{UlukusEH2011b}. 
\end{proof}
\vspace{-0.1in}

It may happen that at some epoch $\tau_k$, $U(\tau_k) = \ETx(\tau_k^-)$ holds true, but the transmission power does not change. For notational simplicity, we include all such $\tau_k$'s in $\bm{s}$, where $U(\tau_k)= \ETx(\tau_k^-)$.

Next lemma states that if we take any feasible policy, $\{\bm{p},\bm{s},N\}$ and decrease  $p_1$ and increase $p_N$ while keeping the number of transmitted bits fixed, the transmission time increases, while reducing the finish time of the policy. Lemma \ref{lemma_increase_time} will be useful to prove uniqueness of the optimal policy with no breaks in transmission.
\begin{figure}
\centering
  \centerline{\includegraphics[width=8cm]{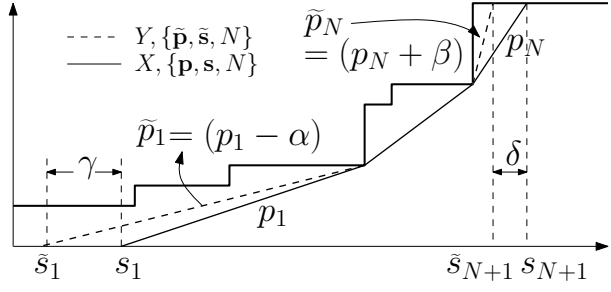}}
\caption{Illustration for the proof of Lemma \ref{lemma_increase_time}.}\label{lemma4}
\end{figure} 
\begin{lemma}
Consider two policies $X$, $\{\bm{p},\bm{s},N\}$  and $Y$, $\{\bm{\widetilde{p}},\bm{\widetilde{s}},N\}$, which are feasible with respect to energy constraint \eqref{pb1_constraint_energy}, have non-decreasing powers and transmit same number of bits in total. If $Y$ is same as $X$ from time $s_2$ to $s_{N}$, but $\widetilde{p}_1=p_1-\alpha,\widetilde{p}_N=p_N+\beta$ with $\alpha,\beta>0$ and $U(s_{N+1})=U(\widetilde{s}_{N+1})$, then we have that the finish time with $Y$ is less than that of $X$, i.e.,  $\widetilde{s}_1=s_1-\gamma, \widetilde{s}_{N+1}=s_{N+1}-\delta$ with some $\gamma,\delta>0$, and the transmission time of $Y$ is more than that of $X$, i.e., $
(\widetilde{s}_{N+1}-\widetilde{s}_1)>(s_{N+1}-s_1)$. 
\label{lemma_increase_time}
\end{lemma}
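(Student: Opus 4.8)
The plan is to collapse the comparison to the first and last segments and then to a single convexity property of the rate function. Both policies have non-decreasing powers, and since $\widetilde{p}_1=p_1-\alpha>0$ they are positive throughout, so neither $X$ nor $Y$ has a break; hence the transmission time of each equals its finish time minus its start time. Write $d_1=s_2-s_1$, $d_N=s_{N+1}-s_N$ and, for $Y$, $\widetilde{d}_1=s_2-\widetilde{s}_1$, $\widetilde{d}_N=\widetilde{s}_{N+1}-s_N$ (the epochs $s_2,\dots,s_N$ are common). Because the block $[s_2,s_N]$ is identical under the two policies, the hypotheses ``equal total bits'' and $U(s_{N+1})=U(\widetilde{s}_{N+1})$ reduce to
\[
g(p_1)d_1+g(p_N)d_N=g(\widetilde{p}_1)\widetilde{d}_1+g(\widetilde{p}_N)\widetilde{d}_N,\qquad p_1d_1+p_Nd_N=\widetilde{p}_1\widetilde{d}_1+\widetilde{p}_N\widetilde{d}_N.
\]
Here I use that, in the setting where this lemma is applied, both $X$ and $Y$ exhaust the harvested energy at the common epoch $s_2$ (Lemma~\ref{lemma_energy_consumed}), so the energy spent in the first segment is common to both; call it $E_1$, and then the second identity forces the last-segment energies to agree as well; call it $E_N$.

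With $E_1$ and $E_N$ fixed the two finish-time assertions are immediate: $\widetilde{p}_1=p_1-\alpha<p_1$ gives $\widetilde{d}_1=E_1/\widetilde{p}_1>E_1/p_1=d_1$, hence $\widetilde{s}_1=s_2-\widetilde{d}_1<s_1$ and $\gamma=\widetilde{d}_1-d_1>0$; similarly $\widetilde{p}_N=p_N+\beta>p_N$ gives $\widetilde{d}_N=E_N/\widetilde{p}_N<d_N$, hence $\widetilde{s}_{N+1}=s_N+\widetilde{d}_N<s_{N+1}$ and $\delta=d_N-\widetilde{d}_N>0$. The common middle duration $s_N-s_2$ cancels from the two transmission times, so the remaining assertion $\widetilde{s}_{N+1}-\widetilde{s}_1>s_{N+1}-s_1$ is precisely $\widetilde{d}_1+\widetilde{d}_N>d_1+d_N$: the time gained in the first segment must beat the time lost in the last.

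To prove this, write $g(p)=p\,\phi(p)$ with $\phi(p):=g(p)/p$, substitute into the bit identity, and use $\widetilde{p}_i\widetilde{d}_i=p_id_i$ to obtain the scalar relation $E_1\bigl(\phi(\widetilde{p}_1)-\phi(p_1)\bigr)=E_N\bigl(\phi(p_N)-\phi(\widetilde{p}_N)\bigr)$, whose two sides are strictly positive since $\phi$ is strictly decreasing (P3). Rewriting the target as $E_1\bigl(\tfrac{1}{\widetilde{p}_1}-\tfrac{1}{p_1}\bigr)>E_N\bigl(\tfrac{1}{p_N}-\tfrac{1}{\widetilde{p}_N}\bigr)$, whose two sides are also strictly positive, and dividing it by the relation, the claim becomes
\[
\frac{\tfrac{1}{\widetilde{p}_1}-\tfrac{1}{p_1}}{\phi(\widetilde{p}_1)-\phi(p_1)}\;>\;\frac{\tfrac{1}{p_N}-\tfrac{1}{\widetilde{p}_N}}{\phi(p_N)-\phi(\widetilde{p}_N)}.
\]
Let $F$ be defined by $F(\phi(p))=1/p$ (well defined, since $\phi$ is a strictly decreasing bijection). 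The left side is the secant slope of $F$ over $[\phi(p_1),\phi(\widetilde{p}_1)]$ and the right side its secant slope over $[\phi(\widetilde{p}_N),\phi(p_N)]$; since $\widetilde{p}_1<p_1\le p_N<\widetilde{p}_N$ and $\phi$ is decreasing, the first interval lies entirely to the right of the second, so the inequality follows once $F$ is shown convex.

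Finally, $F$ is convex: differentiating $F(\phi(p))=1/p$ twice gives $F''(y)=\bigl(p\,\phi''(p)+2\phi'(p)\bigr)\big/\bigl(p^{3}\phi'(p)^{3}\bigr)$ at $y=\phi(p)$, and a short computation from $\phi(p)=g(p)/p$ yields the identity $p\,\phi''(p)+2\phi'(p)=g''(p)$. Since $g$ is concave (P2) the numerator is $\le 0$, while $p^{3}>0$ and $\phi'(p)^{3}<0$, so $F''\ge 0$; when $g$ is strictly concave (as for the $\log$ channel of interest) $F$ is strictly convex and the inequality above is strict. I expect the two delicate points to be (i) justifying that the comparison is over matched segment \emph{energies}, i.e.\ $\widetilde{p}_1\widetilde{d}_1=p_1d_1$, so that lowering $p_1$ genuinely lengthens the first burst, and (ii) the identity $p\,\phi''+2\phi'=g''$ that converts ``$g$ concave'' into ``$F$ convex''; everything else is sign-checking, and a purely geometric rendering of the last two paragraphs --- that the curve $p\mapsto\bigl(\phi(p),1/p\bigr)$ is convex (cf.\ Fig.~\ref{lemma4}) --- would work equally well.
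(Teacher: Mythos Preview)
Your argument is correct and shares the paper's reduction to the first and last segments with fixed energies $E_1,E_N$ (the paper encodes this in its explicit formulas $\gamma=\tfrac{\alpha}{p_1-\alpha}(s_2-s_1)$, $\delta=\tfrac{\beta}{p_N+\beta}(s_{N+1}-s_N)$, which are exactly $\widetilde d_1-d_1$ and $d_N-\widetilde d_N$ under $E_1,E_N$ fixed). Where you diverge is in the convexity step. The paper stays in the $p$-variable: it applies the mean value theorem to $\phi(p)=g(p)/p$ on $[p_1-\alpha,p_1]$ and $[p_N,p_N+\beta]$, and then invokes the \emph{convexity of $\phi$} (property~P3) to compare $\phi'(p_1')$ with $\phi'(p_N')$ and conclude $\gamma>\delta$. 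You instead reparametrize by $y=\phi(p)$ and reduce the claim to a secant--slope comparison for $F(y)=1/\phi^{-1}(y)$, whose convexity you obtain from the identity $p\phi''+2\phi'=g''$ and the \emph{concavity of $g$} (property~P2). So the two proofs rest on different hypotheses at the decisive step---P3 versus P2---and yours avoids both the MVT and the somewhat delicate product comparison $(p_N+\beta)p_N\,\phi'(p_N')$ versus $(p_1-\alpha)p_1\,\phi'(p_1')$ that the paper leaves implicit. Your explicit acknowledgment that the matched first-segment energy $E_1$ comes from Lemma~\ref{lemma_energy_consumed} (rather than from the bare lemma statement) is also more careful than the paper, which asserts $\mathcal{E}(\widetilde s_2)-\mathcal{E}(\widetilde s_1)=\mathcal{E}(s_2)-\mathcal{E}(s_1)$ without justification. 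The only residual subtlety, which you already flag, is strictness: your $F''\ge 0$ gives $\ge$ in the secant comparison, and you need either $g$ strictly concave or $p_1<p_N$ to get the strict inequality the lemma claims; the paper's proof has the same gap (it needs $\phi$ \emph{strictly} convex for its derivative comparison to be strict).
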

\begin{proof}
$X$ and $Y$ having used same amount of energy from $s_2$ to $s_{N+1}$, we can say that $\ETx(\widetilde{s}_2)-\ETx(\widetilde{s}_1)=\ETx(s_2)-\ETx(s_1)$, and $\ETx(\widetilde{s}_{N+1})-\ETx(\widetilde{s}_N)=\ETx(s_{N+1})-\ETx(s_N)$. Thus, we can define $\gamma=\dfrac{\alpha}{p_1-\alpha}(s_2-s_1)$ and $\delta =\dfrac{\beta}{p_N+\beta}(s_{N+1}-s_N)$. As $X$ and $Y$ transmit equal number of bits in total and are identical between time $s_2$ and $s_N$, we can just equate the number of bits transmitted by $X$ before $s_1$ and after $s_{N}$ (LHS of \eqref{Lemma4_eq1}) with that of $Y$ (RHS of \eqref{Lemma4_eq1}), i.e.,
\begin{align}
&g(p_N)(s_{N+1}-s_N)+g(p_1)(s_2-s_1)\nonumber
\\
&=g(\widetilde{p}_N)(\widetilde{s}_{N+1}-\widetilde{s}_N)+g(\widetilde{p}_1)(\widetilde{s}_2-\widetilde{s}_1)\label{Lemma4_eq1},
\end{align}
(Note that only one of the four variable $\alpha,\beta,\gamma,\delta$ can be independently chosen.)
Therefore, from \eqref{Lemma4_eq1},
\begin{align}
& (p_N+\beta)p_N\frac{\delta}{\beta}\left(\frac{g(p_N)}{p_N}-\frac{g(p_N+\beta)}{p_N+\beta}\right)\nonumber
\\
&=(p_1-\alpha)p_1\frac{\gamma}{\alpha}\left(\frac{g(p_1-\alpha)}{p_1-\alpha}-\frac{g(p_1)}{p_1}\right).\label{bits_equal}
\end{align}
As $g(p)/p$ is a continuous \& differentiable function, the mean value theorem implies that $\exists$ $p_N':p_N<p_N'<p_{N}+\beta$ and $p_1':p_1-\alpha<p_1'<p_{1}$ such that
\begin{align}
&\frac{d}{dp} \frac{g(p)}{p} \bigg{\vert}_{p=p_N'}=\frac{1}{\beta}\left(\frac{g(p_N+\beta)}{p_N+\beta}-\frac{g(p_N)}{p_N}\right)\text{ and }\label{diff_1}
\\
&\frac{d}{dp} \frac{g(p)}{p}\bigg{\vert}_{p=p_1'}=-\frac{1}{\alpha}\left(\frac{g(p_1-\alpha)}{p_1-\alpha}-\frac{g(p_1)}{p_1}\right)\label{diff_2}.
\end{align}
Substituting \eqref{diff_1} and \eqref{diff_2} in \eqref{bits_equal} we get,
\begin{align}
&\delta p_Np_N\frac{d}{dp} \frac{g(p)}{p}  \bigg{\vert}_{p=p_N'}
=\gamma p_1'p_1\frac{d}{dp} \frac{g(p)}{p} \bigg{\vert}_{p=p_1'}.\label{bits_equal1}
\end{align}
Now $\dfrac{d}{dp} \dfrac{g(p)}{p}$ is an increasing function of $p$ since $g(p)/p$ is convex. Hence, with $p_1' <p_1\le p_N<p_N'$, \begin{equation}
\frac{d}{dp} \frac{g(p)}{p}  \bigg{\vert}_{p=p_N'}>\frac{d}{dp} \frac{g(p)}{p} \bigg{\vert}_{p=p_1'}.
\end{equation}Thus, (\ref{bits_equal1}) implies $\gamma >\delta$. So, transmission time in the policy $Y$, $\left( s_{N+1}-s_1+\gamma-\delta\right)$, is greater than the transmission time in policy $X$ i.e. $(s_{N+1}-s_1)$.
\end{proof}

Lemma \ref{transmission_duration} uses Lemma \ref{lemma_increase_time} to prove that if the start time of the optimal policy is delayed beyond the first `time' arrival instant $r_0=0$ at the receiver, then the transmission time will be equal to  $\TRx_0$, i.e., it will exhaust all the transmission time available with the receiver.
\begin{lemma}  For an optimal policy $\{\bm{p},\bm{s},N\}$, either $s_{N+1} - s_1 = \TRx_0$ or $s_1 = r_0= 0$. 
\label{transmission_duration}
\end{lemma}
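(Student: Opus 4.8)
The plan is to argue by contradiction. Suppose an optimal policy $X=\{\bm{p},\bm{s},N\}$ has both $s_1>0$ and $s_{N+1}-s_1<\TRx_0$; that is, it starts strictly after the receiver arrival $r_0=0$ and still leaves slack in the receiver time budget \eqref{pb1_constraint_time}. By Lemma~\ref{lemma_nobreaks} we may take $X$ to transmit without breaks, so $p_i>0$ for all $i$ and the powers are non-decreasing, and by Lemma~\ref{lemma_energy_consumed} the energy used by each $s_i$ equals $\ETx(s_i^-)$. I want to feed $X$ into the perturbation of Lemma~\ref{lemma_increase_time}: shave a small amount off $p_1$ (pushing the start leftwards) and add the matching amount to $p_N$ (pulling the finish leftwards), keeping the transmitted bits at $B_0$. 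Lemma~\ref{lemma_increase_time} then produces a policy $Y$ with strictly smaller finish time; if $Y$ is feasible for \eqref{pb1}, this contradicts the optimality of $X$. (When $N=1$ the perturbation degenerates; the fix is indicated at the end.)

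Concretely, pick a small $\alpha>0$, set $\widetilde p_1=p_1-\alpha$, keep $Y$ identical to $X$ on $[s_2,s_N]$, and choose $\widetilde p_N=p_N+\beta$ with $\beta=\beta(\alpha)>0$ so that $Y$ still transmits $B_0$ bits and consumes the same total energy $\ETx(s_{N+1}^-)$. Lemma~\ref{lemma_increase_time} yields $\widetilde s_1=s_1-\gamma$ and $\widetilde s_{N+1}=s_{N+1}-\delta$ with $\gamma>\delta>0$, so the finish time drops to $s_{N+1}-\delta<s_{N+1}$, while the transmission time grows to $(s_{N+1}-s_1)+(\gamma-\delta)$. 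By continuity of the defining equations (and e.g.\ $\gamma=\tfrac{\alpha}{p_1-\alpha}(s_2-s_1)$), $\beta,\gamma,\delta\to0$ as $\alpha\to0$; this is what lets us exploit the assumed strict slack.

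The substantive step is verifying that $Y$ satisfies \eqref{pb1_constraint_bits}--\eqref{pb1_constraint_time} once $\alpha$ is small. Constraint \eqref{pb1_constraint_bits} holds by construction. For \eqref{pb1_constraint_time}, the transmission time of $Y$ exceeds that of $X$ only by $\gamma-\delta\to0$, and $s_{N+1}-s_1<\TRx_0$ strictly, so small $\alpha$ keeps it below $\TRx_0$. For the transmitter energy constraint \eqref{pb1_constraint_energy}, $Y$ and $X$ coincide off the first segment $[\widetilde s_1,s_2]$ and the last segment $[s_N,\widetilde s_{N+1}]$, so only there need $U_Y(\cdot)\le\ETx(\cdot)$ be checked. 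On $[\widetilde s_1,s_1]$ the energy used by $Y$ is at most $(p_1-\alpha)\gamma=\alpha(s_2-s_1)\to0$, and since the time origin is normalized so that the first transmitter arrival is at $0$ with $\ETx(0)=\ETx_0>0$, we have $\ETx(t)\ge\ETx_0$ there; hence for small $\alpha$ this sub-interval is fine and in particular $\widetilde s_1=s_1-\gamma\ge0$. On $(s_1,s_2)$ one has $U_Y(t)-U_X(t)=\alpha(s_2-t)$, which is dominated by the energy slack of $X$: by Lemma~\ref{lemma_energy_consumed} together with the convention that every arrival epoch $\tau_k$ with $U_X(\tau_k)=\ETx(\tau_k^-)$ already lies in $\bm{s}$, the slack $\ETx(t)-U_X(t)$ is bounded below by a positive constant away from $s_2$ and equals $p_1(s_2-t)$ near $s_2$, so it absorbs $\alpha(s_2-t)$ once $\alpha<p_1$ and $\alpha$ is smaller than the finitely many strictly positive slacks. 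The last segment is symmetric: $U_Y(t)-U_X(t)=\beta(t-s_N)\to0$ is dominated by the analogous positive slacks of $X$ on $(s_N,s_{N+1})$, and --- the point needing care --- since the arrival epochs are locally finite, for $\delta$ below the gap between $s_{N+1}$ and the last arrival preceding it there is no arrival in $(\widetilde s_{N+1},s_{N+1})$, so there $U_Y\equiv\ETx(s_{N+1}^-)=\ETx(t)$ and $Y$ never draws energy before it is harvested. Choosing $\alpha$ small enough to meet all these finitely many requirements, $Y$ is feasible with finish time $s_{N+1}-\delta<s_{N+1}$, contradicting the optimality of $X$; hence $s_1=r_0=0$ or $s_{N+1}-s_1=\TRx_0$.

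For the residual case $N=1$, Lemma~\ref{lemma_increase_time} does not apply, but one argues similarly: if $s_1>0$, insert a short segment of small power on $[s_1-\epsilon,s_1]$, drawing on the energy $\ETx(s_1)\ge\ETx_0>0$ already available, and lower the power on $[s_1,s_{N+1}]$ so that the total energy is unchanged; since $g(p)/p$ is decreasing (property P3), shifting energy from the higher-power segment into the low-power one strictly increases the bits delivered by time $s_{N+1}$, so the new policy completes $B_0$ bits strictly before $s_{N+1}$, and $\epsilon$ small keeps the transmission time below $\TRx_0$ --- again a contradiction. The main obstacle throughout is exactly this endpoint bookkeeping: one must certify that the perturbed policy neither pierces the harvested-energy staircase near its new, earlier start nor consumes energy ahead of an arrival near its new, earlier finish, both of which hold only after $\alpha$ (respectively $\epsilon$) is taken small relative to the finitely many strictly positive energy slacks of $X$ and the length of the final inter-arrival gap.
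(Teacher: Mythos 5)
Your proof is correct and follows essentially the same route as the paper's: assume $s_1>0$ and $s_{N+1}-s_1<\TRx_0$, apply the perturbation of Lemma~\ref{lemma_increase_time} with $\alpha$ chosen small enough that the perturbed powers remain feasible for the energy constraint and the increased transmission time $\gamma-\delta$ stays within the slack $\epsilon=\TRx_0-(s_{N+1}-s_1)$, and contradict optimality via the strictly smaller finish time $s_{N+1}-\delta$. Your extra bookkeeping of the energy slack on the first and last segments and the separate treatment of the degenerate $N=1$ case only make explicit details the paper leaves implicit.
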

\begin{proof} We use contradiction to prove the result.
Suppose the optimal policy say $X$, starts at $s_1>0$ and has transmission time $(s_{N+1}-s_1)< \TRx_0$. 
We will generate another policy which has finish time less than that of $X$, having transmission time squeezed in between $(s_{N+1}-s_1)$ and $\TRx_0$.
Consider policy $Y$ ($\{\bm{\widetilde{p}},\bm{\widetilde{s}},N\}$) in relation to $X$, as defined in Lemma \ref{lemma_increase_time}. As $\alpha$, $\beta$, $\delta$, $\gamma$ are all related (by constraints presented in Lemma \ref{lemma_increase_time}), choice of one variable (we consider $\alpha$) defines $Y$. By definition of $s_i$'s, $s_{2}$ is the first energy arrival which is on the boundary of energy constraint (\ref{pb1_constraint_energy}) i.e. $U(s_2)=\ETx(s_2^-)$ and $s_{N}$ is the last epoch satisfying $U(s_N)=\ETx(s_N^-)$. Hence, we can choose $\alpha>0$, such that $\widetilde{p}_1$ and $\widetilde{p}_N$ would be feasible with respect to energy constraint (\ref{pb1_constraint_energy}). Note that if $s_1=0$, then any value of $\alpha$ would have made $\widetilde{p}_1$ infeasible. 

From Lemma \ref{lemma_increase_time}, we know that the transmission time of policy $Y$ is more than that of $X$, i.e. $(\widetilde{s}_{N+1}-\widetilde{s}_1) > (s_{N+1} - s_{1})$. From the hypothesis $(s_{N+1}-s_1) < \TRx_0$. Therefore, let $(s_{N+1}-s_1)=\TRx_0-\epsilon$, with $\epsilon >0$. If the chosen value of $\alpha$ is such that $\gamma -\delta\le\epsilon$, then $\left( \widetilde{s}_{N+1}-\widetilde{s}_1\right)<\TRx_0$. If not, then we can further reduce $\alpha$ so that $\gamma -\delta\le\epsilon$ ($\alpha$,$\beta$,$\gamma$,$\delta$ being related by continuous functions).  Note that, when $\epsilon=0$, any choice of $\alpha$ would make $\left( \widetilde{s}_{N+1}-\widetilde{s}_1\right)>\TRx_0$. Hence, with this choice of $\alpha$, $(s_{N+1}-s_1)<\left( \widetilde{s}_{N+1}-\widetilde{s}_1\right)<\TRx_0$ holds and  policy $Y$ 
contradicts the optimality of policy $X$ (as finish time of $Y$ is less than finish time of $X$, $\widetilde{s}_{N+1}=s_{N+1}-\delta <s_{N+1}$ from Lemma \ref{lemma_increase_time}). Thus $s_{N+1}-s_1=\TRx_0$ if $s_1\neq 0$ in an optimal policy.
\end{proof}

Summarising the results of Lemmas \ref{lemma_increasing_power}-\ref{transmission_duration}, the optimal policy $\{\bm{p},\bm{s},N\}$ may change transmission powers only at energy arrival epochs i.e. $\forall\;i\in \{2,\cdots , N\},\ s_i=\tau_j$ for some $j$. At these epochs, it exhausts the total energy available i.e. $U(s_i)=\ETx(s_i^-)$. The transmission powers are also non-decreasing with time, and the optimal policy uses up the total `receiver time' allowed, if it does not start transmitting from $r_0=0$.

	Now we prove in Theorem \ref{th_algo1_1} that the structure described in Lemma \ref{lemma_increasing_power}-\ref{transmission_duration} including Lemma \ref{lemma_Q} (for ease of presentation Lemma \ref{lemma_Q} is postponed to section \ref{sec:OFF}) is not only necessary, but is indeed sufficient for optimality of a policy. 
\begin{theorem}
A policy $\{\bm{p},\bm{s},N\}$ is an optimal solution to Problem \eqref{pb1} if and only if, 
\label{th_algo1_1}
\begin{align}
&\sum_{i=1}^{i=N}g(p_i)(s_{i+1}-s_i)=B_0; 							
\label{claim1}
\\
&p_1\le p_2 \cdots \le p_N;
\label{claim3}  
\\
&\nonumber s_i=\tau_j  \ \ \ \ \ \ \ \ \ \ \ \ \ \ \ \text{ for some } j, i\in \{2,\cdots ,N\} \ \text{ and }
\\
& U(s_i)=\ETx(s_i^-), \ \ \ \ \ \forall i\in \{2,\cdots ,N+1\};
\label{claim4}
\\
&\nonumber s_{N+1}-s_1=\TRx_0,  \ \ \ 						\text{ if } s_1>0 \text{ or }
\\
& s_{N+1}\le \TRx_0,			\ \ \ \ \ \ \ \ \				\text{ if } s_1=0;
\label{claim2}
\\
&\exists s_j:s_j\in \bm{s} \text{ and } s_j=\tau_q,
\label{claim5}
\end{align}
where $\tau_q$ is defined in INIT\_POLICY of section \ref{sec:OFF}.
\end{theorem}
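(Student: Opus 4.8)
The statement is an ``if and only if,'' and the plan is to dispatch the ``only if'' direction by bookkeeping and to reduce the ``if'' direction to a single uniqueness claim. For necessity, I would take any optimal policy, invoke Lemma~\ref{lemma_nobreaks} to replace it by an optimal policy $X=\{\bm{p},\bm{s},N\}$ with no breaks in transmission, and then read off the five conditions: \eqref{claim1} is feasibility \eqref{pb1_constraint_bits}; \eqref{claim3} is Lemma~\ref{lemma_increasing_power} together with $p_i>0$ (the no-break property); \eqref{claim4} is Lemma~\ref{lemma_energy_consumed}; \eqref{claim2} is Lemma~\ref{transmission_duration} combined with the receiver constraint \eqref{pb1_constraint_time}; and \eqref{claim5} is Lemma~\ref{lemma_Q}. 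In particular, at least one optimal policy satisfying \eqref{claim1}--\eqref{claim5} exists.

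For sufficiency I would argue that \emph{at most one} policy satisfies \eqref{claim1}--\eqref{claim5}. Granting this, the optimal policy produced above is the unique conditions-satisfying policy, so any policy meeting the conditions equals it and is therefore optimal; this also yields the uniqueness of the no-break optimal solution claimed after Lemma~\ref{lemma_nobreaks}. So let $X$ and $Y$ both satisfy \eqref{claim1}--\eqref{claim5}; the task is to show $X=Y$.

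The uniqueness proof splits by start time. If $X$ and $Y$ both start at $0$, then \eqref{claim3}--\eqref{claim4} say that on its transmission interval each energy-use curve $U(\cdot)$ is the non-decreasing piecewise-linear curve lying weakly below $\ETx(\cdot)$ and touching it at every breakpoint; such a curve is uniquely determined once its right endpoint is fixed, and $B(t)$ is strictly increasing along it, so \eqref{claim1} fixes the endpoint and $X=Y$. If both start at some $s_1>0$, then by \eqref{claim2} each has transmission duration exactly $\TRx_0$ (duration $=$ finish time $-$ start time, since there are no breaks), hence is completely determined by $s_1$: the window $[s_1,s_1+\TRx_0]$ is fixed, and the same tight-string construction then determines the profile and the transmitted bits. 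The key point is that the transmitted bits are a continuous and \emph{strictly increasing} function of $s_1$ wherever not all energy has arrived by time $s_1$: shifting the window rightward weakly increases the available energy $\ETx((s_1+\TRx_0)^-)$ and moves each arrival epoch to an earlier position relative to the window, and by concavity of $g$---an exchange argument in the spirit of Lemma~\ref{lemma_increase_time}, using property P3---this strictly increases the deliverable bits, while condition~\eqref{claim5}/Lemma~\ref{lemma_Q} excludes the degenerate sub-case where $B_0$ equals the saturated maximum by pinning a breakpoint to $\tau_q$. Hence \eqref{claim1} singles out a unique $s_1$ and $X=Y$. Finally the mixed case is impossible: if $X$ starts at $0$ then \eqref{claim2} forces its finish time to be $\le\TRx_0$, so the tight-string profile over the full window $[0,\TRx_0]$ already delivers at least $B_0$ bits, and then the strict monotonicity above shows that any policy starting at $s_1>0$ over the window $[s_1,s_1+\TRx_0]$ delivers strictly more than $B_0$, violating \eqref{claim1}.

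The step I expect to be the main obstacle is the strict monotonicity of the delivered bit count in the start time $s_1$ for the delayed-start regime, together with the clean treatment of the saturated case through \eqref{claim5}. This is essentially the only place where property P3 (convexity and monotone decrease of $g(p)/p$) is used in an essential way, and I expect it to need a mean-value/exchange computation of the type in the proof of Lemma~\ref{lemma_increase_time}, carried out carefully across the arrival epochs at which the breakpoint pattern of the tight-string profile changes. Everything else---uniqueness of the tight-string profile and the start-at-$0$ versus delayed-start dichotomy---is routine once this monotonicity is available.
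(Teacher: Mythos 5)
Your necessity argument is exactly the paper's (read each condition off Lemmas \ref{lemma_increasing_power}--\ref{transmission_duration} and \ref{lemma_Q}), and your sufficiency skeleton --- uniqueness of the conditions-satisfying policy via a trichotomy on start times, with \eqref{claim3}--\eqref{claim4} forcing a ``taut-string'' profile on a fixed window --- parallels the paper's Cases 1--3. The same-start case and the start-at-$0$ case go through essentially as you say. But the load-bearing step for the delayed-start and mixed cases, namely that the bits deliverable by the taut string over the window $[s_1,s_1+\TRx_0]$ are \emph{strictly} increasing in $s_1$ ``wherever not all energy has arrived by time $s_1$,'' is false as stated. Take two arrivals, $\ETx_0$ at $\tau_0=0$ and $\ETx_1$ at $\tau_1>0$, and any $s_1$ with $\tau_1-\frac{\ETx_0\TRx_0}{\ETx_0+\ETx_1}\le s_1$: the taut string over $[s_1,s_1+\TRx_0]$ is the single constant power $(\ETx_0+\ETx_1)/\TRx_0$, so the delivered bit count $\TRx_0\, g\!\left(\frac{\ETx_0+\ETx_1}{\TRx_0}\right)$ is locally constant in $s_1$ even though energy is still arriving inside the window. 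If $B_0$ equals this value, a whole interval of start times yields feasible policies satisfying \eqref{claim1}--\eqref{claim4}, of which only the earliest is optimal. So the entire burden of sufficiency in this regime falls on \eqref{claim5}, which you invoke only in passing as handling ``the degenerate sub-case.'' That is precisely where the paper has to work hardest: its Case 3 runs a continuous deformation (Lemma \ref{lemma_increase_time} applied repeatedly across breakpoint-pattern changes) anchored by the shared epoch $\tau_q$ from Lemma \ref{lemma_Q}.

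Moreover, even granting that \eqref{claim5} is the right patch, your phrasing ``pinning a breakpoint to $\tau_q$'' does not suffice in the literal form of \eqref{claim5} ($\exists s_j\in\bm s$ with $s_j=\tau_q$): in the example above with $\tau_q=\tau_1$, the delayed policy with $s_1=\tau_1$ has $\tau_q\in\bm s$ as its \emph{start} epoch, where $U(\tau_q)=0\ne\ETx(\tau_q^-)$, yet it is not optimal. One must use the exhaustion form of the condition, $U(\tau_q)=\ETx(\tau_q^-)$ (which is what Lemma \ref{lemma_Q} actually asserts and what, via \eqref{claim4}, forces $\tau_q$ to be an epoch with index $\ge 2$), to exclude such spurious delayed starts. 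In short: your plan is structurally faithful to the paper, but the one step you yourself flag as the main obstacle is not merely unproven --- the monotonicity claim it rests on needs to be weakened to non-strict, and the flat stretches then have to be eliminated by a genuine argument through $\tau_q$ of the kind the paper supplies in Case 3. As it stands, the sufficiency direction is incomplete.
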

\begin{proof}

The proof consists of establishing both necessary and sufficiency conditions. The necessity of  (\ref{claim1}) follows as it is a constraint to the Problem \eqref{pb1},  \eqref{claim3} follows from Lemma \ref{lemma_increasing_power}, \ref{lemma_nobreaks}, \eqref{claim4} follows from Lemma \ref{lemma_energy_consumed}, \eqref{claim2} follows from Lemma \ref{transmission_duration}, and \eqref{claim5} follows from Lemma \ref{lemma_Q}.

Now, we prove the sufficiency of the structure \eqref{claim1}-\eqref{claim5}. Let a policy $X$, $\{\bm{p},\bm{s},N\}$ follow  structure \eqref{claim1}-\eqref{claim5}. We need to show that this policy is optimal, which we do via contradiction. Suppose $X$ is not optimal. Let there exists another policy $Y$, $\{\bm{p'},\bm{s'},N'\}$ which is optimal. Since $Y$ abides by Lemma \ref{lemma_increasing_power}-\ref{lemma_Q} on account of its optimality, $Y$ also satisfies structure \eqref{claim1}-\eqref{claim5}. (Now both $X$ and $Y$ satisfy structure \eqref{claim1}-\eqref{claim5} but $Y$ is optimal i.e. it finishes before $X$. This would would mean that there \textit{possibly} exists some more conditions which are followed by $Y$ but not $X$). We need to show that such a optimal policy $Y$ (different from $X$) cannot exist or is infeasible, i.e., both $X$ and $Y$ cannot simultaneously  satisfy \eqref{claim1}-\eqref{claim5} and be different.\footnote{Note that Lemma \ref{lemma_nobreaks} suggests that optimal solution to Problem \eqref{pb1} may not be unique in general, but Theorem \ref{th_algo1_1} shows that the optimal solution \textit{without breaks} in transmission is indeed unique.} 

The following cases arise depending on whether $s_1'>s_1$, $s_1'=s_1$ or $s_1'<s_1$.




\textit{Case1}: If $s_1'>s_1\ge 0$, then by \eqref{claim2}, $s_{N'+1}'=s_1'+\TRx_0>s_1+\TRx_0\ge s_{N+1}$. So policy $Y$ finishes after time $s_{N+1}$ and hence cannot be optimal. 

\textit{Case2}: Suppose $s_1'=s_1$. Let $s_i'$ be the first epoch for which $p_i'\ne p_i$ for some $i \in \{1,2,\cdots ,N\}$. 


Suppose $p_i'>p_i$. If, in policy $Y$, transmission continues after $s_{i+1}$ i.e. $s_{N'+1}'>s_{i+1}$, then the amount of energy used by $Y$ in interval $[s_{i},s_{i+1}]$ can be lower bounded by $p_i'(s_{i+1}-s_i)$, which follows from \eqref{claim3}. Since $p_i'>p_i$, $p_i'(s_{i+1}-s_i)$ is more than $p_i(s_{i+1}-s_i)$, which is the energy used by policy $X$. But by structure \eqref{claim4}, $X$ uses all energy available at both $s_i$ and $s_{i+1}$. So, the maximum energy available in $[s_{i},s_{i+1}]$ is $p_i(s_{i+1}-s_i)$. Therefore, $Y$ uses more than available energy in $[s_{i},s_{i+1}]$ and is not feasible with respect to the energy constraint. 

If $s_{N'+1}'\le s_{i+1}$, then it can be easily verified by concavity of function $g(p)$ that $Y$ transmits strictly less number of bits in interval $[s_i,s_{N'+1}]$ than $X$ in interval $[s_{i},s_{i+1}]$. Both policies being same till $s_i$, we conclude that $Y$ transmits less than $B_0$ bits by its finish time $s_{N'+1}$, and thus it is not feasible with respect to \eqref{claim1}.

When $p_i>p_i'$, symmetrical arguments follow.

\textit{Case3}: This case argues the infeasibility of $Y$ when $0\le s_1'<s_1$. Since $s_1>0$, transmission time of $X$ is equal to $\TRx_0$ from \eqref{claim2}. The idea of the proof is to show that if an optimal policy $Y$ starts its transmission early and finishes earlier than policy $X$, it always takes more transmission time than $X$ ($=\TRx_0 $), which is going to violate the time constraint (\ref{pb1_constraint_time}). First, we establish that $Y$ must be same as policy $X$ from epoch $s_2$ to an epoch $s_j$ such that $s_j=\displaystyle\max_{s_i<s_{N'+1}'} s_i$. Let $s_k'=\displaystyle\max_{s_i'<s_2}s_i'$, and $Y$ continue from $s_k'$ with constant power $p_k'$ till $s_{k+1}'$. Clearly $s_{k+1}'\ge s_2$ from definition of $s_{k}'$. 

Suppose $s_{k+1}'>s_2$. Since transmission with a constant power $p_k'$ from $s_k'$ to $s_{k+1}'$ is feasible,  transmission with constant power $\dfrac{\ETx(s_2^-)-\ETx(s_{k}'^-)}{(s_2-s_{k}')}$ from $s_{k}'$ to $s_2$, and $\dfrac{\ETx(s_{k+1}'^-)-\ETx(s_{2}^-)}{(s_{k+1}'-s_2)}$ from $s_2$ to $s_{k+1}'$ is also feasible for any policy (Refer to Fig. \ref{Theorem1_figure} (a)) and hence,
\begin{equation}
\dfrac{\ETx(s_{k+1}'^-)-\ETx(s_{2}^-)}{(s_{k+1}'-s_2)}<\dfrac{\ETx(s_2^-)-\ETx(s_{k}'^-)}{(s_2-s_{k}')}.
\label{th1_eq1}
\end{equation} 
Transmission with power $\dfrac{\ETx(s_2^-)-\ETx(s_{k}'^-)}{(s_2-s_{k}')}$ exhausts all available energy at epochs $s_{k}'$ and $s_2$. Therefore, power $p_1=\dfrac{\ETx(s_2^-)}{(s_2-s_{1})}$ (in policy $X$) from $s_1$ to $s_2$ must be greater than $\dfrac{\ETx(s_2^-)-\ETx(s_{k}'^-)}{(s_2-s_{k}')}$. If not, then transmission with power $p_1$ in $X$ would become infeasible. Thus, from \eqref{th1_eq1}, \begin{equation} 
\dfrac{\ETx(s_{k+1}'^-)-\ETx(s_{2}^-)}{(s_{k+1}'-s_2)}<p_1.
\label{th1_eq2}
\end{equation}

Now, transmission with power $p_1$ from $s_1$ to $s_2$, and transmission with power $\dfrac{\ETx(s_{k+1}'^-)-\ETx(s_{2}^-)}{(s_{k+1}'-s_2)}$ from $s_2$ to $s_{k+1}'$ are both feasible for any policy. This combined with \eqref{th1_eq2} would imply transmission with a constant power $\dfrac{\ETx(s_{k+1}'^-)}{(s_{k+1}'-s_1)} $ from $s_1$ to $s_{k+1}'$ is feasible and hence,
\begin{equation}
\dfrac{\ETx(s_{k+1}'^-)}{(s_{k+1}'-s_1)}<p_1.
\label{th1_eq3}
\end{equation}

%


Since finish time of $X$, $s_{N+1}=s_1+\TRx_0>s_1'+\TRx_0\ge s'_{N'+1}\ge s'_{k+1}$, $X$ transmits in interval $[s_1,s'_{k+1}]$ and uses atleast $p_1(s'_{k+1}-s_1)$ energy in this interval, which follows from \eqref{claim3}. But, the maximum energy available for transmission  in interval $[s_1,s'_{k+1}]$ is $\ETx(s_{k+1}'^-)$. From	 \eqref{th1_eq3}, we can infer that $X$ uses more than this available energy in $[s_1,s'_{k+1}]$, and therefore, we reach a contradiction over feasibility of $X$. So, our hypothesis, $s_{k+1}'>s_2$, is incorrect. Since, $s_{k+1}'\ge s_2$, we can conclude that $s_{k+1}'=s_2$. 

Now, let $p_{k+1}'\neq p_2$ and $s_j>s_3$. From the definition of $p_2$, $p_{k+1}>p_2$. Then the amount of energy used by policy $Y$ between $s_2$ and $s_3$ is more than what is available. So $p_{k+1}'=p_2$ ($s'_{k+2}=s_3$) and similarly, we can show that $p'_{k+2}=p_3\cdots$ ($ s'_{k+3}=s_4\cdots$) till epoch $s_j$. This completes the proof that $Y$ is same as policy $X$ from epoch $s_2$ to $s_j$.

By structure (\ref{claim5}) we can be sure that there exists atleast one epoch $s_i=\tau_q$ which belongs to $\bm{s}$ as well as $\bm{s'}$. So, $j\ge 2$. 

\begin{figure}[htb]
\centering
\centerline{\includegraphics[width=8cm]{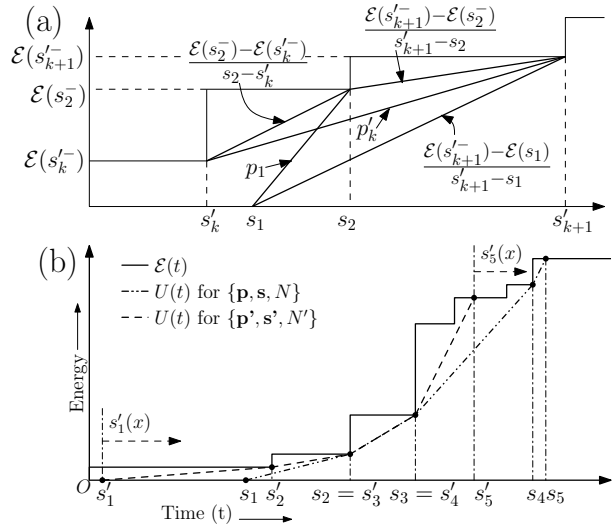}}
\caption{Energy curves at transmitter explaining \textit{Case3} in proof of Theorem \ref{th_algo1_1}}
\label{Theorem1_figure}
\end{figure}

Continuing with \textit{Case3}, consider the following process which creates feasible policies from policy $\{\bm{p'},\bm{s'},N'\}$ as shown in Fig. \ref{Theorem1_figure} (b). We define two pivots $l$ and $r$. Initially we set $l=s_2'$ and $r=s_{N'}'$. The transmission power right before $l$ is $u$ ($u=p_1'$ initially) and right after $r$ is $v$ ($v=p_{N'}'$ initially). Keeping the policy $\{\bm{p'},\bm{s'},N'\}$ same from $l$ to $r$, we increase $u$ by a small amount to $u+du$ and decrease $v$ by a small amount to $v-dv$ such that the number of bits transmitted (i.e. $B_0$) remains same under this transformation. This would lead to change in the start time $s_1'$ and finish time $s_{N'+1}'$. Let the starting time of transmission $s_1'$ change to $s_1'+x$ and the finish time $s_{N'+1}'$ change to $s_{N'+1}'+y$ for some $x,y>0$ (note that $y$ is dependent on $x$). We denote such a policy by vectors $\{\bm{p'(x)},\bm{s'(x)},N'(x)\}$. 

Following Lemma \ref{lemma_increase_time}, we can conclude that $(s_{N'(x)+1}'(x)-s_1'(x))<(s_{N'+1}'-s_1')$. We continue increasing $x$ till either $u=p_2'(x)$ (in which case we change $l=s_3'(x)$) or $v=p_{N'-1}'(x)$ (where we change $r=s_{N'-1}'(x)$) or $s_{N'(x)+1}'(x)$ hits an epoch, say $\tau_j$ (we change $r=\tau_j$, $v\rightarrow\infty$ in this case). After this, we again start increasing $x$ with changed definitions of $l,r,u,v$. 
We continue this process till $x=s_1-s_1'$  or $u$ becomes equal to $v$. 
Note that the value of $x$ for which $u$ becomes equal to $v$, would be greater than $(s_1-s_1')$, since policy $\{\bm{p'(x)},\bm{s'(x)},N'(x)\}$ shares at least one epoch with policy $X$, by arguments of previous paragraph. By maintaining these rules on $l,r,u,v$, we ensure that policy $\{\bm{p'(x)},\bm{s'(x)},N'(x)\}$ abides by structure \eqref{claim1}-\eqref{claim4}, \eqref{claim5} and is feasible with energy constraint. Since $\left( s_{N'(x)+1}'(x)-s_1'(x)\right)$ is decreasing with $x$, and $\left( s_{N'(0)+1}'(0)-s_1'(0)\right)=s_{N+1}'-s_1\le \TRx_0$, the policy $\{\bm{p'(x)},\bm{s'(x)},N'(x)\}$ is also feasible with receiver  time constraint. At $x=s_1-s_1'$, we reach a policy such that $s_1'(x)=s_1$. For $x=s_1-s_1'$, if $s_{N'(x)+1}'(x)\ge s_{N+1}$ then $s_{N'+1}'-s_1'>s_{N'(x)+1}'(x)-s_1'(x)\ge s_{N+1}-s_1=\TRx_0$ and policy $Y$ is infeasible with receiver  time constraint. If $s_{N'(x)+1}'(x)< s_{N+1}$, then we can follow the arguments presented in \textit{Case2} to show that policy $\{\bm{p'(x)},\bm{s'(x)},N'(x)\}$ (at $x=s_1-s_1'$) is infeasible, which in turn shows the infeasibility of policy $Y$.
\end{proof}

\section{Optimal Offline Algorithm }
\label{sec:OFF}
In this section, we propose an offline algorithm $\mathsf{OFF}$ for Problem \eqref{pb1}, and show that it satisfies the sufficiency conditions of Theorem \ref{th_algo1_1}.
Algorithm $\mathsf{OFF}$ first finds an initial feasible solution via INIT\_POLICY, and then iteratively improves upon it via PULL\_BACK. Finally, QUIT produces the output.

\begin{algorithm}
\caption{OFF}
\label{Algorithm1}
\begin{algorithmic}[1]
\State \textbf{Input}: $\ETx(t),B_0,\TRx_0$.
	\State $\{\bm{p},\bm{s},N\}\gets$ INIT\_POLICY($\ETx(t)$,$B_0$,$\TRx_0$).
	\State $X\gets$ PULL\_BACK($\{\bm{p},\bm{s},N\}$).
	\State $\{\bm{p},\bm{s},N\}\gets$ QUIT($X$).
	\State \Return 	$\{\bm{p},\bm{s},N\}$. 
\end{algorithmic}
\end{algorithm}

\subsection{INIT\_POLICY} 
\textit{Idea: } Initially, we find a constant power policy that is feasible and starts as early as possible. Also, we try to make it satisfy most of the sufficiency conditions of Theorem \ref{th_algo1_1}.

\textit{Step1:} Identify the first energy arrival instant $\tau_n$, so that using $\ETx(\tau_n)$ energy and $\TRx_0$ time, $B_0$ or more bits can be transmitted with a constant power (say $p_c$), i.e. $\TRx_0g\left(\dfrac{\ETx(\tau_n)}{\TRx_0}\right)\ge B_0$. Then solve for $\widetilde{\TRx}_0$,
\begin{small}
\begin{equation}
\widetilde{\TRx}_0\, g\left(\dfrac{\ETx(\tau_n)}{\widetilde{\TRx}_0}\right)= B_0,\ p_c = \dfrac{\ETx({\tau_n})}{\widetilde{\TRx}_0}.
\label{INIT_POLICY_time}
\end{equation}
\end{small}
\begin{figure}
\centering
  \centerline{\includegraphics[width=8cm]{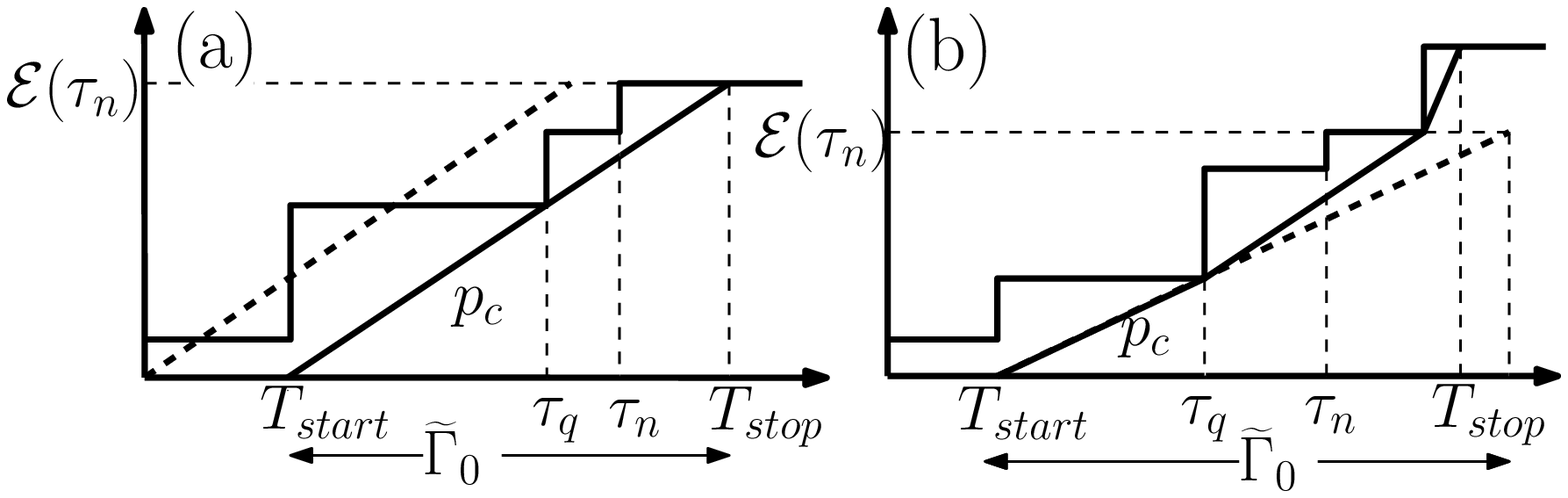}}
\caption{Figure showing point $\tau_q$.}\label{straight}
\end{figure} 
\textit{Step2:} Find the earliest time $T_{start}$, such that transmission with power $p_c$ from $T_{start}$ for $\widetilde{\TRx}_0$ time is feasible with energy constraint \eqref{pb1_constraint_energy}. Set $T_{stop} = T_{start} + \widetilde{\TRx}_0$. Let $\tau_q$ be the \textit{first epoch}, where $U(\tau_q) = \ETx(\tau_q^-)$ (Fig. \ref{straight}).
Lemma \ref{lemma_Q} shows that point $\tau_q$ thus found leads to a `good' initial solution as, in every optimal solution total harvested energy till $\tau_q$ is used up at $\tau_q$. This in-turn implies that $\tau_q\in \bm{s}$, if $\{\bm{p},\bm{s},N\}$ is the optimal policy.

If $U(T_{stop}) = \ETx(T_{stop}^-)$ as shown in Fig. \ref{straight}(a), then terminate INIT\_POLICY with constant power policy $p_c$. 

Otherwise, if $U(T_{stop}) < \ETx(T_{stop}^-)$, then 
modify the transmission after $\tau_q$ as follows. Set $\widetilde{B}_0 = (T_{stop} - \tau_q)g(p_c)$,  which denotes the number of bits left to be sent after time $\tau_q$. Then apply Algorithm 1 of \cite{UlukusEH2011b} from time $\tau_q$ to transmit $\widetilde{B}_0$ bits in as minimum time as possible without considering the receiver {\it on} time constraint. 
Update $T_{stop}$, to where this policy ends. So, $U(T_{stop}) = \ETx(T_{stop}^-)$ from \cite{UlukusEH2011b}. Since Algorithm 1 \cite{UlukusEH2011b} is optimal, it takes minimum time ($=T_{stop}-\tau_q$) to transmit $\widetilde{B}_0$ starting at time $\tau_q$. 
However, using power $p_c$ to transmit $\widetilde{B}_0$ takes $(T_{start}+\widetilde{\TRx}_0 - \tau_q)$ time.
Hence, $T_{stop}\le (T_{start}+\widetilde{\TRx}_0)$. 
As $\widetilde{\TRx}_0\le \TRx_0$ from \eqref{INIT_POLICY_time}, $(T_{stop}- T_{start})\le \TRx_0$. This shows that solution thus found using Algorithm 1 \cite{UlukusEH2011b}, is indeed feasible with receiver time constraint \eqref{pb1_constraint_time}. Now, output of INIT\_POLICY is a policy that transmits at power $p_c$ from $T_{start}$ to $\tau_q$, and after $\tau_q$ uses Algorithm 1 of \cite{UlukusEH2011b}.

\begin{figure}
\centering
  \centerline{\includegraphics[width=8cm]{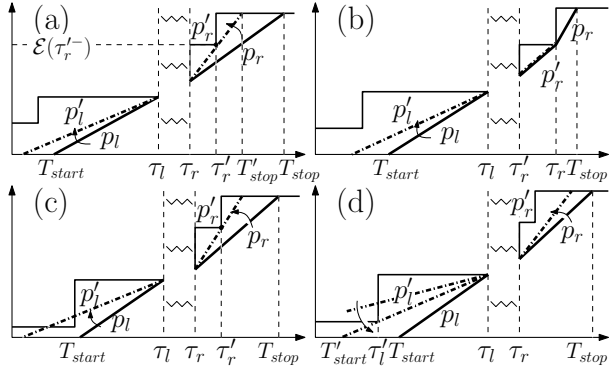}}
\caption{Figures showing possible configurations in any iteration of the PULL\_BACK. The solid line represents the transmission policy in the previous iteration and dash dotted lines are for the current iteration.}\label{figure_Algorithm1}
\end{figure}

\begin{lemma}
In every optimal solution, at energy arrival epoch $\tau_q$ defined in INIT\_POLICY, $U(\tau_q)=\ETx(\tau_q^-)$.
\label{lemma_Q}
\end{lemma}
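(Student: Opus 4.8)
The plan is to argue by contradiction, leveraging the structural results of Lemmas \ref{lemma_increasing_power}--\ref{transmission_duration} together with the explicit construction of $\tau_q$ in INIT\_POLICY. Recall that $\tau_q$ is the first epoch at which the INIT\_POLICY solution (a constant-power policy at level $p_c$ starting at $T_{start}$, possibly modified after $\tau_q$) touches the energy harvesting curve, i.e. $U(\tau_q)=\ETx(\tau_q^-)$ while $U(t)<\ETx(t^-)$ for all $t<\tau_q$, $t\ge T_{start}$. The key feature of this point is that on $[T_{start},\tau_q]$ the energy curve $\ETx(\cdot)$ stays strictly above the line of slope $p_c$ emanating from $(T_{start},0)$, and at $\tau_q$ the two meet; moreover by the minimality of $T_{start}$, the line of slope $p_c$ cannot be shifted any earlier without violating feasibility, which pins down the relationship between $\ETx(\tau_q^-)$ and the energy available before $\tau_q$.

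First I would fix an arbitrary optimal policy $\{\bm{p},\bm{s},N\}$; by Lemmas \ref{lemma_increasing_power}--\ref{lemma_nobreaks} we may take it to have non-decreasing, strictly positive powers with no breaks, and by Lemma \ref{transmission_duration} either it starts at $0$ or uses exactly $\TRx_0$ receiver time. Suppose for contradiction that $U(\tau_q)<\ETx(\tau_q^-)$ in this optimal policy. I would then split into cases according to the position of the optimal start time $s_1$ relative to $T_{start}$, and according to whether the optimal transmission rate on the relevant interval exceeds or falls short of $p_c$. The idea is that if the optimal policy does not touch the energy curve at $\tau_q$, then on $[s_1,\tau_q]$ it has consumed strictly less than $\ETx(\tau_q^-)$ energy, so its average power there is at most that of a straight line from $(s_1,0)$ to $(\tau_q,\ETx(\tau_q^-))$; comparing this with $p_c$ and using the definition of $p_c$ as the constant power that transmits exactly $B_0$ bits in time $\widetilde{\TRx}_0\le\TRx_0$ using $\ETx(\tau_n)$ energy, one derives that the optimal policy is either wasting receiver time (when $s_1>0$, contradicting Lemma \ref{transmission_duration} which forces transmission time $=\TRx_0$), or could have started strictly earlier with a feasible constant-power burst that already transmits $B_0$ bits (contradicting optimality of the finish time). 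Concavity of $g$ and convexity/monotonicity of $g(p)/p$ (properties P2, P3) are what make the ``start earlier and flatten the profile'' exchange argument strictly improve the finish time, exactly as in the proof of Lemma \ref{transmission_duration}; in fact I expect the cleanest route is to reduce directly to Lemma \ref{lemma_increase_time}: build a competitor that agrees with the optimal policy on the interior and perturbs $p_1$ downward, contradicting either optimality or the receiver-time feasibility.

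The main obstacle I anticipate is handling the sub-case where the optimal policy starts at $s_1=0$ (so Lemma \ref{transmission_duration} gives no slack on receiver time) and has $U(\tau_q)<\ETx(\tau_q^-)$: here one cannot directly invoke ``wasted receiver time,'' and instead must show that the excess energy available at $\tau_q$ beyond what the optimal policy used forces a contradiction with the minimality of $T_{start}$ in INIT\_POLICY --- essentially, that if an optimal policy can afford to leave headroom at $\tau_q$, then a feasible constant-power policy at a rate $\ge p_c$ could have been run from a start time $\le T_{start}$, contradicting how $T_{start}$ and $\tau_q$ were selected (or producing a strictly earlier finish, contradicting optimality). Pinning down this energy bookkeeping at $\tau_q$ --- carefully distinguishing $\ETx(\tau_q^-)$ from $\ETx(\tau_q)$ and tracking which inequalities are strict --- is where the real care is needed; the rest is a routine exchange argument in the style already used for Lemmas \ref{transmission_duration} and \ref{th_algo1_1} Case 3.
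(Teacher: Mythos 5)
There is a genuine gap. Your high-level shape (contradiction, comparison with the constant-power policy $p_c$, case analysis on $s_1$ versus $T_{start}$) matches the paper, but the deductive step that actually produces the contradiction is missing, and the route you single out as cleanest would not work. The paper's argument is: first bracket every optimal policy between $R=T_{start}$ and $S=T_{stop}$ (start at or before $R$ by Lemma \ref{transmission_duration} since $S-R=\widetilde{\TRx}_0\le\TRx_0$; end at or before $S$ since $p_c$ itself finishes at $S$). Then, if $U(\tau_q)<\ETx(\tau_q^-)$, Lemma \ref{lemma_energy_consumed} says the power does not change at $\tau_q$, so some level $p_{j-1}$ on $[s_{j-1},s_j]$ straddles $\tau_q$ with $U(s_j)=\ETx(s_j^-)$ and $s_j<S$. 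Because $p_c$ exhausts exactly $\ETx(\tau_q^-)$ at $\tau_q$ and remains feasible up to $s_j$, while $X$ must burn $\ETx(s_j^-)-U(\tau_q)>\ETx(s_j^-)-\ETx(\tau_q^-)$ on $[\tau_q,s_j]$, one gets $p_c<p_{j-1}$; this forces an energy-feasibility violation for $X$ on $[R,\tau_q]$ (if $s_{j-1}\le R$), for $p_c$ on $[s_{j-1},s_j]$ (if $s_{j-1}>R$), or contradicts the start-time claim (if $j-1=1$). The crux is the energy exhaustion at the \emph{next} power-change epoch $s_j$ after $\tau_q$, which your sketch never invokes.

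Two specific problems with what you do propose. First, the observation that the average power on $[s_1,\tau_q]$ lies below the chord to $(\tau_q,\ETx(\tau_q^-))$ is not contradictory by itself --- an optimal policy can be slow early and fast late and still satisfy Lemma \ref{transmission_duration} exactly; interior energy slack at $\tau_q$ does not imply wasted receiver time. Second, reducing to Lemma \ref{lemma_increase_time} cannot close the argument: the perturbation that lowers $p_1$ and raises $p_N$ is already infeasible for \emph{every} optimal policy (if $s_1>0$ the transmission time is pinned at $\TRx_0$, and if $s_1=0$ one cannot start earlier), independently of whether $U(\tau_q)=\ETx(\tau_q^-)$. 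That perturbation exploits slack at the endpoints, not slack at an interior epoch, so it cannot distinguish a policy violating the lemma from one satisfying it. You would need to replace it with the $p_c<p_{j-1}$ comparison and the two-sided energy bookkeeping sketched above.
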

\begin{proof}
We shall prove this by contradiction. For simplicity of notation let $R=T_{start}$ and $S=T_{stop}$ with $T_{start}$, $T_{stop}$ being the start and finish time of constant power policy $p_c$ defined in INIT\_POLICY. First, we make the following claims:

\textbf{Claim 1:} Every optimal transmission policy begins transmission at or before time $R$.

Since, $S-R=\widetilde{\TRx}_0\le \TRx_0$, by Lemma \ref{transmission_duration}, if a transmission policy has to finish before $S$, it has to start before time $\max(S-\TRx_0,0) \le \max(R,0)=R$. 


\textbf{Claim 2:} Every optimal transmission policy ends transmission at or before time $S$.

If it does not, then constant power policy $p_c$ finishing at $S$ will contradict its optimality.

Suppose we have an optimal transmission policy, say $X$,$\{\bm{p},\bm{s},N\}$, that does not exhaust all its energy at time $\tau_q$ i.e. $U(\tau_q)<\ETx(\tau_q^-)$. Then, by Lemma \ref{lemma_energy_consumed}, it does not change its transmission power at $\tau_q$. Let the transmission power of $X$ be $p_{j-1}$ at $\tau_q$ and $p_{j-1}$ starts from $s_{j-1}$ and goes till $s_j$. Now, $s_j<S$ by \textit{Claim 2}. Further, power $p_c$ exhausts all energy by $\tau_q$. So,
\begin{align}
&p_c(\tau_q-R)=\ETx(\tau_q^-)\label{eqlemmaQ1}.
\end{align}
But, by constraint (\ref{pb1_constraint_energy}),
\begin{align}
&p_c(\tau_q-R)+p_c(s_j-\tau_q)\le \ETx(s_j^-),
\\
& p_c(s_j-\tau_q)\stackrel{(a)}{\le} \ETx(s_j^-)-\ETx(\tau_q^-),
\\
& p_c(s_j-\tau_q)< \ETx(s_j^-)-U(\tau_q)=p_{j-1}(s_j-\tau_q),
\\
& p_c<p_{j-1},\label{eqlemmaQ2}
\end{align}
where $(a)$ follows from (\ref{eqlemmaQ1}).
If ${j-1}= 1$, then power at $\tau_q$ is the first transmission power $p_1$. But then by \eqref{eqlemmaQ2}, $p_1 > p_c$. By the definition of $p_c$, we must have $s_{1} > R$, and this will contradict \textit{Claim 1}.

So ${j-1}\ge 2$, which means that the power of transmission must change at least once between $R$ and $\tau_q$. By Lemma \ref{lemma_energy_consumed}, $X$ has used all energy by $s_{j-1}$ and $s_{j}$. So, $p_{j}(\ETx(s_{j}^-)-\ETx(s_{j-1}^-))$ is the maximum energy available between time $s_{j-1}$ and $s_{j}$. If $R<s_{j-1}$, then $p_c$ (by \eqref{eqlemmaQ2}) uses more energy, than available between $s_{j-1}$ and $s_{j}$, which is not possible. If $s_{j-1}\le R$ then $p_{j-1}$ uses more than maximum energy available (given by $p_c(\tau_q-R)=\ETx(\tau_q^-)$ ) between time $R$ and $\tau_q$, violating energy constraint \eqref{pb1_constraint_energy}. 

Therefore, every optimal transmission policy must use all energy till epoch $\tau_q$. 

\end{proof}

\begin{algorithm}
\caption{INIT\_POLICY}

\footnotesize
\label{init_policy}
\begin{algorithmic}[1]
\State \textbf{Input}:$\ETx(t)$, $B_0$, $\TRx_0$

\State $n=\displaystyle \argmin_k\left(\left\{\tau_k | \TRx_0 g\left(\frac{\ETx(\tau_k)}{\TRx_0}\right)\geq B_0\right\}\right)$. \label{init_policy_Etn}

\State Solve for $\widetilde{T}: \widetilde{T}g\left(\dfrac{\ETx(\tau_n)}{\widetilde{T}}\right) = B_0.$\label{init_policy_CP_time}

\State $p_c=\dfrac{\ETx(\tau_n)}{\widetilde{T}}.$


\State $q=\displaystyle \argmin_{k\in [n]}\ ( \{ \tau_k | IS\_FEASIBLE(\{p_c,p_c\},\{\tau_k-\ETx(\tau_k^-)/p_c,\tau_k,\tau_k+(\ETx(\tau_n)-\ETx(\tau_k^-))/p_c\},2)==1\}).$

\label{init_policy_t_q}
\State $T_{start}=\tau_q-\dfrac{\ETx(\tau_q^-)}{p_c}$, $T_{stop}=\tau_q+\dfrac{\ETx(\tau_n)-\ETx(\tau_q^-)}{p_c}.$

\If {$U(T_{stop})<\ETx(\tau_n)$}
	\State $\widetilde{B}=g(p_c)(T_{stop}-\tau_q).$\label{init_policy_bits_t_q}  
	\State $\{\bm{p},\bm{s},N\}\gets$  Apply Algorithm 1 in \cite{UlukusEH2011b} to 	minimize transmission
		\Statex  time of $\widetilde{B}$ bits  after time $\tau_q$ assuming	a  total of $\ETx_{q}$   
		\Statex amount of energy available at $\tau_q$. 
	\State\Return $\{\{p_c,\bm{p}\},\{T_{start},\bm{s}\},N+1\}$. \label{init_policy_Yang}
\Else 
	\State\Return $\{\{p_c,p_c\},\{T_{start},\tau_q,T_{stop}\},2\}$. \label{init_policy_CP}
\EndIf
\Statex\hrulefill
\Statex $IS\_FEASIBLE({\bm p},{\bm s},N)$ returns $1$ if policy $\{{\bm p},{\bm s},N\}$ is feasible and $0$ otherwise. 
\end{algorithmic}
\end{algorithm}

Now that we have an initial feasible solution, we improve upon this policy iteratively as presented in PULL\_BACK. But, before getting into the formal algorithm, we explain the procedure PULL\_BACK through an example.

\textit{Example PULL\_BACK:}
Assume that the starting feasible solution is given by the constant power policy, as shown by dotted line in Fig. \ref{figure_example_Algorithm1} (a), where $\tau_q=\tau_2$. We first assign the following initial values for the initial feasible policy - transmission power left of $\tau_2$ as $p_l=p_c$, power right of $\tau_2$ as $p_r=p_c$, start time $T_{start}$ stop time $T_{stop}$ as start and stop time of  constant policy power $p_c$, epoch at which $p_l$ ends as $\tau_l=\tau_2$, epoch at which $p_r$ starts as $\tau_r=\tau_2$. Now, we increase $p_r$, till it reaches $p_r'$ which hits the boundary of energy feasibility at epoch $\tau_3$, as shown by the solid line in Fig \ref{figure_example_Algorithm1} (a). Since, in total we need to transmit $B_0$ bits, the decrease in bits transferred by $p_r$ to $p_r'$ (RHS of \eqref{eq_example1}) is compensated by calculating appropriate $p_l'$ according to the following equation, where LHS represents the increase in bits transmitted from $p_l$ to $p_l'$.
\begin{align}
\nonumber g(p_l') & \frac{\ETx(\tau_l^-)}{p_l'}  -g(p_l)(\tau_l-T_{start})
\\
&=-g(p_r)(T_{stop}-\tau_r)
+g(p_r')\frac{\ETx(T_{stop}'^-)-\ETx(\tau_3^-)}{T_{stop}'-\tau_3}.
\label{eq_example1}
\end{align}   
Having got a feasible $p_l'$, as shown in Fig. \ref{figure_example_Algorithm1} (a), we assign $T_{start}'$ with the time at which transmission with power $p_l'$ starts, $T_{stop}'$ with time at which transmission with power $p_r'$ finishes. $\tau_r'$ gets the value $\tau_3$ and $\tau_l'$ remains same as $\tau_l=\tau_2$. Note that parameters $\{T_{start}',T_{stop}',\tau_l',\tau_r',p_l',p_r'\}$ define the policy at the end of first iteration. 

In the next iteration, the portion of transmission between $\tau_l'=\tau_2$ to $\tau_r'=\tau_3$ is not updated. In this iteration, we try to increase $p_r'$ till it hits the feasibility equation \eqref{pb1_constraint_energy} of energy. $p_r'$ could virtually be increased to infinity. But transmission with infinite power for 0 time does not transmit any bits. So we assign $\tau_r''=\tau_2$ and $p_r''=\frac{\ETx(\tau_3^-)-\ETx(\tau_2^-)}{\tau_3-\tau_2}$. With this change of $p_r'$ to $p_r''$, we again calculate $p_l''$ which compensates the decrease in bits transferred after $\tau_r'$. But the calculated $p_l''$ becomes infeasible at $\tau_1$ as shown in Fig. \ref{figure_example_Algorithm1} (b). Hence, we set $p_l''$ to the minimum feasible power $\frac{\ETx(\tau_2^-)-\ETx(\tau_1^-)}{\tau_2-\tau_1}$ as shown in Fig. \ref{figure_example_Algorithm1} (c). With this $p_l''$, we re-calculate $p_r''$, so as to transmit $B_0$ bits in total. $\tau_l''$ is assigned to $\tau_1$, $\tau_r''$ remains $\tau_3$. $T_{start}''$ and $T_{stop}''$ are updated to values marked in Fig. \ref{figure_example_Algorithm1} (c). The final policy at the end of second iteration is shown by solid line in Fig. \ref{figure_example_Algorithm1} (c). Similarly, we continue to the third iteration, by improving the policy at the end of second iteration to finish earlier.

\begin{figure}
\centering
  \centerline{\includegraphics[width=8cm]{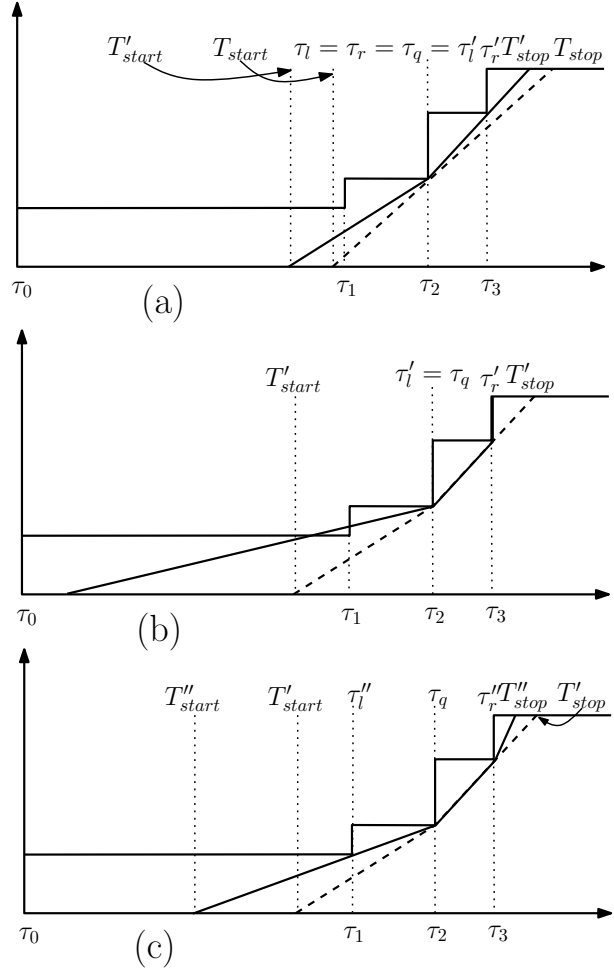}}
\caption{Figures showing (a) first  and (c) second iteration of the PULL\_BACK through an example. (b) representes an intermidiate step in second iteration. In this diagram, the dashed line represent previous iteration policy and solid line is the present iteration policy.}\label{figure_example_Algorithm1}
\end{figure}

\subsection{ PULL\_BACK}
Now, we describe the iterative subroutine PULL\_BACK whose input is policy $\{\bm{p},\bm{s},N\}$ output by INIT\_POLICY.
\\
\textit{Idea: }Clearly, $\{\bm{p},\bm{s},N\}$, the output of INIT\_POLICY, satisfies all but structure \eqref{claim2} of Theorem \ref{th_algo1_1}, since we cannot guarantee whether $(s_{N+1}-s_1)=\TRx_0$ when $s_1>0$. 
So, the main idea of PULL\_BACK is to increase the transmission duration from $(s_{N+1}-s_1)\le \widetilde{\TRx}_0$, in INIT\_POLICY, to $\TRx_0$ in order to satisfy \eqref{claim2}, while decreasing the finish time for reaching the optimal solution. To achieve this, we utilize the structure presented in Lemma \ref{lemma_increase_time} and iteratively increase the last transmission power $p_N$, and decease the first transmission power $p_1$. 
%

Initialize $\tau_l=s_2,\tau_r=s_N,p_l=p_1,p_r=p_N,T_{start}=s_1,T_{stop}=s_{N+1}$. In any iteration, $\tau_{l}$ and $\tau_{r}$ are assigned to the first and last energy arrival epochs, where $U(\tau_l)=\ETx(\tau_l^-)$ and $U(\tau_r)=\ETx(\tau_r^-)$. $p_l$ and $p_r$ are the constant transmission powers before $\tau_l$ and after $\tau_r$, respectively. 
We reuse the notation $\tau$ here, because $\tau_l$ and $\tau_r$ will occur at energy arrival epochs from Lemma \ref{lemma_energy_consumed}. 
$T_{start}$ and $T_{stop}$ are the start and finish time of the policy, found in any iteration. $\tau_l, \tau_r, p_l, p_r, T_{start}, T_{stop}$ get updated to $\tau_l', \tau_r', p_l', p_r', T'_{start}, T'_{stop}$ over an iteration. In any iteration, only one of $\tau_l$ or $\tau_r$ gets updated, i.e., either $\tau_l'=\tau_l$ or $\tau_r'=\tau_r$. Further, PULL\_BACK ensures that \textit{transmission powers between $\tau_l$ and $\tau_r$ do not get changed} over an iteration.
Fig. \ref{figure_Algorithm1} shows the possible updates in an iteration of PULL\_BACK.

\textit{Step1, Updation of $\tau_r$, $p_r$:} Initialize $p_r'=p_r$ and increase $p_r'$ till it hits the boundary of energy constraint \eqref{pb1_constraint_energy}, say at $(\tau_r',\ETx(\tau_r'^-))$ as shown in Fig. \ref{figure_Algorithm1}(a). The last epoch where $p_r'$ hits \eqref{pb1_constraint_energy} is set to $\tau_r'$. So, $U(\tau_r') = \ETx(\tau_r'^-)$. Set $T_{stop}'$ to where power $p_r'$ ends.
Calculate $p_l'$ such that decrease in bits transmitted due to change from $p_r$ to $p_r'$ is compensated by increasing $p_l$ to $p_l'$, via
\begin{align}
\nonumber g(p_r)(T_{stop}&-\tau_r)-g(p_r')(T_{stop}'-\tau_r')
\\
&=g(p_l')\frac{\ETx(\tau_l'^-)}{p_l'}-g(p_l)(\tau_l-T_{start}).
\label{eq_example1}
\end{align}
Suppose, $p_r$ can be increased till infinity without violating \eqref{pb1_constraint_energy}, as shown in Fig. \ref{figure_Algorithm1}(b).  This happens when there in no energy arrival between $\tau_r$ and $T_{stop}$. In this case, set $p_r'$ to the transmission power at $\tau_r^-$. Set $\tau_r'$ as the epoch where $p_r'$ starts, and $T_{stop}'$ to $\tau_r$. Calculate $p_l'$ similar to \eqref{eq_example1}.

\textit{Step2, Updation of $\tau_l, p_l$}: If $p_l'$ obtained from \textit{Step1} is feasible, as shown in Fig. \ref{figure_Algorithm1}(a), set $T_{start}'=\tau_l-\frac{\ETx(\tau_l'^-)}{p_l'}$, $\tau_l'=\tau_l$. Proceed to \textit{Step3}. Otherwise, if $p_l'$ is not feasible, as shown in Fig. \ref{figure_Algorithm1}(c), the changes made to $\tau_r',p_r'$ in \textit{Step1} are discarded. As shown in Fig. \ref{figure_Algorithm1} (d), $p_l'$ is increased from its value in \textit{Step1} until it becomes feasible. $\tau_l'$ is set to the first epoch where $U(\tau_l') = \ETx(\tau_l'^-)$. Similar to \textit{Step1}, calculate $p_r'$ such that the increase in bits transmitted due to change of $p_l$ to $p_l'$ is compensated, and update $T_{stop}'$ accordingly. Set $\tau_r'=\tau_r$. Proceed to \textit{Step3}.

\textit{Step3, Termination condition:} If $T_{stop}' - T_{start}' \ge \TRx_0$ or $T_{start}' = 0$, then terminate PULL\_BACK. Otherwise, update $\tau_l, \tau_r, p_l, p_r, T_{start}, T_{stop}$ to $\tau_l', \tau_r', p_l', p_r', T'_{start}, T'_{stop}$ receptively and GOTO \textit{Step1}.

By design of PULL\_BACK, we know that the finish time decreases at every iteration. Next, in Lemma  \ref{lemma_PULL_BACK_power},  we show that transmission time increases with each iteration of  PULL\_BACK. This is used in Lemma \ref{th_running_time} to establish a bound on the running time of PULL\_BACK.

\begin{lemma}
Transmission time $(T_{stop}-T_{start})$ monotonically increases over each iteration of PULL\_BACK.
\label{lemma_PULL_BACK_power}
\end{lemma}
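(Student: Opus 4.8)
The plan is to recognize a single iteration of PULL\_BACK as exactly the power‑transformation analysed in Lemma~\ref{lemma_increase_time}, and then quote that lemma. I would fix an arbitrary iteration and call $X$ the policy at its start and $Y$ the policy at its end. By Lemmas~\ref{lemma_increasing_power}--\ref{lemma_energy_consumed}, $X$ has non‑decreasing powers, transmits $B_0$ bits with no breaks, changes power only at harvest epochs, and satisfies $U(\tau_l)=\ETx(\tau_l^-)$, $U(\tau_r)=\ETx(\tau_r^-)$, $U(T_{stop})=\ETx(T_{stop}^-)$; in particular its first power is the constant $p_l=p_1$ on $[T_{start},\tau_l]$ and its last power is the constant $p_r=p_N$ on $[\tau_r,T_{stop}]$. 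From the description of Step~1 and Step~2 I would then read off that, in either branch, the update (i) leaves the power profile on the ``core'' strictly between the two pivots untouched, (ii) replaces the first power by a strictly smaller $\widetilde p_1=p_l-\alpha$ with $\alpha>0$ (equivalently it flattens and lengthens the opening segment, $T'_{start}<T_{start}$), and (iii) replaces the last power by a strictly larger $\widetilde p_N=p_r+\beta$ with $\beta>0$ (equivalently it steepens and shortens the closing segment, $T'_{stop}<T_{stop}$), the remaining free power being recomputed through the compensation equation~\eqref{eq_example1} so that $Y$ still transmits exactly $B_0$ bits. The strictness of (ii)--(iii) needs a short remark: since $\tau_r$ is the \emph{last} epoch with $U=\ETx^-$, the slope‑$p_r$ line lies strictly below the harvesting curve on $(\tau_r,T_{stop})$, so $p_r$ can genuinely be increased; and in the Step~2 fallback the first power is lowered only until the opening segment becomes feasible again, so because $T_{start}>0$ this threshold is still strictly below $p_l$, forcing the last power to strictly increase.

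Having cast the iteration this way, I would verify the hypotheses of Lemma~\ref{lemma_increase_time}: $X$ and $Y$ are both feasible for \eqref{pb1_constraint_energy} (by the way PULL\_BACK picks the new powers and pivots), both have non‑decreasing powers (the new first power $p_l-\alpha$ is below the unchanged core, whose powers are below $p_r+\beta$), they transmit the same $B_0$ bits, and they consume the same total energy, since each exhausts all harvested energy up to its finish time and $Y$'s finish time is exactly where the steepened closing segment rejoins the harvesting curve, so no harvest lies strictly between $T'_{stop}$ and $T_{stop}$. Lemma~\ref{lemma_increase_time} then yields $T'_{stop}-T'_{start}=(T_{stop}-T_{start})+(\gamma-\delta)$ with $\gamma>\delta>0$, i.e.\ the transmission time strictly increases across the iteration; applying this to every iteration up to termination gives the asserted monotonicity.

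The step I expect to be the main obstacle is precisely the bookkeeping that certifies a PULL\_BACK iteration to be a literal instance of Lemma~\ref{lemma_increase_time}: identifying the unchanged core correctly (it is the portion strictly between the updated pivots $\tau_l,\tau_r$, and it absorbs a sub‑segment of the old opening or closing segment when a pivot moves), checking the strict inequalities $\alpha,\beta>0$ in both branches of Steps~1--2, and pinning down the ``equal total energy consumed'' condition via the choice of $T'_{stop}$. Once the transformation is recognized as such an instance, the transmission‑time inequality is an immediate consequence of Lemma~\ref{lemma_increase_time}, with no further computation required.
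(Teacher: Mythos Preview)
Your approach is essentially the paper's: cast one PULL\_BACK iteration as an instance of Lemma~\ref{lemma_increase_time} and invoke that lemma. The paper's own proof does exactly this, splitting into the three configurations of Fig.~\ref{figure_Algorithm1}(a),(b),(d) and checking the equal-energy hypothesis $\ETx(T_{stop}^-)=\ETx(T_{stop}'^-)$ in each.

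There is one genuine subtlety you have not accounted for. Your framing ``replace the last power by a strictly larger $\widetilde p_N=p_r+\beta$ with $\beta>0$'' tacitly assumes Step~1 terminates because the steepened closing segment touches a harvest epoch in $(\tau_r,T_{stop})$. But configuration~(b) is precisely the case where \emph{no} harvest occurs in $(\tau_r,T_{stop})$: then $p_r$ can be raised without bound, the last segment collapses to zero length, $T_{stop}'=\tau_r$, and the new ``last power'' is the \emph{previous} segment's power, not $p_r+\beta$ for any finite $\beta$. Your strictness remark (``the slope-$p_r$ line lies strictly below the harvesting curve on $(\tau_r,T_{stop})$, so $p_r$ can genuinely be increased'') is correct but does not distinguish the two outcomes, and your identification of the ``unchanged core'' and of $\widetilde p_N$ breaks down in~(b). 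The paper patches this by treating~(b) as the limit $\beta\to\infty$: one pretends transmission continues past $\tau_r$ with infinite power for zero time, which transmits zero bits and preserves $\ETx(T_{stop}^-)=\ETx(T_{stop}'^-)$, so Lemma~\ref{lemma_increase_time} still applies in the limit. You should add this case explicitly; once you do, the rest of your verification is sound and matches the paper.
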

\begin{proof}
In any iteration of PULL\_BACK, the possible valid configurations can be one of the three shown in Fig. \ref{figure_Algorithm1} (a), (b), (d). Since it is too verbose to describe the three possible cases, we refer to Fig. \ref{figure_Algorithm1}. Note that $\ETx(T_{stop}^-)=\ETx(T_{stop}'^-)$ in (a), (d). In case (b), we can assume that $T_{stop}'=\tau_r^+$ and transmission continues beyond $\tau_r$, but with infinite power. Since transmitting with infinite power for $0$ time does not transmit any bits, we would transmit the same number of bits, as we did prior to this modification. So, $\ETx(T_{stop}^-)=\ETx(T_{stop}'^-)$ in (d) as well. Thus, the transmission policy for two consecutive iterations satisfy the conditions of Lemma \ref{lemma_increase_time} (with $\beta\rightarrow \infty$ for case (d)) and therefore, $(T_{stop}-T_{start})$ increases across constitutive iterations of PULL\_BACK.
\end{proof}
\begin{lemma} Worst case running time of PULL\_BACK is linear with respect to the number of energy harvests before finish time of INIT\_POLICY.
\label{th_running_time}
\end{lemma}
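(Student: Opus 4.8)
The plan is to bound the number of iterations of PULL\_BACK by showing that each iteration either strictly advances the pointer $\tau_l$ to an earlier energy arrival epoch, or strictly advances the pointer $\tau_r$ to a later energy arrival epoch, and that once a pointer has passed a given epoch it never returns. Since every $\tau_l$ and $\tau_r$ that PULL\_BACK visits is an energy arrival epoch (by Lemma \ref{lemma_energy_consumed}, as $U(\tau_l)=\ETx(\tau_l^-)$ and $U(\tau_r)=\ETx(\tau_r^-)$), and since the whole policy lives in the time window between $T_{start}$ and the finish time of INIT\_POLICY, there are only finitely many — say $M$ — such epochs available. Each iteration consumes at least one of them, so PULL\_BACK terminates in $O(M)$ iterations, which is the claimed linear bound.

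First I would recall from the description of PULL\_BACK that in any single iteration exactly one of $\tau_l'=\tau_l$ or $\tau_r'=\tau_r$ holds, i.e.\ only one of the two pointers moves. In \emph{Step 1}, when $p_l'$ turns out feasible, $\tau_r$ moves: $p_r'$ is pushed up until it hits the energy curve \eqref{pb1_constraint_energy} at a \emph{later} epoch $\tau_r'>\tau_r$ (or, in the no-arrival case of Fig.\ \ref{figure_Algorithm1}(b), $\tau_r$ collapses to an earlier epoch but with $p_r'=\infty$, after which the next iteration must move it strictly forward past $\tau_r$ — this degenerate case should be folded in carefully so it is charged to at most one extra iteration). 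In \emph{Step 2}, when $p_l'$ is infeasible, the $\tau_r$-update is discarded and instead $\tau_l$ moves: $p_l'$ is raised until feasible and $\tau_l'$ is set to the first epoch with $U(\tau_l')=\ETx(\tau_l'^-)$, which is strictly earlier than the previous $\tau_l$. The key monotonicity fact I need is that $\tau_l$ is nonincreasing and $\tau_r$ is nondecreasing across iterations and that neither is ever stationary in the iteration in which it is "the one that moves"; this follows because the energy used between $\tau_l$ and $\tau_r$ is frozen (PULL\_BACK does not change powers in that window), so the epochs already "locked in" on the $\ETx(t)$ boundary between $\tau_l$ and $\tau_r$ stay locked, and a pointer can only cross fresh epochs.

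Next I would argue that the pointers cannot meet or cross prematurely in a way that would allow extra iterations: the process runs only while $T_{stop}'-T_{start}'<\TRx_0$ and $T_{start}'>0$ (Step 3), and by Lemma \ref{lemma_PULL_BACK_power} the transmission time $T_{stop}-T_{start}$ strictly increases each iteration, so the loop is genuinely making progress toward the termination threshold $\TRx_0$ and cannot cycle. Combining this with the pointer-advancement count: since $\tau_l$ only ever steps left through the at most $M$ energy epochs lying before $T_{stop}$, it can be updated at most $M$ times; likewise $\tau_r$ can be updated at most $M$ times; and every iteration updates exactly one of them (plus at most one bookkeeping iteration for the infinite-power case (b)). Hence the total iteration count is at most $2M+O(1)$, i.e.\ linear in the number of energy harvests before the finish time of INIT\_POLICY, and since each iteration does $O(1)$ feasibility checks and arithmetic (solving \eqref{eq_example1} for $p_l'$ or its $p_r'$ analogue), the running time is linear as claimed.

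The main obstacle I expect is handling the degenerate configuration of Fig.\ \ref{figure_Algorithm1}(b), where $p_r'\to\infty$ and $\tau_r'$ actually jumps \emph{backward} to an earlier epoch while $T_{stop}'$ collapses to $\tau_r$: one must check that this cannot happen in two consecutive iterations (so it does not create a ping-pong that inflates the count) and that in the subsequent iteration $\tau_r$ is forced strictly past its old value, so the case (b) iteration is charged against a genuine forward move. A secondary subtlety is verifying that when Step 2 discards the Step 1 update, the discarded work does not itself cost a "wasted" epoch — it does not, because the discard happens within the same iteration and $\tau_r$ is simply left unchanged while $\tau_l$ advances. Once these bookkeeping points are pinned down, the counting argument is routine.
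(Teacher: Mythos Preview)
Your counting strategy is the same as the paper's --- bound iterations by the number of distinct epochs visited by $\tau_l$ and $\tau_r$ --- and your treatment of $\tau_l$ (monotone nonincreasing, hence at most $M$ updates) is correct. The gap is entirely on the $\tau_r$ side.

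Your claim that $\tau_r$ is essentially nondecreasing, with case~(b) a one-off exception, is false. In case~(b) the update sets $T_{stop}'=\tau_r$ and moves $\tau_r'$ to the \emph{previous} epoch in $\bm{s}$, so $\tau_r$ strictly decreases. Your proposed patch --- ``in the subsequent iteration $\tau_r$ is forced strictly past its old value'' --- is not merely unproven, it is impossible: after case~(b) the new $T_{stop}$ equals the old $\tau_r$, and since the finish time $T_{stop}$ is monotone decreasing across iterations (this is the ``by design'' remark preceding Lemma~\ref{lemma_PULL_BACK_power}), $\tau_r$ (which always satisfies $\tau_r<T_{stop}$) can \emph{never} return to or exceed that old value. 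Moreover, nothing prevents several consecutive case~(b) iterations, so ``charged to at most one extra iteration'' does not hold either. Your invocation of Lemma~\ref{lemma_PULL_BACK_power} to say the loop ``cannot cycle'' is not enough: monotone increase of $T_{stop}-T_{start}$ toward $\Gamma_0$ does not by itself give a linear (or even finite) iteration bound.

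The paper closes this hole differently: it uses Lemma~\ref{lemma_PULL_BACK_power} not to rule out cycling, but to show that PULL\_BACK must terminate \emph{before} $T_{stop}$ (and hence $\tau_r$) can drop below $\tau_q$. The argument is that $\tau_q\le\tau_n$, and by the defining minimality of $\tau_n$, any policy that finishes by time $\le\tau_n$ uses at most $\ETx(\tau_n^-)$ energy and therefore needs transmission time strictly greater than $\Gamma_0$ to deliver $B_0$ bits; but the transmission time starts $\le\Gamma_0$ and increases monotonically, so the termination condition $T_{stop}-T_{start}\ge\Gamma_0$ fires first. This confines $\tau_r$ to the finitely many epochs in $[\tau_q,\,T_{stop}^{\mathrm{init}}]$, which together with the $\tau_l\le\tau_q$ bound gives the linear count. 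To make your counting fully rigorous on the $\tau_r$ side you would additionally want the observation that $T_{stop}$ is monotone decreasing, so once an epoch has been ``consumed'' (become $\ge T_{stop}$) it is never pushed onto $\bm{s}$ again --- this bounds the number of case~(a) pushes by $M$, and hence the number of case~(b) pops as well.
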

\begin{proof}
%
%
%
%
%
Since, in an iteration of PULL\_BACK, either $\tau_r$ or $\tau_l$ updates, the number of iterations is bounded by the values attained by $\tau_l$, in addition to that of $\tau_r$. Initially, $\tau_l\le \tau_q$ and $\tau_r \ge \tau_q$.
As $\tau_l$ is non-increasing across iterations, $\tau_l\le \tau_q$ throughout.
Assume that $\tau_r$ remains greater than or equal to $ \tau_q$ across INIT\_POLICY. Then, both $\tau_l$ and $\tau_r$ can at max attain all $\tau_i$'s less than finish time of initial feasible policy. Hence, we are done.

It remains to show that $\tau_r\ge \tau_q$. $\tau_n$ is defined as the first energy arrival epoch with which $B_0$ or more bits can be transmitted in $\TRx_0$ time and $\tau_q\le \tau_n$, by definition. 
So, when $T_{stop}$ becomes $\le \tau_n \, or \,  \tau_q$, then transmission time, $(T_{stop}-T_{start})$, should be  $>\TRx_0$. 
But, in the initial iteration $(T_{stop}-T_{start})\le \TRx_0$ and $(T_{stop}-T_{start})$ increases monotonically, from Lemma \ref{lemma_PULL_BACK_power}. Hence, PULL\_BACK will terminate before $T_{stop}$ (and therefore $\tau_r$) decreases beyond $\tau_q$.   
\end{proof}

\begin{algorithm}
\caption{PULL\_BACK}
\footnotesize
\label{Algorithm1}
\begin{algorithmic}[1]
\State \textbf{Input}: $\{\bm{p},\bm{s},N\}\gets$ INIT\_POLICY
\State \textbf{Initialization}: $\tau_l=s_2$, $\tau_{r}=s_{N}$, $T_{start}=s_1$, $T_{stop}=s_{N+1}$, $p_l=p_1$, $p_r=p_{N}$, $control=0$.
\State Delete $\bm{s}.first$, Delete $\bm{s}.last$, Delete $\bm{p}.first$, Delete $\bm{p}.last$.
\While {$\left(T_{stop}-T_{start}< \TRx_0\right) \text{ and } \left(T_{start}>0\right)$}
\label{algo_while_loop}
	\State $\{\tau_l',\tau_r',T_{start}',T_{stop}',p_l',p_r',\bm{p'},\bm{s'}\}$\hspace{8cm}$
=\{\tau_l,\tau_r,T_{start},T_{stop},p_l,p_r,\bm{p},\bm{s}\}.$  
	\If {$\{i:\tau_r<\tau_i < T_{stop}\}=\emptyset$}
		\State $B_l=g(p_r)(T_{stop}-\tau_r)+g(p_l)(\tau_l-T_{start})$, $control=1$.
	\Else
		\State $j=\displaystyle \argmin_{i:\tau_r<\tau_i < T_{stop}} \mathcal{P}(\tau_r,\tau_i)$.
		\State $B_l=g(p_r)(T_{stop}-\tau_r)+g(p_l)(\tau_l-T_{start})-g\left(\mathcal{P}(\tau_r,\tau_j)\right)						\left(\frac{\ETx(T_{stop}^-)-\ETx(\tau_r^-)}{\mathcal{P}(\tau_r,\tau_j)}\right) $.\label{algo_bits_left_1}
	\EndIf

	\State Solve for $\tilde{p}$ in $0<\tilde{p}<p_l$: $\frac{\ETx(\tau_l^-)}{\tilde{p}}g(\tilde{p})=B_l$.\label{algo_bits_left_2}
	\If {$\tilde{p}$ exists}
		\State   $istrue=IS\_FEASIBLE\left(\tilde{p},\{\tau_l-					\ETx(\tau_l^-)/\tilde{p},\tau_l\},1\right).$
		\Else    $\;\;\; istrue=0.$ 
		\EndIf 
		

	\If {$istrue==1$}
		\State $p_l=\tilde{p}$, $T_{start}=\tau_l-\ETx(\tau_l^-)/p_l$.
		\If {$control=0$}
			\State $p_r=\mathcal{P}(\tau_r,\tau_j)$, $T_{stop}=\tau_r+\frac{\ETx(T_{stop}^-)-\ETx(\tau_r^-)}							{\mathcal{P}					(\tau_r,\tau_j)}$.
			\State $\tau_r=\tau_j$, $\bm{p}.append(\mathcal{P}(\tau_r,\tau_j))$, $\bm{s}.append( \tau_j)$.
		\Else
			\State $T_{stop}=\tau_r$.
			\If {$\bm{p}\neq \emptyset$} 
				\State Delete $\bm{s}.last$, $\tau_r=\bm{s}.last$, $p_r=\bm{p}.last$,  Delete $\bm{p}.last$.
			\EndIf
			\State $control=0$.
			
		\EndIf
	\Else 
		\State $k=\displaystyle \argmax_{i:\max\left(\left(\tau_l-\ETx(\tau_l^-)/\tilde{p}\right),\tau_0\right)					\le \tau_i <\tau_l} \mathcal{P}(\tau_i,\tau_l)$.
		\State $B_r=g(p_r)(T_{stop}-\tau_r)+g(p_l)(\tau_l-T_{start})-g\left(\mathcal{P}(\tau_k,\tau_l)\right).							\left(\frac{\ETx(\tau_l^-)}{\mathcal{P}(\tau_k,\tau_l)}\right)$
		\State $p_l=\mathcal{P}(\tau_k,\tau_l)$, $T_{start}=\tau_l-\ETx(\tau_l^-)/\mathcal{P}(\tau_k,\tau_l)$.
		 
		\State $\tau_l=\tau_k$, $\bm{p}.prepend=p_l$, $\bm{s}.prepend=\tau_k$.
	
		\State Solve for $p_r$: $\frac{\ETx(T_{stop}^-)-\ETx(\tau_r^-)}{p_r}g(p_r)=B_r$.
		\State $T_{stop}=\tau_r+(\ETx(T_{stop}^-)-\ETx(\tau_r^-))/p_r$	.
	\EndIf

\EndWhile
\State \Return $\{\tau_l',\tau_r',T_{start}',T_{stop}',p_l',p_r',\bm{p'},\bm{s'},T_{start},T_{stop}\}.$ 

\Statex\hrulefill
\Statex $\mathcal{P}(\tau_i,\tau_j)=\frac{\ETx(\tau_j^-)-\ETx(\tau_i^-)}{\tau_j-\tau_i}.$
\end{algorithmic}

\end{algorithm}

\begin{algorithm}
\caption{QUIT}
\footnotesize
\label{Algorithm1}
\begin{algorithmic}[1]
\State \textbf{Input}:\\$\{\tau_l',\tau_r',T_{start}',T_{stop}',p_l',p_r',\bm{p'},\bm{s'}, T_{start},T_{stop}\}\gets $PULL\_BACK.
\If {$(T_{start}-T_{stop})>\TRx_0$}
	\State $T=\TRx_0-(\tau_r'-\tau_l')$, $B=B_0-\displaystyle\sum_{i}g(p_i')(s_{i+1}'-s_{i}')$.
	\State Solve for  $x$:
		\begin{align*}
		&xg\left(\dfrac{\ETx(\tau_{l}'^-)}{x}\right)+\left(T-x\right) g 		\left(\dfrac{\ETx(T_{stop}'^-)-\ETx(\tau_{r}'^-)}{T-x}\right)=B.
		\end{align*}			
	\label{algo_solve_eqn}
	\State $p_l'=\dfrac{\ETx(\tau_{l}'^-)}{x}$, $T_{start}'=\tau_l'-x$.
	\State $p_r'=\dfrac{\ETx(T_{stop}'^-)-\ETx(\tau_{r}'^-)}{T-x}$, $T_{stop}'=\tau_r'+T-x$.
	\State $\bm{p'}.prepend(p_l')$, $\bm{s'}.prepend(T_{start}')$, $\bm{p'}.append(p_r')$, $\bm{s'}.append(T_{stop}')$.
\State \Return $\{\bm{p'},\bm{s'}, \text{number of elements in } \bm{p'}\}$.
\Else
	\State $\bm{p}.prepend(p_l)$, $\bm{s}.prepend(T_{start})$, $\bm{p}.append(p_r)$, $\bm{s}.append(T_{stop})$.
\State \Return $\{\bm{p},\bm{s}, \text{number of elements in } \bm{p}\}$.  
\EndIf

\end{algorithmic}
\end{algorithm}

The third and final	subroutine of $\mathsf{OFF}$ is QUIT.

\subsection{QUIT}If $T_{start}' = 0$ and $T_{stop}' - T_{start}' \le \TRx_0$ upon PULL\_BACK's termination, then PULL\_BACK's policy at termination is output. Note that structure \eqref{claim2} holds for this policy. Otherwise, if $T_{stop}' - T_{start}' > \TRx_0$ (which happens for the first time across iterations of PULL\_BACK), then we know that in penultimate step $T_{stop} - T_{start} < \TRx_0$.
Hence, we are looking for a policy that starts in $ [T_{start} ,\ T_{start}']$ and ends in $[T_{stop} ,\ T_{stop}']$, whose transmission time is equal to $\TRx_0$. 
We solve for $x,y$ (let the solution be $x^*,y^*$),
\begin{align}
\nonumber (\tau_l-x)& \; g\left(\frac{\ETx(\tau_l^-)}{\tau_l-x}\right)+(y-\tau_r)\; g\left(\frac{\ETx(T_{stop}^-)}{y-\tau_r}\right)\\
&=g(p_l)(\tau_l-T_{start})+g(p_r)(T_{stop}-\tau_r),
\label{eq_termination_0}
\\
y-x&=\TRx_0.
\label{eq_termination}
\end{align}
At penultimate iteration, $(x,y)=(T_{start},T_{stop})$, \eqref{eq_termination_0} is satisfied and $y-x<\TRx_0$.
At $(x,y)=(T_{start}',T_{stop}')$, as $\ETx(T_{stop}^-)=\ETx(T_{stop}'^-)$, \eqref{eq_termination_0} is satisfied and $y-x>\TRx_0$. 
So, there must exist a solution $(x^*,y^*)$ to \eqref{eq_termination_0}, where $x^*\in [T_{start}',T_{start}]$, $y^*\in [T_{stop}',T_{stop}]$ and $y^*-x^*=\TRx_0$, for which, \eqref{claim2} holds. Output with this policy which starts at $x^*$ and ends at $y^*$.

Now, we state Theorem \ref{th_algo1_2} which proves the optimality of Algorithm $\mathsf{OFF}$.
\begin{theorem}
The transmission policy proposed by Algorithm $\mathsf{OFF}$ is an optimal solution to Problem \eqref{pb1}.
\label{th_algo1_2}
\end{theorem}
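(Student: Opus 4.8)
The plan is to lean entirely on the ``if and only if'' characterization in Theorem \ref{th_algo1_1}: it suffices to show that the policy returned by $\mathsf{OFF}$ is feasible for Problem \eqref{pb1} and satisfies the five structural conditions \eqref{claim1}--\eqref{claim5}, which I would do by carrying these properties as invariants through the three subroutines. For INIT\_POLICY, the discussion in Section \ref{sec:OFF} already establishes feasibility: $U(t)\le\ETx(t)$ holds by the choice of $T_{start}$ and by optimality of Algorithm~1 of \cite{UlukusEH2011b} on the tail, while the transmission time is at most $\widetilde{\TRx}_0\le\TRx_0$. Its output transmits exactly $B_0$ bits, giving \eqref{claim1}; its powers ($p_c$ followed by the non-decreasing profile of \cite{UlukusEH2011b}) are non-decreasing, giving \eqref{claim3}; power changes occur only at energy-arrival epochs at which all harvested energy is consumed, giving \eqref{claim4}; and $\tau_q\in\bm s$ by the very definition of $\tau_q$, giving \eqref{claim5}. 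The only condition that may fail is \eqref{claim2}, since we can have $T_{stop}-T_{start}<\TRx_0$ with $T_{start}>0$ --- precisely the gap PULL\_BACK is designed to close.

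Next I would argue that each iteration of PULL\_BACK preserves feasibility together with \eqref{claim1}, \eqref{claim3}, \eqref{claim4}, \eqref{claim5}, while the finish time strictly decreases and, by Lemma \ref{lemma_PULL_BACK_power}, the transmission time strictly increases. Condition \eqref{claim1} is immediate from the compensation equation \eqref{eq_example1}, solved exactly so that the total bit count stays $B_0$. Energy feasibility is enforced by the update itself: $p_r'$ is raised only up to the boundary of \eqref{pb1_constraint_energy}, and $p_l'$ is accepted only when IS\_FEASIBLE returns $1$, otherwise it is raised just to the first epoch where the energy curve is touched; hence $\tau_l',\tau_r'$ always lie at epochs with $U=\ETx(\cdot^-)$, which is \eqref{claim4}. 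The block between $\tau_l$ and $\tau_r$ is never modified, and since $\tau_l$ is non-increasing it stays $\le\tau_q$, while by Lemma \ref{th_running_time} one has $\tau_r\ge\tau_q$ throughout; therefore $\tau_q$ --- at which every optimal policy exhausts its energy by Lemma \ref{lemma_Q} --- remains in $\bm s$, which is \eqref{claim5}. For \eqref{claim3} one checks that the updated $p_l'$ stays at or below, and $p_r'$ stays at or above, the powers in the frozen middle block (using $0<\tilde p<p_l$ in the ``left'' update, $p_r'\ge p_r$ in the ``right'' update, concavity of $g$, and the non-decreasing INIT\_POLICY profile as base case). Finally, receiver-time feasibility holds because the transmission time begins at $\le\TRx_0$, increases monotonically, and PULL\_BACK halts the first time it would reach $\TRx_0$ (or when $T_{start}$ hits $0$); Lemma \ref{th_running_time} guarantees termination.

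Then I would handle QUIT in its two cases. If PULL\_BACK stopped with $T_{start}'=0$ and $T_{stop}'-T_{start}'\le\TRx_0$, the returned policy satisfies the second branch of \eqref{claim2} directly, so Theorem \ref{th_algo1_1} applies. Otherwise $T_{stop}'-T_{start}'>\TRx_0$ while at the penultimate iteration $T_{stop}-T_{start}<\TRx_0$; since $\ETx(T_{stop}^-)=\ETx(T_{stop}'^-)$, both endpoint configurations satisfy the bit-balance relation \eqref{eq_termination_0}, so by continuity there is a solution $(x^*,y^*)$ with $x^*\in[T_{start}',T_{start}]$, $y^*\in[T_{stop}',T_{stop}]$ and $y^*-x^*=\TRx_0$ as in \eqref{eq_termination}. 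The interpolated policy has $s_1=x^*>0$ and transmission time exactly $\TRx_0$ (first branch of \eqref{claim2}), still transmits $B_0$ bits, retains $\tau_q$ and every epoch of the frozen middle block, and its outer powers $p_l^*,p_r^*$ lie between the penultimate and final values and hence still bracket the middle powers, so \eqref{claim1}, \eqref{claim3}, \eqref{claim4}, \eqref{claim5} persist; energy feasibility is inherited from the two bracketing policies. In every case the output meets all of \eqref{claim1}--\eqref{claim5}, so Theorem \ref{th_algo1_1} gives optimality.

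I expect the main obstacle to be the PULL\_BACK step: rigorously verifying that the non-decreasing-power condition \eqref{claim3} and the energy-neutrality constraint \eqref{pb1_constraint_energy} are preserved under each of the update branches in Fig.~\ref{figure_Algorithm1} (including the degenerate case $\beta\to\infty$), that the caps the algorithm places on $p_l'$ and $p_r'$ really coincide with the feasibility boundary, and that the middle block genuinely stays frozen so that $\tau_q$ never leaves $\bm s$. By comparison, the INIT\_POLICY bookkeeping and the continuity argument in QUIT are routine once Lemma \ref{lemma_PULL_BACK_power} and Lemma \ref{th_running_time} are available.
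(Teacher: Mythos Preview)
Your plan matches the paper's proof almost exactly: verify \eqref{claim1}--\eqref{claim5} as invariants through INIT\_POLICY, PULL\_BACK, and QUIT, then invoke Theorem~\ref{th_algo1_1}. The one place you are too quick is the INIT\_POLICY base case for \eqref{claim3}: asserting that ``$p_c$ followed by the non-decreasing profile of \cite{UlukusEH2011b}'' is non-decreasing hides the nontrivial check that $p_c$ does not exceed the \emph{first} power $p_q$ that Algorithm~1 of \cite{UlukusEH2011b} selects after $\tau_q$. The paper proves this by contradiction: if $p_q<p_c$, then because $p_q$ exhausts energy at some epoch $\tau_{q'}\le T_{stop}\le T_{start}+\widetilde\TRx_0$, the constant-power-$p_c$ segment would overspend energy on $[\tau_q,\tau_{q'}]$, contradicting the feasibility of $p_c$ up to $T_{start}+\widetilde\TRx_0$. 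This is not mere bookkeeping, so do not file it under ``routine''; once you add this step, the rest of your outline (including your handling of PULL\_BACK via $p_l'\le p_l$, $p_r'\ge p_r$ and of QUIT via the continuity/interpolation argument) coincides with the paper's argument.
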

\begin{proof}
We show that  Algorithm $\mathsf{OFF}$ satisfies the sufficiency conditions of  Theorem \ref{th_algo1_1}.
To begin with, we prove that the power allocations satisfy \eqref{claim3}, by induction. 
First we establish the base case that INIT\_POLICY's output satisfies \eqref{claim3}.
If INIT\_POLICY returns the constant power policy $p_c$ from time $T_{start}$ to $T_{stop}$, then clearly the claim holds. 

Otherwise, INIT\_POLICY applies Algorithm 1 from \cite{UlukusEH2011b} with $\widetilde{B}=B_0-g(p_c)(\tau_q-T_{start})$ bits to transmit after time $\tau_q$. 
Algorithm 1 from \cite{UlukusEH2011b} ensures that transmission powers are non-decreasing after $\tau_q$. 
So we only need to prove that the transmission power $p_c$ between time $T_{start}$ and $\tau_q$ is less than or equal to the transmission power just after $\tau_q$ (say $p_q$), via contradiction. Assume that $p_q<p_c$. Let transmission with $p_q$ end at an epoch $\tau_{q'}$, where $U(\tau_{q'})=\ETx(\tau_{q'}^-)$ form \cite{UlukusEH2011b}.  The energy consumed between time $\tau_q$ to $\tau_{q'}$ with power $p_c$ is,
\begin{equation} 
p_c(\tau_{q'}-\tau_q)>p_q(\tau_{q'}-\tau_q)\stackrel{(a)}{=}\ETx(\tau_{q'}^-)-\ETx(\tau_q^-),\label{eq_1_algo1_modified}
\end{equation} 
where $(a)$ follows from $U(\tau_q)=\ETx(\tau_q^-)$. Further, the maximum amount of energy available for transmission between $\tau_q$ and $\tau_{q'}$ is $\left(\ETx(\tau_{q'}^-)-\ETx(\tau_q^-)\right)$. By \eqref{eq_1_algo1_modified}, transmission with $p_c$ uses more than this energy and therefore it is infeasible between time $\tau_q$ and $\tau_{q'}$. But, by definition of $p_c$, transmission with power $p_c$ is feasible till time $(T_{start}+\widetilde{\TRx}_0)$. Also, $\tau_{q'}\le T_{stop}$ by definition of $\tau_{q'}$ and $T_{stop}\le (T_{start}+\widetilde{\TRx}_0)$. So, power $p_c$ must be feasible till $\tau_{q'}$ and we reach a contradiction.        

Now, we assume that the transmission powers output from PULL\_BACK are non-decreasing till its $n^{th}$ iteration. Therefore, as transmission powers between $\tau_l$ and $\tau_r$ does not change over an iteration, powers would remain non-decreasing in the $(n+1)^{th}$ iteration if we show that $p_l'<p_l$ and $p_r'>p_r$. 
In any iteration, by  definition, either $\tau_l$ or $\tau_{r}$ updates. Assume $\tau_l$ gets updated to $\tau_{l}'$, $p_l$ to $p_l'$, $p_r$ to $p_r'$ and $\tau_r$ remains same, shown Fig. \ref{figure_Algorithm1}(d) (when $\tau_r$ updates, the proof follows similarly).  
Then we are certain that $p_{r}'>p_r$ by algorithmic steps. So from $n^{th}$ to $(n+1)^{th}$ iteration, the number of bits transmitted after $\tau_r$ should decrease. Thus, the number of bits transmitted before $\tau_l$ must be increasing. This implies $p_l'\le p_l$ and this completes the proof for transmission powers being non-decreasing at the end of every iteration of PULL\_BACK.  

Next, we show that QUIT outputs a policy with non-decreasing transmission powers. Let the policy being output by QUIT be $X,\{{\bm p},{\bm s},N\}$. Let the start and finish time of the policy  at the penultimate iteration of PULL\_BACK (say $Y$) be $T_{start}$ and $T_{stop}$, respectively. From the algorithmic design of PULL\_BACK, we know that $Y$ is identical to $X$ from time $s_2$ to $s_{N}$. Also, since $Y$ is a policy from PULL\_BACK, it has non-decreasing transmission powers. Thus, we can write the power profile of $Y$ as $\{\frac{\ETx(s_2'^-)}{s_2-T_{start}}, p_2,p_3, \cdots , p_{N-1}, \frac{\ETx(s_{N+1}'^-)}{T_{stop}-s_N}\}$, where
\begin{equation}
\frac{\ETx(s_2'^-)}{s_2-T_{start}}\le p_2\le\cdots\le p_{N-1}\le \frac{\ETx(s_{N+1}'^-)}{T_{stop}-s_N}. 
\label{eq:quit_monotone_power}
\end{equation}

Hence, in order to prove monotonicity of ${\bm p}$, we only need to show $p_1\le p_2$ and $p_{N-1}\le p_N$. From QUIT, recall that $s_1=x^*\le T_{start}$ and $s_{N+1}=y^*\le T_{stop}$. Thus, $p_1=\frac{\ETx(s_2'^-)}{s_2-s_1}\le \frac{\ETx(s_2'^-)}{s_2-T_{start}}\stackrel{(a)}{\le} p_2$ and $p_{N}=\frac{\ETx(s_{N+1}'^-)}{s_{N+1}-s_N}\ge \frac{\ETx(s_{N+1}'^-)}{T_{stop}-s_N}\stackrel{(a)}{\ge} p_{N-1}$, where $(a)$ follows from \eqref{eq:quit_monotone_power}. 

Hence, transmission powers output by $\mathsf{OFF}$ are non-deceasing and satisfy \eqref{claim3}. Since $\mathsf{OFF}$ transmits equal number of bits (=$B_0$) throughout INIT\_POLICY, PULL\_BACK and QUIT, it satisfies \eqref{claim1}. Clearly, \eqref{claim4} is maintained throughout $\mathsf{OFF}$,  and by arguments presented at end of QUIT, we know that $\mathsf{OFF}$ satisfies \eqref{claim2}.
%



Now consider structure \eqref{claim5}. As $\tau_q$ is present in INIT\_POLICY, the only way 
$\tau_q$ cannot be part of the policy (say $\{{\bm p},{\bm s},N\}$) in an iteration of PULL\_BACK, i.e. $\tau_q\notin {\bf s}$, is when $\tau_r$ decreases beyond $\tau_q$. But $\tau_r\ge \tau_q$ as shown in Lemma \ref{th_running_time}.  So, the policy output by $\mathsf{OFF}$ includes $\tau_q$.
To conclude, $\mathsf{OFF}$ satisfies \eqref{claim1}-\eqref{claim5}, and hence is an optimal algorithm.
\end{proof}

{\it Discussion:} In this section, we solved the special case of \eqref{pb2}, when there is only energy arrival at the receiver. Even this special case is hard, compared to having receiver powered by conventional energy source. We proposed a three phase iterative algorithm, where in first we come up with a reasonable feasible solution and then improve upon it in the next two phases until it satisfies the sufficient conditions for the optimal solution. We use this solution of the special case as a building block to solve the general problem \eqref{pb2} in next section.

\section{OFFLINE ALGORITHM FOR RECEIVER WITH MULTIPLE ENERGY ARRIVALS}
\label{sec:OFFM}
We now consider solving the general problem \eqref{pb2} in the offline setting, when receiver harvests energy multiple times.
Our approach to solve problem \eqref{pb2} is to use the algorithm  $\mathsf{OFF}$ repeatedly. Corresponding to every receiver `time' arrival of $\TRx_i$ at $r_i$, let $O_i$ be the earliest time instant such that the receiver can be kept \textit{on} continuously, without any break, from time $O_i$ to $O_i+\TRx(r_i)$ (see Fig. \ref{figure_origin_points} (a)). It can be easily seen that the receiver will exhaust all its available energy (or  attain the boundary of \eqref{pb2_constraint_time}) at atleast one receiver `time' arrival epoch when it is kept \textit{on} from $O_i$ to $O_i+\TRx(r_i)$. If not, then we can start the receiver slightly earlier than $O_i$ and keep it \textit{on} for $\TRx(r_i)$ time without violating constraint \eqref{pb2_constraint_time}, which is contradictory to our definition of $O_i$. For example, in Fig. \ref{figure_origin_points} (a), when the receiver turns \textit{on} from $O_1$, it exhausts all it's energy at $r_1^-$.

\begin{figure}
\centering
  \centerline{\includegraphics[width=8cm]{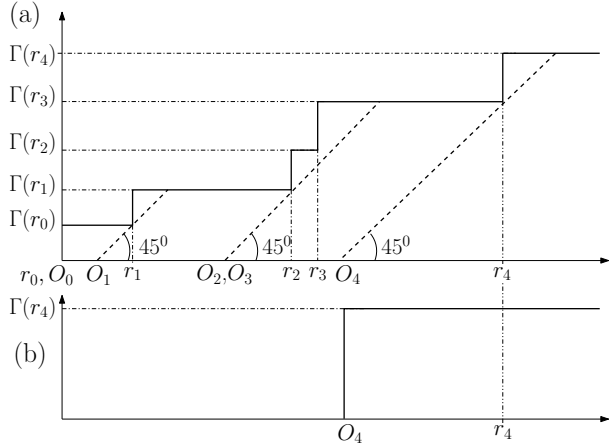}}
\caption{(a) Figure showing $O_i$'s which represent the first time instances at which the reciever can be kept on continuously for $\TRx(r_i)$ time. Note that $O_2$ and $O_3$ coincide in this example. (b) Energy harvesting profile at the receiver for problem $\mathsf{OFF}(O_4)$.}\label{figure_origin_points}
\end{figure}
Let \begin{equation}
i_0= \min \Bigg{\{}i: \lim_{t\rightarrow \infty}\TRx(r_i)g\left(\frac{\ETx(t)}{\TRx(r_i)}\right)\ge B_0\Bigg{\}},
\label{algo2_1}
\end{equation}
i.e. $i_0$ defines the earliest energy arrival time $r_i$ at the receiver such that the time ($\TRx(r_i)$) for which the receiver can stay on starting from $r_i$ is sufficient to transmit the $B_0$ bits by the transmitter eventually, even if no more energy arrives at the receiver. 

\begin{lemma}\label{lem:finitei0} If there is a solution to problem \eqref{pb2}, then $i_0< \infty$.
\end{lemma}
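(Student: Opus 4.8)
The plan is to take \emph{any} feasible policy for \eqref{pb2}, freeze the system at its finish time $T$, and use the concavity and scaling properties of $g$ to exhibit an explicit index lying inside the set whose minimum defines $i_0$ in \eqref{algo2_1}. So, suppose $\{\bm p,\bm s,N\}$ is feasible for \eqref{pb2} with finish time $T=s_{N+1}$, so that \eqref{pb2_constraint_bits}--\eqref{pb2_constraint_time} hold: $B(T)=B_0$, $U(T)\le\ETx(T)$, and $C(T)\le\TRx(T)$. First I would introduce $r_k$, the last receiver energy arrival at or before $T$, so that $\TRx(T)=\TRx(r_k)$, and abbreviate $\sigma:=C(T)$ (the total receiver \textit{on} time actually used) and $E:=U(T)$ (the total transmit energy actually consumed), noting $\sigma\le\TRx(r_k)$ and $E\le\ETx(T)$.

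The core step is a one-line Jensen argument. Over the active sub-intervals, whose on-durations $\Delta_i$ sum to $\sigma$, we have $\sum_i\Delta_i p_i=E$ and $\sum_i\Delta_i g(p_i)=B_0$, so by concavity of $g$ (property P2), $B_0=\sum_i\Delta_i g(p_i)\le\sigma\,g\!\big(\tfrac1\sigma\sum_i\Delta_i p_i\big)=\sigma\,g(E/\sigma)$. Next I would record a monotonicity fact: $\sigma\mapsto\sigma\,g(E/\sigma)=E\cdot\frac{g(E/\sigma)}{E/\sigma}$ is non-decreasing in $\sigma$, because increasing $\sigma$ shrinks the argument $E/\sigma$ while $g(p)/p$ is decreasing in $p$ (property P3). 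Chaining $\sigma\le\TRx(r_k)$, then $E\le\ETx(T)$ with $g$ increasing (property P1), and finally $\ETx(t)$ non-decreasing in $t$ gives
\[
B_0\le\sigma\,g(E/\sigma)\le\TRx(r_k)\,g\!\big(E/\TRx(r_k)\big)\le\TRx(r_k)\,g\!\big(\ETx(T)/\TRx(r_k)\big)\le\lim_{t\to\infty}\TRx(r_k)\,g\!\big(\ETx(t)/\TRx(r_k)\big).
\]
Hence $k$ belongs to the set in \eqref{algo2_1}, so that set is non-empty and $i_0\le k<\infty$.

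I do not anticipate a genuine obstacle: the only place the structure of $g$ enters is the monotonicity of $\sigma\mapsto\sigma g(E/\sigma)$, which is immediate from P3 (no differentiation needed), and everything else is just unwinding the feasibility constraints \eqref{pb2_constraint_bits}--\eqref{pb2_constraint_time} and applying Jensen. The one point I would state carefully is the existence of a \emph{last} receiver arrival $r_k\le T$, i.e.\ that $\TRx(T)=\TRx(r_k)$ for some finite index $k$; this holds under the standing modeling assumption that energy arrivals occur on a discrete, locally finite set of epochs, so only finitely many $r_i$ lie in $[0,T]$.
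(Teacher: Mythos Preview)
Your proof is correct and follows essentially the same approach as the paper's: take any feasible policy with finish time $T$, let $r_k$ be the last receiver arrival before $T$ so that $\TRx(T)=\TRx(r_k)$, and show that $k$ lies in the set defining $i_0$. The paper's version is terse and simply asserts that ``one can transmit $B_0$ bits using energy $\ETx(T)$ in receiver time $\TRx(T)$,'' leaving the Jensen step and the monotonicity of $\sigma\mapsto\sigma g(E/\sigma)$ implicit; your write-up spells these out explicitly, which is an improvement in rigor but not a different idea.
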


\begin{proof} Let the finish time of any feasible solution $\mathsf{F}$ for \eqref{pb2} be $T$. Then by time $T$, the maximum energy used by $\mathsf{F}$ to transmit $B_0$ bits at the transmitter is 
${\cal E}(T)$ and the receiver is on for at most time $\Gamma(T)$. Let the last energy arrival at the transmitter and the receiver before time $T$ be $\tau_{end}$ and $r_{end}$. Then, 
$i_0 \le \max\{\tau_{end}, r_{end}\}$, since starting from time $\max\{\tau_{end}, r_{end}\}$, one can transmit $B_0$ bits for function $g(.)$ using energy ${\cal E}(T)$ at the transmitter in receiver time of $\Gamma(T)$ without any break.
\end{proof}

Now, for the sake of applying algorithm $\mathsf{OFF}$ in multiple receiver energy arrivals regime, 
we introduce a new optimization problem, called $\mathsf{OFF}(O_i)$, for $i\ge i_0$.

$\mathsf{OFF}(O_i)$ is defined under the following energy harvesting profile - the receiver has only \textit{one} `time' arrival of $\TRx(r_i)$, the accumulated receiver time till $r_i$ in problem \eqref{pb2}, at time $O_i$ (see Fig. \ref{figure_origin_points} (b) for $i=4$). The transmitter energy harvesting profile remains same as $\ETx(t),\;\forall t\in [0,\infty)$. The formal description of problem $\mathsf{OFF}(O_i)$ is as follows.

\begin{align}
\min_{\{\bm{p},\bm{s},N\},T=s_{N+1}}	\;\;\;		& T \label{pb3}
\\
\text{subject to} 		\;\;\;		 B(T)&= B_0, 
\label{pb3_constraint_bits}
\\
     										 U(t)&\le \mathcal{E}(t)  		\;\;\;\;\;\;\;\;\;\;\;\;\; \forall \; t\;\in\;[O_i,T], \label{pb3_constraint_energy}
\\
    										C(t)& \leq \TRx(r_i) \;\;\;\;\;\;\;\;\;\;\;\forall \; t \; \in \; [O_i,T]. 
\label{pb3_constraint_time}
\\
    										C(t)& =0, U(t)=0  \;\;\forall \; t \; \in \; [0,O_i],
\label{pb3_constraint_origin}
\end{align}
where $C(t)$ is defined in \eqref{reciever_used}.

Since problem \eqref{pb3} has only one energy arrival at the receiver, we can use algorithm $\mathsf{OFF}$ to solve the problem of transmitting $B_0$ bits under this energy harvesting profile. With origin shifted to $O_i$, optimization problem \eqref{pb3} is similar to problem \eqref{pb1}.

From  Lemma \ref{lem:finitei0}, it is clear that if there is a solution to problem \eqref{pb2}, then $\forall\;\; i\ge i_0$ there is a solution for $\mathsf{OFF}(O_i)$. Let the optimal policy returned by solving $\mathsf{OFF}(O_i)$  be denoted by $X_i$. Moreover, its worthwhile remembering that $X_i$ is also a feasible solution to  \eqref{pb2}.
We have introduced $\mathsf{OFF}(O_i)$ to break the complex problem \eqref{pb2} into simpler single receiver `time' arrival problems $\mathsf{OFF}(O_i)$ that can be solved using $\mathsf{OFF}$. Lemma \ref{lemma_shifted_origin} states that the optimal solution to problem \eqref{pb2} is one of the $X_i$'s.

Also, following similar procedure as described in Lemma \ref{lemma_nobreaks}, we can show that there always exists an optimal solution to problem \eqref{pb2} with no breaks in transmission. So in the rest of the paper whenever we refer to the optimal solution of problem \eqref{pb2}, we consider the one without breaks in transmission.

\begin{lemma}
The optimal solution to problem \eqref{pb2} is policy $X_i$ for some $i$.
\label{lemma_shifted_origin}
\end{lemma}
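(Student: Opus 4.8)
The plan is to show that the optimal solution to \eqref{pb2} (taken without breaks in transmission, as justified above) must coincide with $X_i$ for some $i \ge i_0$. Let $X^\star$ be an optimal policy for \eqref{pb2} with no breaks, starting at time $s_1$ and finishing at $s_{N+1}$, with receiver on-time $C(s_{N+1}) = s_{N+1}-s_1$. First I would argue that, because $X^\star$ has no breaks, the receiver is on continuously on the interval $[s_1, s_{N+1}]$, so the receiver energy-neutrality constraint \eqref{pb2_constraint_time} forces $s_{N+1}-s_1 \le \Gamma(r_j)$ for the largest $j$ with $r_j \le s_{N+1}$ — in fact for \emph{every} $j$ with $r_j \le s_1$ we need $s_1 \ge O_j$ where $O_j$ was defined as the earliest instant from which the receiver can stay on continuously for $\Gamma(r_j)$ time. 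The key is to identify \emph{which} receiver arrival is the ``binding'' one for $X^\star$ and to show that the time budget available to $X^\star$ is exactly $\Gamma(r_i)$ for that $i$, i.e.\ $X^\star$ is feasible for the single-arrival problem $\mathsf{OFF}(O_i)$.

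The main step is the following. Let $i$ be the index of the receiver arrival at which the continuous-on window $[s_1,s_{N+1}]$ of $X^\star$ is ``anchored'': concretely, take $i$ to be such that $[s_1, s_{N+1}] \subseteq [O_i, O_i + \Gamma(r_i)]$ and this containing window is the shortest such among all receiver arrivals seen up to $s_{N+1}$ (this is well-defined because, as noted in the text, starting the receiver at $O_i$ exhausts the receiver budget at some arrival epoch, so the windows $[O_i, O_i+\Gamma(r_i)]$ are nested/ordered in the relevant way; I would make this monotonicity precise). Since $i_0$ was defined as the first $i$ for which $\Gamma(r_i)$ alone suffices to carry $B_0$ bits eventually, and $X^\star$ finishes, this anchoring index satisfies $i \ge i_0$, so $\mathsf{OFF}(O_i)$ has a solution by Lemma \ref{lem:finitei0}. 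Now $X^\star$, restricted to its support, is a feasible policy for $\mathsf{OFF}(O_i)$: it uses no transmitter energy and no receiver time before $O_i$ (because $s_1 \ge O_i$), it obeys $U(t) \le \mathcal{E}(t)$, it transmits $B_0$ bits, and its receiver on-time is $s_{N+1}-s_1 \le \Gamma(r_i)$. Hence $X_i$, being optimal for $\mathsf{OFF}(O_i)$, finishes no later than $X^\star$: the finish time of $X_i$ is $\le s_{N+1}$.

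Conversely, every $X_j$ with $j \ge i_0$ is a feasible solution to \eqref{pb2} (this is already observed in the text: $X_j$ keeps the receiver on only inside $[O_j, O_j+\Gamma(r_j)]$, during which the cumulative receiver time $\Gamma(r_j)$ is available, and it respects the transmitter constraint which is identical in the two problems). Therefore the finish time of $X^\star$ is $\le$ that of each $X_j$, and in particular $\le$ that of $X_i$. Combining the two inequalities, $X_i$ and $X^\star$ have the same finish time, so $X_i$ is also optimal for \eqref{pb2}; thus the optimal value of \eqref{pb2} equals $\min_{j \ge i_0}$ (finish time of $X_j$), and the minimizing $X_i$ is an optimal solution of the required form.

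The step I expect to be the real obstacle is making precise the ``anchoring index'' argument: showing that the continuous-on interval of an arbitrary optimal (break-free) policy for \eqref{pb2} is always contained in one of the windows $[O_i, O_i + \Gamma(r_i)]$, and that the relevant $i$ is $\ge i_0$. This requires the observation (stated but not fully proved in the text) that starting the receiver at $O_i$ drives the receiver budget to its boundary at some arrival epoch, together with a monotonicity/nesting property of the windows $\{[O_i, O_i+\Gamma(r_i)]\}$ as $i$ increases; once that structural fact is in hand, the feasibility transfer in both directions is routine.
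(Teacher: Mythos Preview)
Your two-direction feasibility argument---show that the break-free optimum $X^\star$ is feasible for some $\mathsf{OFF}(O_i)$, then use that $X_i$ is feasible for \eqref{pb2}---is exactly the paper's idea (the paper casts it as a contradiction, but the content is the same). Where you diverge is in how you pick the anchoring index, and there your formulation is both more complicated than necessary and slightly off. The containment $[s_1,s_{N+1}]\subseteq[O_i,O_i+\Gamma(r_i)]$ is \emph{not} the feasibility criterion for $\mathsf{OFF}(O_i)$: what is actually required (and what you correctly invoke a few lines later) is only $s_1\ge O_i$ together with $s_{N+1}-s_1\le\Gamma(r_i)$, which is strictly weaker than interval containment and can hold when the containment fails.

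The paper sidesteps all the window-nesting discussion by choosing the index directly from the start time: take $k$ with $O_k\le s_1<O_{k+1}$. Then the definition of $O_{k+1}$ as the \emph{earliest} time from which the receiver can stay on continuously for $\Gamma(r_{k+1})$ immediately forces any break-free policy starting before $O_{k+1}$ to have on-time at most $\Gamma(r_{k+1}^-)=\Gamma(r_k)$. With that single observation, $X^\star$ is feasible for $\mathsf{OFF}(O_k)$ and your two inequalities close; the condition $k\ge i_0$ comes for free because $\mathsf{OFF}(O_k)$ now has a feasible solution. So the step you flagged as the real obstacle collapses once the index is chosen this way, and no monotonicity or nesting of the windows $\{[O_i,O_i+\Gamma(r_i)]\}_i$ is needed.
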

\begin{proof}
We shall prove this by contradiction. Assume that the optimal solution to problem \eqref{pb2} is given by policy $Y,\{\bm{p},\bm{s},N\}$,  and none of the $X_i$'s are optimal to problem \eqref{pb2}. Let $O_k\le s_1<O_{k+1}$ for some $k$. 
By definition of $O_{k+1}$, all policies starting before $O_{k+1}$ must have transmission time less than  $\TRx(r_{k+1})$, and therefore the transmission time of $Y$ $(s_{N+1}-s_1)\le \TRx(r_{k+1}^-)=\TRx(r_{k})$. Let $X_k$ (solution of $\mathsf{OFF}(O_k)$) be denoted by $\{\bm{p}',\bm{s}',N'\}$. Now, since the transmission time of $Y$ is less than or equal to $\TRx(r_{k})$ and its start time is greater than $O_k$, policy $Y$ is a feasible solution to $\mathsf{OFF}(O_k)$. This implies $k \ge i_0$ and also,
\begin{equation}
s'_{N'+1}\le s_{N+1}.\label{l8_2}
\end{equation}
Both $X_k$ and $Y$ are feasible policies to problem \eqref{pb2}, and $Y$ is optimal to problem \eqref{pb2} from our assumption, whereas $X_k$ is not. Therefore, 
\begin{equation}
s_{N+1}<s'_{N'+1}.\label{l8_1}
\end{equation}
Hence, \eqref{l8_1} contradicts \eqref{l8_2}.
\end{proof}

From Lemma \ref{lemma_shifted_origin}, to solve \eqref{pb2} we need to identify the right index $i$ for which $X_i$ is optimal. Let the optimal policy for problem \eqref{pb2} be denoted by $X_{i^*}$. 
  Next, Lemma \ref{lemma_startpoint} states that if $X_{i^*}$, the optimal policy to $\mathsf{OFF}(O_{i^*})$, is also optimal to problem \eqref{pb2}, then it must begin transmission before $O_{i^*+1}$.

\begin{lemma}
The optimal policy to problem \eqref{pb2}, $X_{i^*}$ denoted by $\{\bm{p},\bm{s},N\}$, has $ s_1\le O_{i^*+1}$.
\label{lemma_startpoint}
\end{lemma}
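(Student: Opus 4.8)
The plan is to argue by contradiction, exploiting the definition of $O_{i^*+1}$ together with Lemma \ref{lemma_shifted_origin}. Suppose $X_{i^*}$, which is optimal to both $\mathsf{OFF}(O_{i^*})$ and to problem \eqref{pb2}, has $s_1 > O_{i^*+1}$. The first thing I would check is what this says about the receiver \textit{on}-time available to $X_{i^*}$ as a solution of problem \eqref{pb2}: since $X_{i^*}$ is feasible for \eqref{pb2} with start time $s_1 > O_{i^*+1}$, it uses only the receiver time that has arrived by $s_1$, which by definition of $O_{i^*+1}$ is at least $\TRx(r_{i^*+1})$. In particular its transmission time is constrained by the \emph{larger} budget $\TRx(r_{i^*+1})$, not merely by $\TRx(r_{i^*})$ which is all that $\mathsf{OFF}(O_{i^*})$ allows.

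The key step is then to produce from $X_{i^*}$ a feasible solution of $\mathsf{OFF}(O_{i^*+1})$ with finish time no larger than that of $X_{i^*}$, thereby forcing $s'_{N'+1}$ of $X_{i^*+1}$ (the optimal solution of $\mathsf{OFF}(O_{i^*+1})$) to satisfy $s'_{N'+1} \le s_{N+1}$. Concretely: $X_{i^*}$ starts at $s_1 > O_{i^*+1}$, so the energy-neutrality constraint $U(t)\le\ETx(t)$ holding on $[O_{i^*},T]$ holds a fortiori on $[O_{i^*+1},T]$; the bit constraint is unchanged; and the receiver-time constraint of $\mathsf{OFF}(O_{i^*+1})$ allows $\TRx(r_{i^*+1})$, which is $\ge$ the transmission time of $X_{i^*}$ (by Lemma \ref{transmission_duration}-type reasoning, or directly since $X_{i^*}$ was feasible for \eqref{pb2} with this start time). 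Hence $X_{i^*}$ is feasible for $\mathsf{OFF}(O_{i^*+1})$, so $X_{i^*+1}$ finishes no later: $s'_{N'+1}\le s_{N+1}$. On the other hand, $X_{i^*+1}$ is feasible for \eqref{pb2}, and by the assumed optimality of $X_{i^*}$ for \eqref{pb2} together with Lemma \ref{lemma_shifted_origin} (which says \emph{some} $X_i$ is optimal and we have singled out $X_{i^*}$), $X_{i^*+1}$ cannot do strictly better, so $s_{N+1}\le s'_{N'+1}$. Combining, the two finish times coincide, and if the optimal is chosen to be the one without breaks (as fixed after Lemma \ref{lemma_shifted_origin}) and with the earliest start, we get a contradiction with $s_1 > O_{i^*+1}$ — or more cleanly, $X_{i^*+1}$ is then \emph{also} an optimal solution to \eqref{pb2}, and repeating the argument we may as well have taken $i^* = i^*+1$, pushing $i^*$ arbitrarily large, which contradicts $i^* < \infty$ (a consequence of Lemma \ref{lem:finitei0}, since any optimal, hence feasible, solution to \eqref{pb2} forces a finite index).

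The main obstacle I anticipate is the bookkeeping around the receiver-time constraint: I must be careful that when $X_{i^*}$ starts at $s_1 \in (O_{i^*+1}, O_{i^*+2})$ its available receiver time in problem \eqref{pb2} really is $\TRx(r_{i^*+1})$ and not something larger or smaller, and that this matches the single-arrival budget of $\mathsf{OFF}(O_{i^*+1})$; the definition of $O_i$ (earliest instant from which the receiver can be kept \textit{on} continuously for $\TRx(r_i)$) and the exhaustion property noted just after it are exactly what is needed here, and the no-break normalization from the analogue of Lemma \ref{lemma_nobreaks} ensures $X_{i^*}$'s transmission is an uninterrupted block of length $\le \TRx(r_{i^*+1})$ that fits after $O_{i^*+1}$. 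A secondary subtlety is phrasing the contradiction so it is genuinely a contradiction and not merely a statement that another optimum exists; the cleanest route is the ``push $i^*$ to infinity'' argument invoking finiteness of $i_0 \le i^*$ from Lemma \ref{lem:finitei0}, or, equivalently, defining $i^*$ as the \emph{smallest} index with $X_{i^*}$ optimal and deriving that $X_{i^*}$ starting after $O_{i^*+1}$ makes $X_{i^*}$ feasible for $\mathsf{OFF}(O_{i^*})$ in a way already captured — so I would state the lemma's proof around whichever of these the authors' earlier normalization makes most immediate.
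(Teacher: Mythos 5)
Your setup matches the paper's: assume $s_1>O_{i^*+1}$, observe that $X_{i^*}$ (with transmission time $\le\TRx(r_{i^*})\le\TRx(r_{i^*+1})$ and start time past $O_{i^*+1}$) is feasible for $\mathsf{OFF}(O_{i^*+1})$, deduce $s'_{N'+1}\le s_{N+1}$ from optimality of $X_{i^*+1}$ for $\mathsf{OFF}(O_{i^*+1})$ and $s_{N+1}\le s'_{N'+1}$ from optimality of $X_{i^*}$ for \eqref{pb2}, and conclude the finish times coincide. Up to that point you are on the paper's track.

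The gap is in how you close. Neither of your two proposed endings yields a contradiction. The ``earliest start'' normalization is not something the paper establishes (the normalization after Lemma \ref{lemma_shifted_origin} is only ``no breaks in transmission''), so you cannot appeal to it. The ``push $i^*$ to infinity'' route also fails: to iterate you would need $X_{i^*+1}$ to start after $O_{i^*+2}$, and nothing in your argument supplies that hypothesis at the next step; you only know $X_{i^*+1}$ is another optimum of \eqref{pb2}, which by itself is the non-contradiction you yourself flag as a worry. The paper's actual closing move is different and essential: since $X_{i^*}$ and $X_{i^*+1}$ are now both optimal for $\mathsf{OFF}(O_{i^*+1})$ and the no-breaks optimum of a single-receiver-arrival problem is \emph{unique} by Theorem \ref{th_algo1_1}, the two policies must be identical, hence $s'_1=s_1>O_{i^*+1}$. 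Lemma \ref{transmission_duration} applied to $\mathsf{OFF}(O_{i^*+1})$ then forces the transmission time of $X_{i^*+1}$ to equal $\TRx(r_{i^*+1})$, while the receiver constraint of $\mathsf{OFF}(O_{i^*})$ caps the transmission time of $X_{i^*}$ at $\TRx(r_{i^*})<\TRx(r_{i^*+1})$ --- contradicting that the policies are identical. You never invoke the uniqueness from Theorem \ref{th_algo1_1} or the forced exhaustion of receiver time from Lemma \ref{transmission_duration} at this stage, and without them the argument does not terminate in a contradiction.
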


\begin{proof}
We prove it by contradiction. 
Let $s_1>O_{i^*+1}$. Now, consider $X_{i^*+1}$ ($\{\bm{p'},\bm{s'},N'\}$), the optimal policy to $\mathsf{OFF}(O_{i^*+1})$ and $X_{i^*}$ ($\{\bm{p},\bm{s},N\}$). Since $s_{N+1}-s_1\le \TRx(r_{i^*})$ and $s_1>O_{i^*+1}$, $X_{i^*}$ is a feasible solution to $\mathsf{OFF}(O_{i^*+1})$. Now, both $X_{i^*}$ and $X_{i^*+1}$ are feasible solutions to $\mathsf{OFF}(O_{i^*+1})$ and $X_{i^*+1}$ is optimal with respect to $\mathsf{OFF}(O_{i^*+1})$. So, $s'_{N'+1}\le s_{N+1}$. 

But on the other hand, both $X_{i^*}$ and $X_{i^*+1}$ are feasible solutions to problem \eqref{pb2} and $X_{i^*}$ is optimal with respect to problem \eqref{pb2}. This implies that $s_{N+1}\le s'_{N'+1}$. From the above arguments we can conclude that the only possibility is $s_{N+1}= s'_{N'+1}$. 

So, both $X_{i^*}$ and $X_{i^*+1}$ are optimal with respect to $\mathsf{OFF}(O_{i^*+1})$. By Theorem \ref{th_algo1_1}, optimal solution to $\mathsf{OFF}(O_{i^*+1})$ is unique (optimal solution without breaks in transmission) and therefore, $X_{i^*}$ and $X_{i^*+1}$ have to be exactly identical. This would imply $s'_{1}=s_1>O_{i^*+1}$. Hence, by Lemma \ref{transmission_duration} on problem $\mathsf{OFF}(O_{i^*+1})$, $s'_{N'+1}-s'_{1}=\TRx(r_{i^*+1})$. Also, $s_{N+1}-s_1\le \TRx(r_{i^*})$ by the receiver `time' constraint in problem $\mathsf{OFF}(O_{i^*})$ and $\TRx(r_{i^*})<\TRx(r_{i^*})+\TRx_{i^*+1}=\TRx(r_{i^*+1})$. So, $s_{N+1}-s_1 < s'_{N'+1}-s'_{1}$ and this contradicts the identicality of policies $X_{i^*}$ and $X_{i^*+1}$.
\end{proof} 


Lemma \ref{lemma_first_solution} gives us a sufficient condition under which we can compute $X_{i^*}$. It establishes that the optimal policy to problem \eqref{pb2} is $X_{i^*}$, where $i^*$ is the minimum $i$ for which policy $X_i$'s start time is before $O_{i+1}$.
\begin{lemma}
\label{lemma_first_solution}
The optimal policy to problem \eqref{pb2} is $X_{i^*}$ where $$i^*= \min \;\{ i: s_1\le O_{i+1}, X_i\equiv\{\bm{p},\bm{s},N\}\}.$$
\end{lemma}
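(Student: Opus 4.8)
The plan is a proof by contradiction built on Lemmas \ref{lemma_shifted_origin}, \ref{lemma_startpoint} and \ref{transmission_duration}. Write $S:=\{i\ge i_0 : \text{the start time of }X_i\text{ is }\le O_{i+1}\}$, so that $i^*=\min S$; note $S\neq\emptyset$ since, by Lemmas \ref{lemma_shifted_origin} and \ref{lemma_startpoint}, the index of any optimal policy of \eqref{pb2} lies in $S$, so $i^*$ is well defined. Suppose, for contradiction, that $X_{i^*}$ is not optimal for \eqref{pb2}. By Lemma \ref{lemma_shifted_origin} the optimum is $X_k$ for some index $k$; since every $X_i$ is feasible for \eqref{pb2} while $X_{i^*}$ is not optimal, the finish time of $X_k$ is strictly less than that of $X_{i^*}$, and $k\neq i^*$. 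Applying Lemma \ref{lemma_startpoint} to the optimal policy $X_k$ (whose index is $k$) gives that its start time is $\le O_{k+1}$, i.e.\ $k\in S$; hence $k>i^*$. Because $X_k$ solves $\mathsf{OFF}(O_k)$ it transmits nothing before $O_k$, so its start time is $\ge O_k\ge O_{i^*+1}$, using that the $O_i$'s are non-decreasing (being on continuously for $\TRx(r_{i+1})$ time from a point forces being on for the shorter time $\TRx(r_i)$ from that point).

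Next I would apply Lemma \ref{transmission_duration} to $\mathsf{OFF}(O_{i^*})$, whose ``origin'' and available receiver time are $O_{i^*}$ and $\TRx(r_{i^*})$: either the transmission time of $X_{i^*}$ equals $\TRx(r_{i^*})$, or $X_{i^*}$ starts exactly at $O_{i^*}$. In both cases I claim the transmission time of $X_k$ is $\le\TRx(r_{i^*})$; the argument uses the no-breaks normalization, so that finish time $=$ start time $+$ transmission time for every policy in play. If the transmission time of $X_{i^*}$ equals $\TRx(r_{i^*})$, then (using $i^*\in S$) $\text{finish}(X_{i^*})=\text{start}(X_{i^*})+\TRx(r_{i^*})\le O_{i^*+1}+\TRx(r_{i^*})$, and combining with $\text{start}(X_k)\ge O_{i^*+1}$ and $\text{finish}(X_k)<\text{finish}(X_{i^*})$ forces the transmission time of $X_k$ to be $<\TRx(r_{i^*})$. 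If instead $X_{i^*}$ starts at $O_{i^*}$, then a transmission time of $X_k$ exceeding $\TRx(r_{i^*})$ would give $\text{finish}(X_k)=\text{start}(X_k)+(\text{transmission time of }X_k)>O_{i^*+1}+\TRx(r_{i^*})\ge O_{i^*}+\TRx(r_{i^*})\ge\text{finish}(X_{i^*})$, contradicting $\text{finish}(X_k)<\text{finish}(X_{i^*})$.

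Finally, $X_k$ starts no earlier than $O_k\ge O_{i^*}$, has transmission time $\le\TRx(r_{i^*})$, and (being feasible for $\mathsf{OFF}(O_k)$) satisfies the transmitter energy constraint on $[O_k,\infty)$ --- hence on all of $[O_{i^*},\infty)$, since it transmits nothing on $[O_{i^*},O_k]$. Thus $X_k$ is a feasible solution of $\mathsf{OFF}(O_{i^*})$, and by optimality of $X_{i^*}$ for $\mathsf{OFF}(O_{i^*})$ we get $\text{finish}(X_{i^*})\le\text{finish}(X_k)$, contradicting $\text{finish}(X_k)<\text{finish}(X_{i^*})$. Therefore $X_{i^*}$ is optimal for \eqref{pb2}.

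I expect the crux to be the middle step: converting the membership $i^*\in S$ (which only asserts that $X_{i^*}$ starts by $O_{i^*+1}$) into a usable upper bound on the transmission time of the \emph{other}, later-origin optimal policy $X_k$, so that $X_k$ can be pushed back into $\mathsf{OFF}(O_{i^*})$. This is precisely what rules out a later receiver origin --- with its strictly larger receiver budget $\TRx(r_k)>\TRx(r_{i^*})$ --- ever helping, and it relies essentially on the no-breaks normalization, so that finish times decompose additively as start time plus transmission time.
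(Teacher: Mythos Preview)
Your proof is correct and follows the same overall contradiction strategy as the paper: assume the minimum-index policy $X_{i^*}$ is not optimal, let $X_k$ (with $k>i^*$) be optimal, show that $X_k$ is feasible for $\mathsf{OFF}(O_{i^*})$, and derive a contradiction. The difference lies in the middle step, where you bound the transmission time of $X_k$ by $\TRx(r_{i^*})$. You invoke the dichotomy of Lemma~\ref{transmission_duration} and split into two cases, which works but is more labor than needed. The paper avoids this entirely: since $\text{start}(X_{i^*})\le O_{i^*+1}\le O_k\le \text{start}(X_k)$ and $\text{finish}(X_k)\le \text{finish}(X_{i^*})$ (by optimality of $X_k$ for \eqref{pb2}), subtracting gives $\text{finish}(X_k)-\text{start}(X_k)\le \text{finish}(X_{i^*})-\text{start}(X_{i^*})\le \TRx(r_{i^*})$ directly from the receiver constraint of $\mathsf{OFF}(O_{i^*})$. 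This one-line ordering argument replaces your entire case split and does not require Lemma~\ref{transmission_duration}; the paper then concludes (as you do) that $X_k$ is feasible for $\mathsf{OFF}(O_{i^*})$, forcing equal finish times and invoking uniqueness.
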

\begin{proof}
We will prove this by contradiction. Let $j$ denote the minimum $i$ for which policy $X_i$'s start time is less than or equal to $O_{i+1}$ and let $X_j$ be not optimal for problem \eqref{pb2}. $X_{i^*}$ being the optimal solution to problem \eqref{pb2}, satisfies Lemma \ref{lemma_startpoint} and  so $i^*>j$. Let $X_j$ be denoted by $\{\bm{p'},\bm{s'},N'\}$ and $X_{i^*}$ by $\{\bm{p},\bm{s},N\}$. Since $X_{i^*}$ is the optimal policy to problem \eqref{pb2}, we have 

\begin{equation}
 s_{N+1}\le s'_{N'+1}. 
 \label{lemma_first_solution_1}
\end{equation} 
Also, $s_{1}\ge O_{i^*}$ by definition of $X_{i^*}$, and $O_{i^*}\ge O_{j+1}$ since $i^*>j$ and $O_i$'s are non decreasing with respect to $i$. Moreover, $s'_1\le O_{j+1}$ by definition of $j$. This would imply $s'_1\le s_1$. Hence, using \eqref{lemma_first_solution_1}, we can write 
\begin{equation}
 s_{N+1}-s_1\le s'_{N'+1}-s'_1.
 \label{lemma_first_solution_2}
\end{equation} 
 From constraints of problem $\mathsf{OFF}(O_j)$, $s'_{N'+1}-s'_1\le \TRx(r_{j})$. Combining this with \eqref{lemma_first_solution_2}, $s_{N+1}-s_1\le \TRx(r_{j})$.
So, $X_{i^*}$ is a feasible solution to $\mathsf{OFF}(O_j)$. This would imply $s'_{N'+1}\le s_{N+1}$ on account of optimality of $X_j$ with respect to $\mathsf{OFF}(O_j)$. When combined with \eqref{lemma_first_solution_1}, we have $s_{N+1}= s'_{N'+1}$. Therefore, $X_j$, having same finish time with $X_{i^*}$, is also a optimal policy to problem \eqref{pb2}. But as we have shown earlier, the optimal policy is unique. Hence  we get a contradiction on our assumption on $X_j$. 
\end{proof} 
Now, we describe the algorithm $\mathsf{OFFM}$ to solve problem \eqref{pb2}.
\\
\textit{\textbf{Algorithm} $\mathsf{OFFM}$:}
\\
Initialization: Let $i=i_0$, where $i_0$ is defined in \eqref{algo2_1}.
\\
\textit{Step1:} Find policy $X_i$ as solution to $\mathsf{OFF}(O_i)$ using algorithm $\mathsf{OFF}$.
 \\
\textit{Step2:} If start time of $X_i$ is less than or equal to $O_{i+1}$ then output policy $X_i$ as the optimal policy and terminate.
If not, then increment $i$ to $i+1$ and go to \textit{Step1}.

Theorem \ref{th_multiple_energy} stated below establishes the optimality of algorithm $\mathsf{OFFM}$.
\begin{theorem}
Algorithm $\mathsf{OFFM}$ returns optimal solution to problem \eqref{pb2}.
\label{th_multiple_energy}
\end{theorem}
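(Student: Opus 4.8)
The plan is to reduce the theorem to Lemma~\ref{lemma_first_solution}, which already identifies the optimal policy for \eqref{pb2} as $X_{i^*}$ with $i^*=\min\{i: s_1\le O_{i+1},\ X_i\equiv\{\bm{p},\bm{s},N\}\}$. Hence it suffices to show three things: the loop in $\mathsf{OFFM}$ is well defined at every index it visits, it terminates after finitely many iterations, and the index at which it halts is exactly $i^*$. Once these are established, the returned policy is $X_{i^*}$, and optimality follows immediately from Lemma~\ref{lemma_first_solution}.

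First I would pin down the range of indices over which the minimum defining $i^*$ is taken. For $i<i_0$, the available receiver time $\TRx(r_i)$ is, by \eqref{algo2_1}, too small to ever carry $B_0$ bits even using all of the transmitter's energy, so $\mathsf{OFF}(O_i)$ is infeasible and $X_i$ is undefined; for $i\ge i_0$, Lemma~\ref{lem:finitei0} together with the assumed existence of a feasible solution to \eqref{pb2} guarantees that $\mathsf{OFF}(O_i)$ is feasible, so $X_i$ is well defined and computable by $\mathsf{OFF}$. Consequently $i^*=\min\{i\ge i_0:\ \text{start time of }X_i\le O_{i+1}\}$, which is precisely the stopping rule of $\mathsf{OFFM}$: the algorithm initializes $i=i_0$, computes $X_i$ via $\mathsf{OFF}$ (legitimate since $i\ge i_0$), and halts the first time the start time of $X_i$ does not exceed $O_{i+1}$.

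Next I would argue termination, i.e.\ that $i^*<\infty$ so the loop runs for finitely many steps. By Lemma~\ref{lemma_shifted_origin}, the optimal policy to \eqref{pb2} equals $X_j$ for some $j$ (necessarily $j\ge i_0$ by the previous paragraph), and by Lemma~\ref{lemma_startpoint} this $X_j$ satisfies $s_1\le O_{j+1}$. Hence $j$ belongs to the set whose minimum is $i^*$, so $i^*\le j<\infty$. Since $\mathsf{OFFM}$ increments $i$ by exactly one each iteration starting from $i_0\le i^*$, and by definition of $i^*$ the stopping condition fails for every $i$ with $i_0\le i<i^*$ and holds at $i=i^*$, the algorithm terminates after exactly $i^*-i_0+1$ iterations and returns $X_{i^*}$.

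Finally, invoking Lemma~\ref{lemma_first_solution}, $X_{i^*}$ is the optimal solution to \eqref{pb2}, so $\mathsf{OFFM}$ is optimal. The only delicate point is the index bookkeeping of the second paragraph — establishing that $X_i$ is undefined precisely for $i<i_0$, so that $\mathsf{OFFM}$'s ``first admissible $i$ scanning upward from $i_0$'' coincides with the global minimizer $i^*$ of Lemma~\ref{lemma_first_solution}; the remainder is a direct assembly of the preceding lemmas (Lemmas~\ref{lem:finitei0}, \ref{lemma_shifted_origin}, \ref{lemma_startpoint}, \ref{lemma_first_solution}) and carries no further analytic content.
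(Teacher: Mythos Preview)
Your proposal is correct and follows essentially the same approach as the paper: both invoke Lemmas~\ref{lemma_shifted_origin}, \ref{lemma_startpoint}, and \ref{lemma_first_solution} to identify the optimal policy as $X_{i^*}$ and then verify that $\mathsf{OFFM}$'s sequential scan from $i_0$ halts at $i^*$. Your version is slightly more careful in justifying why the loop is well defined for each $i\ge i_0$ and why the scan starting at $i_0$ matches the global minimizer, whereas the paper handles this in a single sentence; one minor quibble is that feasibility of $\mathsf{OFF}(O_i)$ for $i\ge i_0$ follows directly from the definition \eqref{algo2_1} of $i_0$ rather than from Lemma~\ref{lem:finitei0} (which only gives $i_0<\infty$).
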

\begin{proof}
By Lemma \ref{lemma_shifted_origin}, we know that the solution to problem 2 has to be a policy $X_i$ for some $i$. Further, from Lemmas \ref{lemma_startpoint} and \ref{lemma_first_solution} we can conclude that the smallest index $i$ for which $X_i$ satisfies the condition of having its start time before $O_{i+1}$ is the optimal solution. 

As $X_{i^*}$ is the optimal solution to problem \eqref{pb2}, $i^* \ge i_0$, where $i_0$ is defined in \eqref{algo2_1}. Since $\mathsf{OFFM}$ iteratively finds policy $X_i$ for every value of $i\ge i_0$, it will definitely terminate with $X_{i^*}$ in less than $i^*$ number of iterations. Thus, Algorithm $\mathsf{OFFM}$ returns an optimal solution to problem \eqref{pb2}.
\end{proof}

{\it Discussion:} In this section, we derived the structure of the optimal power transmission profile in the offline setting, and derived an algorithm that satisfies the optimal structure. The main idea presented in this section is that the problem with multiple energy harvests at the receiver can be broken down into simpler problems, where there is only one energy harvest at the receiver. This hierarchical structure simplifies the complexity of the algorithm as well as provides us with an elegant method to construct a solution. As far as we know, such a hierarchical structure has not been discovered for other related energy harvesting problems.

\section{ONLINE ALGORITHM}

\label{sec:onlinereceiver}
In this section, we consider solving Problem \eqref{pb2}  in the more realistic online scenario, where the transmitter and the receiver 
are assumed to have only causal information about energy arrivals, and both have infinite battery capacities. To consider the most general model, even the  distribution of future energy arrivals is unknown at both the transmitter and the receiver.   

Let $B_{\mbox{\scriptsize{rem}}}(t)$ and $E_{\mbox{\scriptsize{rem}}}(t)$ denote the remaining number of bits to be transmitted, and energy left at the transmitter, at any time $t$, respectively, for the online algorithm. In place of $\{\bm{p},\bm{s},N\}$ for the offline case, we use the notation $\{\bm{l},\bm{b},M\}$ to denote an online algorithm, with identical definitions. Thus, $\bm{l}_i$ power is transmitted between time $\bm{b}_i$ and $\bm{b}_{i+1}$, and end time is $\bm{b}_{M+1}$. Let $\boldsymbol\sigma$ be the set of all possible energy arrival sequences at the transmitter, $\boldsymbol\rho$ be the set of all time arrival sequences at the receiver and $\mathbf{A}$ be the set of all online algorithms to solve \eqref{pb2}.  Then the  competitive ratio is given by 
\begin{equation}\label{defn:compratio}
{\mathsf r}=\displaystyle \min_{A\in \mathbf{A}} \max_{\sigma\in{\boldsymbol \sigma},\rho\in{\boldsymbol \rho}} \frac{T_A(\sigma,\rho)}{T_O(\sigma,\rho)},
\end{equation}
where $T_A(\sigma,\rho)$ and $T_O(\sigma,\rho)$ are the finish times taken by
the online algorithm $A$ and the optimal offline algorithm to Problem \eqref{pb2}, respectively.
Next, we present an online algorithm $\mathsf{ON}$ whose competitive ratio is strictly less than $2$, i.e., $$\max_{\sigma\in{\boldsymbol \sigma},\rho\in{\boldsymbol \rho}} \frac{T_{\mathsf{ON}}(\sigma,\rho)}{T_O(\sigma,\rho)} < 2$$

\textit{Online Algorithm $\mathsf{ON}$:} The algorithm waits till time $T_{\mbox{\scriptsize{start}}}$ 
which is the earliest energy arrival at transmitter or time addition at receiver such that using the energy $\ETx(T_{\mbox{\scriptsize{start}}})$ and time $\TRx(T_{\mbox{\scriptsize{start}}})$, $B_0$ or more bits can be transmitted, i.e.,
 
\begin{equation}
T_{\mbox{\scriptsize{start}}}=\min\ t \ s.t.\  \TRx(t)g\Bigg{(} \dfrac{\ETx(t)}{\TRx(t)}\Bigg{)}\ge B_0.\label{online_T_start}
\end{equation}

Starting at $T_{\mbox{\scriptsize{start}}}$, $\mathsf{ON}$ transmits with power $l_1$, such that $\frac{\ETx(T_{\mbox{\scriptsize{start}}})}{l_1}g(l_1)=B_0$.
After $T_{start}$, at \textit{every} energy arrival epoch $\tau_j$ of the transmitter, the transmission power is changed to $l_j$ such that

\begin{equation}
\frac{E_{\mbox{\scriptsize{rem}}}(\tau_j)}{l_j} g(l_j)= B_{\mbox{\scriptsize{rem}}}(\tau_j). 
\end{equation} 
Transmission power is not changed at any `time' arrival $r_j$ at the receiver after $T_{\mbox{\scriptsize{start}}}$, because there is sufficient receiver time already available to finish transmission. 

\begin{algorithm}
\caption {Online Algorithm $\mathsf{ON}$ for energy harvesting transmitter and receiver.}
\footnotesize
\label{algo_online}
\begin{algorithmic}[1]
\State \textbf{Input}: Bits to transmit $B_0$; $\ETx_i$, $\TRx_i$ for $\tau_i,r_i\le t$ where $t$ is the present time instant.

\State $T_{\mbox{\scriptsize{start}}}=\min\ t$ s.t. $\TRx(t)g\Bigg{(} \dfrac{\ETx(t)}{\TRx(t)}\Bigg{)}\ge B_0$
\State $B_{\mbox{\scriptsize{rem}}}=B_0$, $E_{\mbox{\scriptsize{rem}}}=\ETx(T_{\mbox{\scriptsize{start}}})$, $m=T_{\mbox{\scriptsize{start}}}$
\State Transmit at power $p$ such that $\dfrac{E_{\mbox{\scriptsize{rem}}}}{p} g(p)= B_{\mbox{\scriptsize{rem}}}$
\While {$t\le \left( m+\dfrac{E_{\mbox{\scriptsize{rem}}}}{p}\right)$}
	\If {$t=\tau_i$ for some $i$} 
		\State $B_{\mbox{\scriptsize{rem}}}=B_{\mbox{\scriptsize{rem}}}-(\tau_i-m)g(p)$
		\State $E_{\mbox{\scriptsize{rem}}}=E_{\mbox{\scriptsize{rem}}}+\ETx_i-(\tau_i-m)p$
		\State $m=\tau_i$
	\EndIf
	\State Transmit at power $p$ such that $\dfrac{E_{\mbox{\scriptsize{rem}}}}{p} g(p)= B_{\mbox{\scriptsize{rem}}}$

\EndWhile
\end{algorithmic}
\end{algorithm}

\textit{Example:} Fig. \ref{figure_online_example} shows the output of the proposed online algorithm $\mathsf{ON}$, \eqref{online_T_start} is not satisfied at time $\tau_0$, $r_1$, and $\tau_1$. At time $r_2$, \eqref{online_T_start} is satisfied and transmission starts with a power $l_1$ such that at rate $g(l_1)$, $B_0$ bits can be sent in $\ETx(r_2)/l_1$ time. Transmission power changes to $l_2$ at time $\tau_2$ such that $\frac{E_{\mbox{\scriptsize{rem}}}(\tau_2)}{l_2}g(l_2)=B_{\mbox{\scriptsize{rem}}}(\tau_2)$, and so on.
%
%
\begin{figure}
\centering
  	\centerline{\includegraphics[width=8cm]{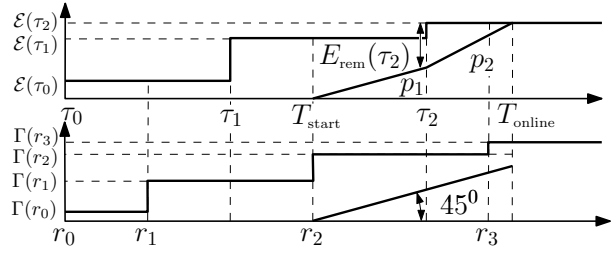}}

	\caption{An example for online algorithm $\mathsf{ON}$.}
	\label{figure_online_example}

\end{figure}
%
%
%
%
%

Next, we present certain properties of $\mathsf{ON}$ which would help us prove that it is $2$-competitive. Lemma \ref{online_power} proves that similar to the optimal offline algorithm (Lemma \ref{lemma_increasing_power}),  $\mathsf{ON}$ also has non-decreasing transmission powers. \begin{lemma}
The transmission powers are non-decreasing with time for $\mathsf{ON}$.
\label{online_power}

\end{lemma}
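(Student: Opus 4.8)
The plan is to show that at each transmitter energy arrival epoch $\tau_j$ after $T_{\mbox{\scriptsize{start}}}$, the updated power $l_j$ is at least the previous power, say $l_{j-1}$. The key quantity is the ``time-to-finish at current power'' viewed at $\tau_j^-$ versus $\tau_j^+$. Just before $\tau_j$, the algorithm was transmitting at $l_{j-1}$, and by construction the remaining energy and remaining bits satisfied $\frac{E_{\mbox{\scriptsize{rem}}}(\tau_j^-)}{l_{j-1}}g(l_{j-1}) = B_{\mbox{\scriptsize{rem}}}(\tau_j^-)$ (this identity held right after the previous update and is preserved because, while transmitting at constant power $l_{j-1}$ with no arrivals, both $B_{\mbox{\scriptsize{rem}}}$ and $E_{\mbox{\scriptsize{rem}}}/l_{j-1}\cdot$ [rate factor] decrease proportionally — I would first state this invariant precisely and verify it from the update rules in the algorithm box). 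At $\tau_j$ an extra energy chunk $\ETx_j>0$ arrives, so $E_{\mbox{\scriptsize{rem}}}(\tau_j^+) = E_{\mbox{\scriptsize{rem}}}(\tau_j^-) + \ETx_j$, while $B_{\mbox{\scriptsize{rem}}}(\tau_j^+) = B_{\mbox{\scriptsize{rem}}}(\tau_j^-)$ (no bits are sent instantaneously).

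Next I would set up the defining equations for the two powers. Writing $E = E_{\mbox{\scriptsize{rem}}}(\tau_j^-)$, $B = B_{\mbox{\scriptsize{rem}}}(\tau_j^-)$, $\Delta = \ETx_j > 0$, the old power $l_{j-1}$ solves $\frac{E}{p}g(p) = B$ and the new power $l_j$ solves $\frac{E+\Delta}{p}g(p) = B$. Equivalently, defining $h(p) := g(p)/p$, which by property P3 is strictly decreasing in $p$, we have $h(l_{j-1}) = B/E$ and $h(l_j) = B/(E+\Delta)$. Since $\Delta>0$, $B/(E+\Delta) < B/E$, hence $h(l_j) < h(l_{j-1})$, and because $h$ is strictly decreasing this forces $l_j > l_{j-1}$. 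The same comparison, applied between the first power $l_1$ (which solves $h(l_1) = B_0/\ETx(T_{\mbox{\scriptsize{start}}})$) and $l_2$, handles the base of the induction; transitivity of $\le$ then gives $l_i \le l_j$ for all $i<j$ among the epochs where the power is updated, which by the algorithm's description is all relevant epochs after $T_{\mbox{\scriptsize{start}}}$.

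The only mildly delicate point — and the step I expect to be the main obstacle to write cleanly — is establishing and using the invariant $\frac{E_{\mbox{\scriptsize{rem}}}(t)}{l_{j}}g(l_{j}) = B_{\mbox{\scriptsize{rem}}}(t)$ for all $t$ in the interval $(\tau_j,\tau_{j+1})$ during which power $l_j$ is held constant, so that the ``$E$'' and ``$B$'' fed into the comparison at $\tau_{j+1}$ are genuinely consistent with the previous power's defining equation. This is a routine check from the two update lines $B_{\mbox{\scriptsize{rem}}} \leftarrow B_{\mbox{\scriptsize{rem}}} - (\tau_i - m)g(p)$ and $E_{\mbox{\scriptsize{rem}}} \leftarrow E_{\mbox{\scriptsize{rem}}} + \ETx_i - (\tau_i-m)p$: subtracting $(\tau_i-m)g(p)$ from $B$ and $(\tau_i-m)p$ from $E$ preserves the relation $\frac{E}{p}g(p)=B$ exactly (it is linear in $(\tau_i-m)$ with matching slopes), and the subsequent addition of $\ETx_i>0$ is precisely the perturbation analyzed above. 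Once this bookkeeping is in place, monotonicity of $h=g/p$ from P3 closes the argument immediately.
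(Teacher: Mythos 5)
Your proposal is correct and follows essentially the same route as the paper: the paper proves this lemma inside the proof of Lemma~\ref{lemma_online_inequality}, where the identity $\tfrac{l_k}{g(l_k)}=\tfrac{E_{\mbox{\scriptsize{rem}}}(b_k)}{B_{\mbox{\scriptsize{rem}}}(b_k)}=\tfrac{l_{k-1}}{g(l_{k-1})}+\tfrac{E_j}{B_{\mbox{\scriptsize{rem}}}(b_{k-1})\gamma}$ is exactly your observation that constant-power transmission preserves the ratio $E_{\mbox{\scriptsize{rem}}}/B_{\mbox{\scriptsize{rem}}}$ while the new arrival $\ETx_j>0$ strictly increases it, after which monotonicity of $g(p)/p$ (property P3) gives $l_k>l_{k-1}$.
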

\begin{proof}
Combined with proof of Lemma \ref{lemma_online_inequality}.
\end{proof}
Lemma \ref{lemma_online_inequality} presented below is the key observation to proving Theorem \ref{thm:onlinecomp}. It helps provide a much shorter and elegant proof for competitive ratio less than $2$, compared to the proof presented in \cite{VazeEH2011} with no receiver constraints.
\begin{lemma}
If power transmitted by $\mathsf{ON}$ at time $t$ is $l$, then $\dfrac{\ETx(t)}{{l}}g(l) \le B_0,\;\;\forall\;\; t\in [T_{\mbox{\scriptsize{start}}}, T_{\mathsf{ON}}(\sigma, \rho)]$, with equality only at $t=T_{\mbox{\scriptsize{start}}}$.
\label{lemma_online_inequality}

\end{lemma}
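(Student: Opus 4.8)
The plan is to prove Lemma \ref{online_power} and Lemma \ref{lemma_online_inequality} together by a short induction over the transmitter energy‑arrival epochs lying in $[T_{\mbox{\scriptsize{start}}},T_{\mathsf{ON}}(\sigma,\rho)]$, since the power of $\mathsf{ON}$ is piecewise constant and changes only at such epochs: by the choice of $T_{\mbox{\scriptsize{start}}}$ enough receiver time is already available, so $\mathsf{ON}$ transmits continuously on $[T_{\mbox{\scriptsize{start}}},T_{\mathsf{ON}}]$ and never turns the receiver off. Throughout I would write $U(t)=\ETx(t)-E_{\mbox{\scriptsize{rem}}}(t)$ for the energy consumed and $B(t)=B_0-B_{\mbox{\scriptsize{rem}}}(t)$ for the bits delivered by time $t$; if the successive constant powers used up to time $t$ are $l_{(i)}$, held for durations $\Delta_i\ge 0$, then $U(t)=\sum_i l_{(i)}\Delta_i$ and $B(t)=\sum_i g(l_{(i)})\Delta_i$.

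First I would settle Lemma \ref{online_power}. Before $T_{\mbox{\scriptsize{start}}}$ the power is $0$ and at $T_{\mbox{\scriptsize{start}}}$ it becomes $l_1>0$. At a transmitter arrival $\tau_j>T_{\mbox{\scriptsize{start}}}$ preceding $T_{\mathsf{ON}}$, if $l$ is the power held just before $\tau_j$, then the update rule at the previous epoch together with the fact that power $l$ exactly meets the demand give $\tfrac{g(l)}{l}E_{\mbox{\scriptsize{rem}}}(\tau_j^-)=B_{\mbox{\scriptsize{rem}}}(\tau_j^-)=B_{\mbox{\scriptsize{rem}}}(\tau_j)$, while the new power $l'$ satisfies $\tfrac{g(l')}{l'}\big(E_{\mbox{\scriptsize{rem}}}(\tau_j^-)+\ETx_j\big)=B_{\mbox{\scriptsize{rem}}}(\tau_j)$ with $\ETx_j>0$. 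Since $g(p)/p$ is (strictly) decreasing by P3, a larger energy budget with the same bit target forces $l'>l$. Hence the transmission powers are non‑decreasing over time — in fact strictly increasing across transmitter arrivals.

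With Lemma \ref{online_power} available, Lemma \ref{lemma_online_inequality} reduces to one identity and one inequality. Fix $t\in[T_{\mbox{\scriptsize{start}}},T_{\mathsf{ON}}]$, let $l$ be the power at $t$, and let $m\in\{T_{\mbox{\scriptsize{start}}}\}\cup\{\tau_j\}$ be the epoch at which this power was set, so that $m\le t$ and $\ETx$ is constant on $[m,t]$. The defining equation at $m$ is $\tfrac{g(l)}{l}E_{\mbox{\scriptsize{rem}}}(m)=B_{\mbox{\scriptsize{rem}}}(m)$ (at $m=T_{\mbox{\scriptsize{start}}}$ this reads $\tfrac{g(l_1)}{l_1}\ETx(T_{\mbox{\scriptsize{start}}})=B_0$), and $\ETx(t)=\ETx(m)=U(m)+E_{\mbox{\scriptsize{rem}}}(m)$. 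Therefore
\begin{align*}
\frac{\ETx(t)}{l}\,g(l)-B_0
&=\frac{g(l)}{l}\,U(m)+\frac{g(l)}{l}\,E_{\mbox{\scriptsize{rem}}}(m)-B_0\\
&=\frac{g(l)}{l}\,U(m)+B_{\mbox{\scriptsize{rem}}}(m)-B_0=\frac{g(l)}{l}\,U(m)-B(m)\\
&=-\sum_i\Delta_i\,l_{(i)}\!\left(\frac{g(l_{(i)})}{l_{(i)}}-\frac{g(l)}{l}\right)\;\le\;0,
\end{align*}
since each $l_{(i)}\le l$ by Lemma \ref{online_power} and $g(p)/p$ is decreasing. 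Equality requires $l_{(i)}=l$ on every segment with $\Delta_i>0$; because the power strictly increases at each transmitter arrival, this holds only before the first transmitter arrival after $T_{\mbox{\scriptsize{start}}}$, and in particular the identity $\tfrac{g(l_1)}{l_1}\ETx(T_{\mbox{\scriptsize{start}}})=B_0$ gives equality at $t=T_{\mbox{\scriptsize{start}}}$, while strict inequality holds once a transmitter arrival has occurred.

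The only delicate point, and where I expect to spend the most care, is the bookkeeping that rewrites $\ETx(t)$ at a general $t$ as $U(m)+E_{\mbox{\scriptsize{rem}}}(m)$ for the correct reference epoch $m$ — i.e. verifying that $\ETx$ is constant on $[m,t]$ and that, since $\mathsf{ON}$ never idles after $T_{\mbox{\scriptsize{start}}}$, $U$ and $B$ are genuinely the piecewise‑linear accumulations $\sum l_{(i)}\Delta_i$ and $\sum g(l_{(i)})\Delta_i$ with non‑decreasing $l_{(i)}$. Everything else — the strict monotonicity of power coming from P3 and the elementary weighted‑average inequality — is immediate. I would also remark that only the monotonicity half of P3 is needed for this lemma; convexity of $g(p)/p$ is not used here.
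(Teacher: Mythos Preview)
Your argument is correct. The paper proves Lemma~\ref{lemma_online_inequality} and Lemma~\ref{online_power} simultaneously by a step-by-step induction on the epochs $b_k$: it writes
\[
\frac{l_k}{g(l_k)}=\frac{E_{\mbox{\scriptsize{rem}}}(b_k)}{B_{\mbox{\scriptsize{rem}}}(b_k)}=\frac{l_{k-1}}{g(l_{k-1})}+\frac{\ETx_j}{B_{\mbox{\scriptsize{rem}}}(b_{k-1})\gamma}
\]
with $\gamma<1$, and then uses the induction hypothesis $\tfrac{l_{k-1}}{g(l_{k-1})}\ge \tfrac{\ETx(b_{k-1})}{B_0}$ together with $B_{\mbox{\scriptsize{rem}}}(b_{k-1})\gamma<B_0$ to conclude; the monotonicity $l_k>l_{k-1}$ falls out of the same identity. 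You instead decouple the two statements: you first prove monotonicity of $l$ directly from the update rule and the strict monotonicity of $g(p)/p$, and then obtain the inequality in one stroke via the global identity $\ETx(m)=U(m)+E_{\mbox{\scriptsize{rem}}}(m)$ and the weighted-average bound $\tfrac{g(l)}{l}U(m)\le B(m)$. Your route avoids the somewhat opaque algebraic manipulation in the paper and makes it explicit that only the decreasing part of P3 is used; the paper's route has the (minor) advantage of needing only a single induction. You also correctly sharpen the equality case: the quantity $\tfrac{\ETx(t)}{l}g(l)$ equals $B_0$ on the entire first segment $[T_{\mbox{\scriptsize{start}}},b_2)$, not just at $t=T_{\mbox{\scriptsize{start}}}$, which the paper's phrasing slightly overstates.
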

\begin{proof}
After time $T_{start}$, power of $\mathsf{ON}$ is updated at each transmitter energy arrival epoch $\tau_j$. Hence, $b_i =\tau_j$ for some $j$, and $l_i$ and $\ETx(t)$ remains constant in $t\in[b_i,b_{i+1})$. Therefore, it is enough to prove that $\frac{g(l_i)}{l_i} \le \frac{B_0}{\ETx(b_i)}$ for $i\in\{1,\cdots,M\} $. We prove this by induction on $i \in \{1,2,\cdots,M\}$. 

With $b_1=T_{\mbox{\scriptsize{start}}}$, the base case follows since at time $T_{start}$, $\frac{\ETx(T_{\mbox{\scriptsize{start}}})}{l_1}g(l_1)=B_0$. Now, assume $\frac{g(l_{k-1})}{l_{k-1}}\le \frac{B_0}{\ETx(b_{k-1})}$ to be true for $k\in \{2,\cdots ,M\}$. As $b_k=\tau_j$ for some $j$,
\begin{align*}
\frac{l_{k}}{g(l_{k})}&=\frac{E_{\mbox{\scriptsize{rem}}}(b_{k})}{B_{\mbox{\scriptsize{rem}}}(b_{k})},
\\
&=\frac{E_{\mbox{\scriptsize{rem}}}(b_{k-1})-l_{k-1} (b_k-b_{k-1})+E_{j}}{B_{\mbox{\scriptsize{rem}}}(b_{k-1})-g(l_{k-1}) (b_k-b_{k-1})},
\\
&\stackrel{(a)}{=}\frac{l_{k-1}}{g(l_{k-1})}+\frac{E_{j}}{B_{\mbox{\scriptsize{rem}}}(b_{k-1})\gamma}
\\
&\stackrel{(b)}{>}\frac{\ETx(b_{k-1})}{B_0}+\frac{E_{j}}{B_0}
\\
&=\frac{{\ETx(b_{k})}}{B_0}.
\end{align*}
where $(a)$ follows from $\frac{B_{\mbox{\scriptsize{rem}}}(b_{k-1})}{E_{\mbox{\scriptsize{rem}}}(b_{k-1})}=\frac{g(l_{k-1})}{l_{k-1}}$ and defining $\gamma=\left( 1-\frac{l_{k-1}(b_k-b_{k-1})}{E_{\mbox{\scriptsize{rem}}}(b_{k-1})} \right)< 1$, $(b)$ uses induction hypothesis for the first term, along with $B_{\mbox{\scriptsize{rem}}}(b_{k-1})\gamma< B_0$ for the second term. This completes the proof of Lemma \ref{lemma_online_inequality}.
From $(a)$, we can see that $g(l_k)/l_k<g(l_{k-1})/l_{k-1}$. Hence, by monotonicity of $g(p)/p$, 
\begin{equation}
l_{k}>l_{k-1}, \;\;\forall k \in\{2,\cdots,M\},
\label{eq:online_monotonic_power}
\end{equation}
proving Lemma \ref{online_power}.
\end{proof}

Lemma \ref{online_time} establishes that the start time of $\mathsf{ON}$ must be earlier than the finish time of the optimal offline algorithm. Let $T_{\text{start}}(\sigma,\rho)$ be the starting time of $\mathsf{ON}$ for input $(\sigma,\rho)$.
\begin{lemma} 
With $\mathsf{ON}$, for any input $(\sigma,\rho)$, $T_{\text{start}}(\sigma,\rho) <T_{O}(\sigma,\rho)$.
\label{online_time}
\end{lemma}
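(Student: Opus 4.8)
The plan is to prove the contrapositive via a direct feasibility argument: show that any feasible offline policy (in particular the optimal one) cannot finish strictly before $T_{\text{start}}(\sigma,\rho)$, so the optimal finish time $T_O(\sigma,\rho)$ satisfies $T_O(\sigma,\rho) \ge T_{\text{start}}(\sigma,\rho)$; in fact I want the strict inequality, so I will argue that equality is impossible as well. The key observation is the definition of $T_{\text{start}}$ in \eqref{online_T_start}: $T_{\text{start}}$ is the \emph{earliest} time $t$ at which $\TRx(t)\,g\!\left(\ETx(t)/\TRx(t)\right)\ge B_0$. Hence for every $t<T_{\text{start}}$ we have $\TRx(t)\,g\!\left(\ETx(t)/\TRx(t)\right) < B_0$.

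First I would take any feasible offline policy $P=\{\bm p,\bm s,N\}$ for Problem \eqref{pb2} with finish time $T_P=s_{N+1}$, and suppose for contradiction that $T_P < T_{\text{start}}(\sigma,\rho)$. By feasibility, $P$ transmits $B_0$ bits by time $T_P$ while respecting the transmitter energy-neutrality constraint \eqref{pb2_constraint_energy} and the receiver on-time constraint \eqref{pb2_constraint_time}. The total energy it can have consumed at the transmitter by time $T_P$ is at most $\ETx(T_P)$, and the total receiver on-time it can have used is at most $\TRx(T_P)$. Now I invoke the standard concavity bound: among all ways of transmitting using a fixed energy budget $E=\ETx(T_P)$ over a fixed total on-time $C\le \TRx(T_P)$, the number of bits is maximized by transmitting at the constant power $E/C$ (by Jensen's inequality applied to the concave $g$, exactly as in the proof of Lemma \ref{lemma_increasing_power}), giving at most $C\,g(E/C)$ bits; and since $g(p)/p$ is decreasing (property P3), $C\,g(E/C)$ is non-decreasing in $C$, so this is at most $\TRx(T_P)\,g\!\left(\ETx(T_P)/\TRx(T_P)\right)$. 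Therefore $P$ transmits at most $\TRx(T_P)\,g\!\left(\ETx(T_P)/\TRx(T_P)\right)$ bits by time $T_P$.

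Since $T_P < T_{\text{start}}(\sigma,\rho)$, the defining minimality of $T_{\text{start}}$ gives $\TRx(T_P)\,g\!\left(\ETx(T_P)/\TRx(T_P)\right) < B_0$, so $P$ transmits strictly fewer than $B_0$ bits by time $T_P$, contradicting feasibility ($B(T_P)=B_0$). Hence no feasible policy finishes before $T_{\text{start}}(\sigma,\rho)$, and in particular $T_O(\sigma,\rho) \ge T_{\text{start}}(\sigma,\rho)$. To upgrade to the strict inequality claimed in the lemma, note that at time $t=T_{\text{start}}$ the bound $\TRx(T_{\text{start}})\,g\!\left(\ETx(T_{\text{start}})/\TRx(T_{\text{start}})\right)\ge B_0$ could hold with equality, in which case an offline policy finishing \emph{exactly} at $T_{\text{start}}$ would need to use all of $\ETx(T_{\text{start}})$ and the full $\TRx(T_{\text{start}})$ at constant power from time $0$; but $\TRx(0)$ and $\ETx(0)$ are the energies available at the origin, and the on-time $\TRx(t)$ only reaches its $T_{\text{start}}$-value at $t=T_{\text{start}}$ (energy/time arriving at $T_{\text{start}}$ cannot be used before it arrives), so such a policy cannot have delivered any bits strictly before the last arrival is used — forcing it to in fact run past $T_{\text{start}}$. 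Thus $T_O(\sigma,\rho) > T_{\text{start}}(\sigma,\rho)$.

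The main obstacle is making the ``constant power is optimal'' step fully rigorous in the presence of the receiver ON--OFF constraint: one must be careful that the relevant budgets are exactly $\ETx(T_P)$ (energy, possibly not all used) and the receiver on-time $C(T_P)\le\TRx(T_P)$ (not $T_P$ itself, since the receiver may be off for part of $[0,T_P]$), and then apply Jensen to $g$ over the intervals where the receiver is on, followed by monotonicity of $C\mapsto C\,g(E/C)$ from P3. Once the bound ``bits by time $T_P$ $\le \TRx(T_P)\,g(\ETx(T_P)/\TRx(T_P))$'' is in hand, the rest is immediate from the definition of $T_{\text{start}}$. A cleaner alternative that avoids the boundary fuss for strictness is simply to observe that $\mathsf{ON}$ starts at $T_{\text{start}}$ and needs a strictly positive duration to deliver $B_0$ bits, whereas the lemma only needs $T_{\text{start}} < T_O$; if the authors intend the weak form it suffices to note that any offline policy, starting at the origin, must on account of causality use energy/time that only arrives over $[0,T_{\text{start}}]$ and hence strictly more time than $0$, but since the offline optimum obeys $T_O \ge T_{\text{start}}$ by the argument above, and a policy achieving $T_O = T_{\text{start}}$ is ruled out by the causality/boundary remark, we get $T_{\text{start}}(\sigma,\rho) < T_O(\sigma,\rho)$.
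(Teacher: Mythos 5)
Your proposal is correct and takes essentially the same route as the paper: bound the bits deliverable by any policy finishing by a given time via Jensen's inequality on the concave $g$ plus monotonicity of $g(p)/p$ (so at most $\TRx\, g(\ETx/\TRx)$ evaluated at the available energy/on-time), and contradict the minimality in the definition of $T_{\text{start}}$. The only difference is bookkeeping: the paper obtains strictness in one shot by noting that a policy ending at $T_{\text{start}}$ can only have used $\ETx(T_{\text{start}}^-)$ and $\TRx(T_{\text{start}}^-)$, for which the bound is strictly below $B_0$, whereas you first prove the weak inequality and then patch the boundary case with the same (slightly more informally stated, but valid) causality observation.
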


\begin{proof}We prove this Lemma via contradiction. We fix an input $(\sigma,\rho)$ and show the result. We drop the suffix $(\sigma,\rho)$ for each of presentation. Suppose $T_{\mbox{\scriptsize{start}}} \ge T_{O}$. From \eqref{online_T_start}, either $T_{\mbox{\scriptsize{start}}}=\tau_i$ for some $i$ and/or $T_{\mbox{\scriptsize{start}}}=r_j$ for some $j$.
Let $T_{\mbox{\scriptsize{start}}}=\tau_i$. Since the optimal offline algorithm $\{\bm{p},\bm{s},N\}$ finishes before $T_{\mbox{\scriptsize{start}}}$ (which follows from our hypothesis), at the start time of the online algorithm,
the maximum (cumulative) energy utilized by the optimal offline algorithm $\{\bm{p},\bm{s},N\}$ is at most the energy arrived till time $T_{\mbox{\scriptsize{start}}}^-$. So, 
\begin{equation}
\sum_{i:p_i\neq 0}p_i(s_{i+1}-s_{i})\le \ETx(T_{\mbox{\scriptsize{start}}}^-)=\ETx(T_{\mbox{\scriptsize{start}}})-\ETx_i\neq \ETx(T_{\mbox{\scriptsize{start}}}).
\label{onilne:1}
\end{equation}
Similarly, if $T_{\mbox{\scriptsize{start}}}=r_j$, then the maximum time for which the receiver can be \textit{on} is
$\TRx(T_{\mbox{\scriptsize{start}}}^-)$. So, 
\begin{equation}\sum_{i:p_i\neq 0}(s_{i+1}-s_{i})\le\TRx(T_{\mbox{\scriptsize{start}}}^-)=\TRx(T_{\mbox{\scriptsize{start}}})-\TRx_j\neq \TRx(T_{\mbox{\scriptsize{start}}}).
\label{onilne:2}
\end{equation}


Therefore, the total number of bits transmitted by the optimal offline algorithm $\{\bm{p},\bm{s},N\}$ is given by

\begin{align}
\nonumber\sum_{i=1, p_i\neq 0}^{N} & g(p_i)(s_{i+1}-s_{i})
\\
&\nonumber \stackrel{(a)}{\le}g\left(\frac{\sum_{i:p_i\neq 0}p_i(s_{i+1}-s_{i})}{\sum_{j:p_j\neq 0}(s_{j+1}-s_{j})}\right)\sum_{j:p_j\neq 0} (s_{j+1}-s_{j}),
\\
&\nonumber\stackrel{(b)}\le g\left(\frac{\ETx(T_{\mbox{\scriptsize{start}}}^-)}{\TRx(T_{\mbox{\scriptsize{start}}}^-)}\right)\TRx(T_{\mbox{\scriptsize{start}}}^-)
\\
&\stackrel{(c)}{<}B_0,\label{online_eq_2}
\end{align}
where $(a)$ follows from Jensen's inequality since $g(p)$ is concave, $(b)$ follows from monotonicity of $g(p)/p$ and \eqref{onilne:1}, \eqref{onilne:2},
and $(c)$ follows from \eqref{online_T_start}. From \eqref{online_eq_2}, we can conclude that the optimal offline algorithm transmits $\sum_{i=1,\ p_i\neq 0}^{N} g(p_i)(s_{i+1}-s_{i})$ bits which is less than $B_0$, and therefore we arrive at a contradiction.
\end{proof}
Finally, Theorem \ref{thm:onlinecomp} proves that $\mathsf{ON}$ finishes strictly before twice the time taken by the optimal offline algorithm.
\begin{theorem}\label{thm:onlinecomp}
The competitive ratio of $\mathsf{ON}$ is less than $2$.
\end{theorem}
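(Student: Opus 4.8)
The plan is to write the finish time of $\mathsf{ON}$ as $T_{\mathsf{ON}}=T_{\mathrm{start}}+(T_{\mathsf{ON}}-T_{\mathrm{start}})$ and bound each piece by $T_O$: the waiting time strictly, which is exactly Lemma \ref{online_time}, and the active transmission time non-strictly, so that their sum is strictly below $2T_O$. For the second piece I would first make precise the notion, implicit in the description of $\mathsf{ON}$, of its \emph{planned finish time} at a running instant $t$, namely $t+E_{\mathrm{rem}}(t)/l(t)$ where $l(t)$ is the current power. Between two consecutive transmitter energy arrivals the identity $\frac{E_{\mathrm{rem}}(t)}{l(t)}g(l(t))=B_{\mathrm{rem}}(t)$ is preserved, so this quantity is constant there; at an arrival the power jumps up (Lemma \ref{online_power}) while the bits-to-go are unchanged, so by $g(p)/p$ decreasing the planned finish time strictly drops. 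Hence the planned finish time is non-increasing in $t$ and equals $T_{\mathsf{ON}}$ on the final inter-arrival interval, giving, for every instant $t$ at which $\mathsf{ON}$ is still transmitting, $T_{\mathsf{ON}}\le t+\frac{E_{\mathrm{rem}}(t)}{l(t)}\le t+\frac{\ETx(t)}{l(t)}$, the last step using $E_{\mathrm{rem}}(t)\le\ETx(t)$.

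Next I would evaluate this at $t=T_O$. If $\mathsf{ON}$ has already finished by $T_O$ the theorem is immediate, so assume it is still running, which is consistent since $T_{\mathrm{start}}<T_O$ by Lemma \ref{online_time}; thus $T_O\in(T_{\mathrm{start}},T_{\mathsf{ON}})$. Writing $l$ for the power of $\mathsf{ON}$ at $T_O$, we get $T_{\mathsf{ON}}\le T_O+\frac{\ETx(T_O)}{l}$, while Lemma \ref{lemma_online_inequality} gives (strictly, because $T_O\neq T_{\mathrm{start}}$) $\frac{\ETx(T_O)}{l}g(l)<B_0$. It remains to bound $B_0$ from the offline side: the optimal offline policy sends $B_0=\sum_{i:p_i\neq0}g(p_i)(s_{i+1}-s_i)$ bits by $T_O$ using energy $U(T_O)\le\ETx(T_O)$ over on-time $C_{\mathrm{off}}:=\sum_{i:p_i\neq0}(s_{i+1}-s_i)\le T_O$ (the on-time cannot exceed the finish time). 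Jensen's inequality (concavity of $g$) and monotonicity of $g$ give $B_0\le C_{\mathrm{off}}\,g\!\left(\ETx(T_O)/C_{\mathrm{off}}\right)$, and since $x\mapsto x\,g(\ETx(T_O)/x)$ is nondecreasing (equivalently $g(p)/p$ nonincreasing) this yields $B_0\le T_O\,g\!\left(\ETx(T_O)/T_O\right)$.

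Combining the two displays, $\frac{\ETx(T_O)}{l}g(l)<T_O\,g(\ETx(T_O)/T_O)$; rewriting each side as $\ETx(T_O)$ times a value of $g(p)/p$ and using the \emph{strict} monotonicity of $g(p)/p$ gives $\frac{\ETx(T_O)}{l}<T_O$. Plugging back, $T_{\mathsf{ON}}\le T_O+\frac{\ETx(T_O)}{l}<2T_O$, which is the assertion. As a by-product I would also note that the same one-line estimate applied at $t=T_{\mathrm{start}}$, together with the definition \eqref{online_T_start} and $g(p)/p$ decreasing, gives $T_{\mathsf{ON}}-T_{\mathrm{start}}\le \ETx(T_{\mathrm{start}})/l_1\le \TRx(T_{\mathrm{start}})$, so $\mathsf{ON}$ never exhausts the receiver on-time and is therefore a feasible policy.

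The main obstacle, and the step deserving the most care, is the ``planned finish time is monotone'' claim: it is intuitively clear from the description of $\mathsf{ON}$ but must be argued from the invariant $\frac{E_{\mathrm{rem}}}{l}g(l)=B_{\mathrm{rem}}$ together with Lemma \ref{online_power}, and one must handle ties carefully (an energy arrival exactly at $T_O$, the endpoint conventions for $l(\cdot)$ and $E_{\mathrm{rem}}(\cdot)$) as well as the degenerate case $B_0=0$, $T_O=0$. Once Lemmas \ref{online_power}, \ref{lemma_online_inequality} and \ref{online_time} are in hand, everything else is a short manipulation using only the structural properties P1--P3 of $g$.
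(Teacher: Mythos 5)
Your proposal is correct and follows essentially the same route as the paper: the paper bounds the residual transmission time after the last power-change epoch $b_k<T_O$ by $B_{\mbox{\scriptsize rem}}(b_k)/g(l_k)=E_{\mbox{\scriptsize rem}}(b_k)/l_k\le\ETx(T_O^-)/l_k\le T_O$ (via Lemmas \ref{online_power}, \ref{lemma_online_inequality}, the Jensen bound on the offline bits, and monotonicity of $g(p)/p$), then adds $b_k<T_O$; your ``planned finish time is non-increasing'' invariant is just a repackaging of that same chain, evaluated at $t=T_O$ instead of at $b_k$. The only caveat is your placement of the strict inequality: in the edge case where $T_O$ falls before the first energy arrival after $T_{\mbox{\scriptsize start}}$, Lemma \ref{lemma_online_inequality} holds with equality at $T_O$, and strictness must instead come from $E_{\mbox{\scriptsize rem}}(T_O)<\ETx(T_O)$ (or, as in the paper, from $b_k<T_O$) --- a repair you have already flagged under ``handle ties carefully.''
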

\begin{proof}




Let $\mathsf{ON}$ transmit with power $l_k$ at time $T_{O}^-$.
Since $T_{\mbox{\scriptsize{start}}}<T_{O}$ by Lemma \ref{online_time}, ${l}_k > 0$. Let $b_k<T_{O}$ be the time where transmission starts with power $l_k$. 
By definition, $\sum_{i=k}^{M}g(l_i)(b_{i+1}-b_i)=B_{\mbox{\scriptsize{rem}}}(b_k)$.
From  Lemma \ref{online_power},
\begin{equation}
(b_{N+1}-b_k)\le\frac{B_{\mbox{\scriptsize{rem}}}(b_k)}{g(l_k)}=
\frac{E_{\mbox{\scriptsize{rem}}}(b_k)}{l_k}\le \frac{\ETx(b_k)}{l_k}\le \frac{\ETx(T_{O}^-)}{l_k}.
\label{eq_online_time_1}  
\end{equation}
Applying Lemma \ref{lemma_online_inequality} at time $T_{O}^-$,
\begin{equation}
\frac{\ETx(T_{O}^-)}{l_k}g(l_k)\le B_0\stackrel{(a)}{\le}T_{O}\; g\left(\frac{\ETx(T_{O}^-)}{T_{O}}\right),
\label{eq_online_time_2}
\end{equation}
where $(a)$ holds because the maximum number of bits sent by the optimal offline algorithm by time $T_{O}$ can be bounded by $T_{O}\, g\left(\frac{\ETx(T_{O}^-)}{T_{O}}\right)$ due to concavity of $g(p)$.
By monotonicity of $g(p)/p$, from \eqref{eq_online_time_2}, it follows that  
$\frac{\ETx\left(T_{O}^-\right)}{l_k}\le T_{O}$.
Combining this with \eqref{eq_online_time_1},
$(b_{{N}+1}-b_k)\le T_{O}$.
As $b_k<T_{O}$, we calculate the competitive ratio as,

\begin{equation*}
\mathsf{r}=\max_{\sigma\in{\boldsymbol \sigma},\rho\in{\boldsymbol \rho}} \frac{T_{\mathsf{ON}}(\sigma,\rho)}{T_O(\sigma,\rho)} = \dfrac{(b_{{N}+1}-b_k)+b_k}{T_{O}} <  2.
\end{equation*}
\end{proof}

The next Theorem establishes that $\mathsf{ON}$ is an optimal online algorithm by showing that the competitive ratio of any online algorithm is arbitrarily close to $2$.
\begin{theorem}
 \label{thm:onlinelowerbound}
$\mathsf{ON}$ is an optimal online algorithm.
\end{theorem}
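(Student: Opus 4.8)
The plan is to combine the upper bound already in hand with a matching lower bound. Since Theorem~\ref{thm:onlinecomp} shows $T_{\mathsf{ON}}(\sigma,\rho)/T_O(\sigma,\rho)<2$ for every input, we have $\mathsf{r}(\mathsf{ON})=\sup_{\sigma,\rho}T_{\mathsf{ON}}/T_O\le 2$; therefore it suffices to prove that \emph{no} online algorithm can do better, i.e.\ for every $A\in\mathbf{A}$ and every $\varepsilon>0$ there is an input $(\sigma,\rho)$ with $T_A(\sigma,\rho)/T_O(\sigma,\rho)>2-\varepsilon$. This yields $\mathsf{r}(A)\ge 2$ for all $A$, hence $\mathsf{r}(\mathsf{ON})=2=\min_{A}\mathsf{r}(A)$, which is exactly the assertion that $\mathsf{ON}$ is optimal (and incidentally pins down $\mathsf{r}=2$ in \eqref{defn:compratio}).

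For the lower bound I would use a two-sequence adversary. Fix $A$ and $\varepsilon$ and construct two inputs $(\sigma_1,\rho_1)$ and $(\sigma_2,\rho_2)$ that are \emph{identical on an initial interval} $[0,t^\star]$ and diverge only afterwards; since $A$ is causal, its state at $t^\star$ (bits delivered, transmitter energy spent, receiver on-time spent) is the same under both inputs. On the common prefix the resources are chosen to be just short of what is needed to deliver $B_0$ bits, so that $T_{\mathrm{start}}$ of \eqref{online_T_start} --- and, by the argument behind Lemma~\ref{online_time}, also $T_O$ --- exceed $t^\star$ under both continuations; no causal algorithm can finish on $[0,t^\star]$. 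Continuation~$1$ then replenishes \emph{transmitter energy} and nothing else, so that the offline optimum, exploiting non-causal knowledge, spends aggressively on the prefix and completes quickly after $t^\star$; continuation~$2$ instead replenishes \emph{receiver on-time}, so that the offline optimum conserves transmitter energy on the prefix and completes with a long low-power tail. The essential tension is that the resource the controller should have conserved on $[0,t^\star]$ is transmitter energy under $(\sigma_2,\rho_2)$ but receiver on-time under $(\sigma_1,\rho_1)$, and these two prescriptions are incompatible.

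The proof then splits according to how $A$ uses the prefix. If $A$ behaves ``conservatively'' enough to stay feasible and near-optimal on $(\sigma_1,\rho_1)$ --- in particular keeping enough receiver on-time in reserve --- then on $(\sigma_2,\rho_2)$ it has spent the first $t^\star$ units of time essentially wastefully and must still send almost all of $B_0$ with the scarce transmitter energy after $t^\star$, so $T_A(\sigma_2,\rho_2)\ge t^\star+\bigl(T_O(\sigma_2,\rho_2)-o(1)\bigr)$ while $T_O(\sigma_2,\rho_2)$ is governed by exactly that slow completion; choosing $t^\star$ comparable to $T_O(\sigma_2,\rho_2)$ forces the ratio towards $2$. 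Conversely, if $A$ is ``aggressive'' on the prefix --- draining transmitter energy and/or burning receiver on-time --- it either becomes infeasible on $(\sigma_1,\rho_1)$ (cannot complete $B_0$ bits with the energy that ever arrives there), an unbounded ratio, or it is left so little receiver on-time that, because $g(p)/p\to 0$ makes short high-power bursts deliver vanishingly few bits, the late transmitter-energy burst on $(\sigma_1,\rho_1)$ cannot be converted into the missing bits in time, again pushing $T_A(\sigma_1,\rho_1)$ towards $2\,T_O(\sigma_1,\rho_1)$. Throughout, the estimates are the same tools used in Theorem~\ref{thm:onlinecomp}: concavity of $g$ via Jensen, monotonicity of $g(p)/p$, and $g(p)/p\to 0$.

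The main obstacle is the simultaneous calibration of $t^\star$ and of the four resource amounts so that (i) the shared prefix is long enough that $A$ is genuinely committed, (ii) each continuation is feasible for the offline optimum, (iii) the offline optimum on the ``aggressive'' continuation really does gain a factor $2$ from non-causal information, and (iv) the argument is robust to every interpolating online strategy, not just the two extremes --- i.e.\ that the feasibility trade-off between hoarding transmitter energy and hoarding receiver on-time is tight enough that no intermediate split beats $2-\varepsilon$ on the better of the two inputs. Verifying (iv) for a general $g$ satisfying only P1--P3 (rather than, say, $g=\log$) is where the bulk of the technical work lies; the structural facts that make it go through are that the competitive penalty is driven purely by \emph{time already elapsed}, which causal information cannot undo, and that $g(p)/p\to 0$ prevents the online algorithm from ``catching up'' with a late high-power burst.
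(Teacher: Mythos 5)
Your high-level plan --- keep the $<2$ upper bound from Theorem~\ref{thm:onlinecomp} and match it with a two-sequence adversary sharing a common prefix, so that causality pins the algorithm's prefix behavior --- is the right skeleton and is indeed how the paper proceeds. But the proposal has a genuine gap exactly where you flag it yourself: item (iv), ruling out every \emph{intermediate} online strategy, is not an implementation detail to be calibrated later; it is the entire content of the lower bound, and your sketch offers no mechanism that actually closes it. A quantitative trade-off between ``hoarding transmitter energy'' and ``hoarding receiver on-time,'' with a case split into conservative/aggressive behaviors and an appeal to $g(p)/p\to 0$, leaves a continuum of interpolating strategies for which you would have to show that the worse of the two continuations still forces a ratio near $2$; nothing in the proposal establishes that the trade-off is tight, and for a general concave $g$ there is no a priori reason it should be.

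The paper sidesteps this continuum entirely with a construction your proposal is missing: it keeps the \emph{receiver} profile identical across both inputs (a single arrival $\TRx_0=T$ at $t=0$) and makes the first transmitter sequence $\sigma_1$ resource-tight, i.e.\ $B_0=Tg(\ETx_0/T)$ with a single energy arrival $\ETx_0$ at $0$. By Jensen's inequality (strict concavity of $g$), the \emph{only} policy that ever transmits $B_0$ bits under $\sigma_1$ is constant power $\ETx_0/T$ on all of $[0,T]$; hence any causal algorithm that deviates on the prefix $[0,\tau_1]$ by using a fraction $\alpha\neq 1-\tau_1/T$ of $\ETx_0$ simply never finishes on $\sigma_1$ and has unbounded ratio. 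Feasibility alone --- not a quantitative penalty --- collapses the whole space of online algorithms to the one behavior $\mathsf{ON}$ uses, and then the second sequence $\sigma_2$ (an extra arrival $\ETx_1$ at $\tau_1$ sized so the offline optimum finishes by about $\tau_1$) drives that unique behavior's ratio to $T_2/T_1\to 2$ as $\ETx_0\to 0$, $T\to\infty$. Your version, which varies the receiver sequence between the two continuations and relies on a soft energy-versus-on-time tension, does not have this forcing step, so as written it is a proof strategy rather than a proof.
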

\begin{proof}
We will construct a set of two energy arrival sequences at the transmitter and the receiver for which the competitive ratio of any online algorithm is arbitrarily close to $2$ for at least one of the two sequences.

In order to calculate a lower bound ${\mathsf r}_{\ell}$ to ${\mathsf r}$ \eqref{defn:compratio}, we consider ${\boldsymbol \sigma}_s \subseteq {\boldsymbol \sigma} \text{ and }{\boldsymbol \rho}_s \subseteq {\boldsymbol \rho}$, a small subset of all possible energy harvesting (EH) sequences. Then, 
\begin{equation}
{\mathsf r} \ge {\mathsf r}_{\ell} = \displaystyle \min_{A\in \mathbf{A}} \max_{\sigma\in{\boldsymbol \sigma}_s,\rho\in{\boldsymbol \rho}_s} \frac{T_A(\sigma,\rho)}{T_O(\sigma,\rho)}.
\end{equation}

The idea behind the proof is to construct a set of two possible EH sequences ${\boldsymbol \sigma}_s = \{\sigma_1,\sigma_2\}$ at the transmitter with the same EH profile ${\boldsymbol \rho}_s = \{\rho_1\}$ at the receiver, where, with $\sigma_1$, the online algorithm $\mathsf{ON}$ provides a finish time ratio $\left(=\dfrac{T_{\mathsf{ON}}(\sigma_1,\rho_1)}{T_O(\sigma_1,\rho_1)}\right)$  of $1$, and with $\sigma_2$ it leads to a finish time ratio close to $2$. We then proceed to show that the minimum finish time for ${\boldsymbol \sigma}_s,{\boldsymbol \rho}_s$ over all algorithms in $\mathbf{A}$ is achieved by $\mathsf{ON}$. In doing so, we lower bound ${\mathsf r}_{\ell}$ by a value arbitrarily close to $2$. Combining this with the fact that ${\mathsf r}<2$ for $\mathsf{ON}$ (from Theorem \ref{thm:onlinecomp}), we can say that $\mathsf{ON}$ achieves the optimal competitive ratio. Now, it remains to show that ${\mathsf r}_{\ell}\ge 2^-$. 



Next, we explain the construction of ${\boldsymbol \sigma}_s$. Let $\sigma_1$ consist of only one EH arrival $ \ETx_0 $ at time $\tau_0=0$, and let $\sigma_2$ represent the EH sequence $\{\ETx_0, \ETx_1 \}$ occurring at time $\tau_0=0$ and $\tau_1=1$. We assume that the receiver has only one `time' arrival of $\TRx_0=T$ at time $t=0$, i.e. $\rho = \{T\}$ at time $r_0=0$.
Let $\ETx_0$  and $T>>0$ be chosen such that $B_0= Tg(\frac{\ETx_0}{T})$. Let $\ETx_1$ be such that $B_0=\tau_1g\left(\frac{\ETx_0+\ETx_1}{\tau_1}\right)$. 
The performance of algorithm $\mathsf{OFFM}$ and the online algorithm $\mathsf{ON}$ for energy arrival sequences $\{\sigma_1,\sigma_2\}$ is depicted in Fig. \ref{fig:onlinelowerbound}. Clearly, both $\mathsf{OFFM}$ and $\mathsf{ON}$ follow a constant power transmission policy for $\sigma_1$ where power $\frac{\ETx_0}{T}$ is transmitted from time $0$ to $T$. 
For $\sigma_2$, $\mathsf{OFFM}$ transmits with power $\frac{\ETx_0}{\tau_1}$ from time $0$ to $\tau_1$, and power $\frac{\ETx_1}{T_1-\tau_1}$ from $\tau_1$ to $T_1$, where $T_1$ is calculated by 
\begin{equation}\label{eq:Ctimesigma1}
(T_1-\tau_1)g\left(\frac{\ETx_1}{T_1-\tau_1}\right)=B_0-\tau_1g\left(\frac{\ETx_0}{\tau_1}\right).
\end{equation} 
Compared to this, with $\sigma_2$, $\mathsf{ON}$ transmits with power $\frac{\ETx_0}{T}$ from time $0$ to $\tau_1$ and power $\frac{\ETx_1+\ETx_0(1-\tau_1/T)}{T_2-\tau_1}$ from time $\tau_1$ to $T_2$, where $T_2$ is given by 
\begin{equation}
(T_2-\tau_1)g\left(\frac{\ETx_1+\ETx_0(1-\tau_1/T)}{T_2-\tau_1}\right)=B_0-\tau_1g\left(\frac{\ETx_0}{T}\right).
\end{equation}

Therefore, 
\begin{equation}\label{eq:compratioC}
\frac{T_{\mathsf{ON}}(\sigma_1,\rho_1)}{T_O(\sigma_1,\rho_1)}=1,\text{ while } \frac{T_{\mathsf{ON}}(\sigma_2,\rho_1)}{T_O(\sigma_2,\rho_1)}=\frac{T_2}{T_1}.
\end{equation}

Now consider any online algorithm $A\in \mathbf{A}$. Since, $A$ is assumed to use only causal information regarding energy harvests, it would generate identical power transmission profile with EH sequence $\sigma_1$ and $\sigma_2$ for time $0$ to $\tau_1$. Let $A$ use $\alpha$ fraction of energy $\ETx_0$ till time $\tau_1$.  Hence, we can characterize every online algorithm $A$ by $\alpha$, the fraction of energy it uses in time $[0, \tau_1]$,  and 
\begin{equation}
\mathsf{r}_{\ell} \ge \displaystyle \min_{\alpha \in [0,1]} \max_{\sigma \in \{ \sigma_1, \sigma_2\}} \frac{T_A(\sigma,\rho_1)}{T_O(\sigma,\rho_1)}.
\label{eq:r_l}
\end{equation}
Let us denote the corresponding value of $\alpha$ for algorithm $\mathsf{ON}$ as $\alpha'=1-\tau_1/T$.
The total receiver time being $\TRx_0=T$,
the maximum number of bits that can be transmitted  by any online algorithm $A$ with particular choice of $\alpha$, for EH sequence $\sigma_1$ is given by, 
\begin{equation}
B_{\alpha}=\tau_1g\left(\frac{\alpha\ETx_0}{\tau_1}\right)+(T-\tau_1)g\left(\frac{(1-\alpha)\ETx_0+\ETx_1}{T-\tau_1}\right).
\end{equation}
Because of the concavity of rate function $g$, from \eqref{eq:Ctimesigma1}, we can see that $B_{\alpha}\le B_0$, with equality iff $\alpha=\alpha'$. So, for $\alpha\neq \alpha'$, algorithm $A$ cannot transmit $B_0$ bits with EH sequence $\sigma_1$. Therefore, in the RHS \eqref{eq:r_l}, we only concern ourselves with the performance of online algorithm with $\alpha=\alpha'$, i.e. $\mathsf{ON}$. Hence,
\begin{align}
\mathsf{r}_{\ell} &\ge\max \left( \frac{T_{\mathsf{ON}}(\sigma_1,\rho_1)}{T_O(\sigma_1,\rho_1)},\frac{T_{\mathsf{ON}}(\sigma_2,\rho_1)}{T_O(\sigma_2,\rho_1)}\right)
\\
&=\frac{T_{\mathsf{ON}}(\sigma_2,\rho_1)}{T_O(\sigma_2,\rho_1)}=\frac{T_2}{T_1},
\end{align}
where the last equality follows from \eqref{eq:compratioC}.

Therefore, we only need to show that $\frac{T_2}{T_1}\ge 2^-$ for $\mathsf{ON}$ by choosing parameters $\ETx_0$ and $T$. It is difficult to obtain a closed form expression for $\frac{T_2}{T_1}$ in terms of relevant parameters, hence we lower bound $\frac{T_2}{T_1}$ by constructing an example sequence $\{ \sigma_1,\sigma_2, \rho_1\}$ as follows.
With $\ETx_0=10^{-4}$, $T=10^4$,  and $g(p)=0.5\log_2(1+p)$, we get $\frac{T_2}{T_1}=2-2.49\times 10^{-4}$. Similarly, by increasing $T$ and decreasing $\ETx_0$ towards $0$, we can keep pushing $\frac{T_2}{T_1}$ arbitrarily close to $2$. This completes the proof.
\end{proof}
{\it Discussion:} In this section, we derived an optimal online algorithm when EH is employed at both the transmitter and the receiver. First, we proposed an online algorithm and showed that it finishes the transmission of required number of bits in at most twice the time an optimal offline algorithm takes  knowing all energy arrivals non-causally. Moreover, the online algorithm is independent of the energy arrival distributions both at the transmitter and the receiver, so has built-in robustness. Also, note that the proof of Theorem \ref{thm:onlinecomp} does not explicitly require to know the exact structure of the optimal offline algorithm. Thereafter, to complete the characterization of optimal online algorithms, we showed that no online algorithm can do better than the proposed online algorithm by constructing a set of energy arrival sequences for which any online algorithm will have competitive ratio arbitrarily close to two for at least one of the energy arrival sequences. Typically, finding a (tight) lower bound on the competitive ratio for all online algorithms is a hard problem, but we are able to accomplish this for the transmission finish time minimization problem.  

After examining the case of EH being employed at both the transmitter and the receiver with no battery constraint until now in this paper, we next consider the more reasonable model of a finite battery availability at both the transmitter and the receiver, and derive online algorithm with bounded competitive ratio.
\begin{figure}
\centering
  	\centerline{\includegraphics[width=8cm]{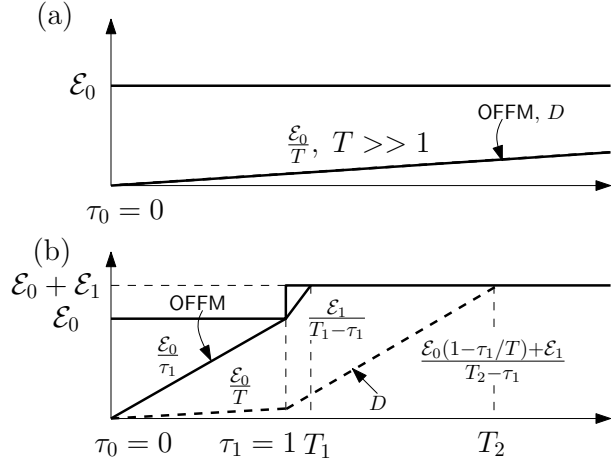}}
	\caption{Transmission policy of $\mathsf{OFFM}$ and the online algorithm $\mathsf{ON}$ for EH profile (a) $\sigma_1$  and (b) $\sigma_2$.}
	\label{fig:onlinelowerbound}
\end{figure}

\section{ONLINE ALGORITHMS WITH FINITE BATTERY AT TRANSMITTER AND RECEIVER}
\label{sec:finite_battery}
In previous sections, an infinite battery capacity was assumed at both the transmitter and the receiver. In this section, to make the discussion more practical, we consider the case when both the transmitter and receiver battery have finite capacity. Also, we consider the online setting for obvious practical reasons.

 \subsection{EH only at the transmitter}
 \label{sec:finbatonlinetrans}
For ease of exposition, we first discuss the finite battery model where only the transmitter is EH powered, while the receiver is powered by a conventional power source. We extend the analysis to include an EH powered receiver in Section \ref{sec:finbatonlinereceiver}. 
With finite battery capacity, similar to Section \ref{sec:onlinereceiver}, under a worst case input for energy harvests, the online algorithm might not finish transmission of $B_0$ bits ever, while an offline algorithm can, making the competitive ratio infinity. Thus, we consider the non-degenerate online setting, where the amount of energy arriving at any instant is a random variable whose probability density function (PDF) $f(x)$ is known ahead of time. Note that on the realization basis, only causal information is revealed to any online algorithm. 

For simplicity, we divide time into slots of length $w$, with $\ETx_i$ amount of energy arriving in the $i^{th}$ slot. We also assume that energy is harvested at the beginning of the  slot. The amount of energy arriving in any slot $i$, $\ETx_i$, is assumed to follow an i.i.d. PDF $f(x)$ for all $i\ge 0$. The transmitter is assumed to have a  battery capacity $\mathcal{C}_t$. Thus without loss of generality we assume that $f(x)=0$ for $x>\mathcal{C}_t$.  Following Problem \eqref{pb1}, we want to transmit $B_0$ bits in total in minimum time under this online setting with finite battery capacity at the transmitter. The system model is shown in Fig. \ref{online_batt} (a). With randomized energy inputs, we consider the expected competitive ratio as the performance metric to design online algorithms, that is defined as the expectation of the ratio of the  time taken by an online algorithm and the time taken by an optimal offline algorithm. 
We next present an online algorithm which we call Accumulate\&Dump to upper bound the expected competitive ratio.

\textit{Algorithm Accumulate\&Dump:} In the first iteration, algorithm Accumulate\&Dump waits for $\mathcal{N}$ slots such that at least $\mathcal{C}_t/c$ amount of battery capacity is filled, where $c\ge 1$ is a positive constant in the algorithm. The value  $c$ is dependent on $f(x)$ and we will calculate the best choice of $c$ for a given distribution $f(x)$ while analysing the algorithm. 
Clearly $\mathcal{N}$ is a random variable given by, 
\begin{align}
&\mathcal{N}=\min \left\{ n : \sum_{i=0}^{n-1} \ETx_i\ge \frac{\mathcal{C}_t}{c}\right\}.
\label{min_slots}
\end{align}
After accumulating at least $\mathcal{C}_t/c$ amount of energy, Accumulate\&Dump uses all of the available energy $\sum_{i=0}^{\mathcal{N}-1} \ETx_i$ in the battery with a constant rate in $w$ amount of time i.e. within the $\mathcal{N}^{th}$ slot. With an empty battery at the end of $\mathcal{N}^{th}$ slot, the transmitter starts accumulating energy afresh and continues the above process until it transmits all $B_0$ amount of bits.
\\
\textit{Example:} Fig.\ref{online_batt} (b) shows an example of running Accumulate\&Dump with $\mathcal{N}=2$ in the first iteration and $\mathcal{N}=3$ in the next.
\begin{figure}
\centering
  	\centerline{\includegraphics[width=8cm]{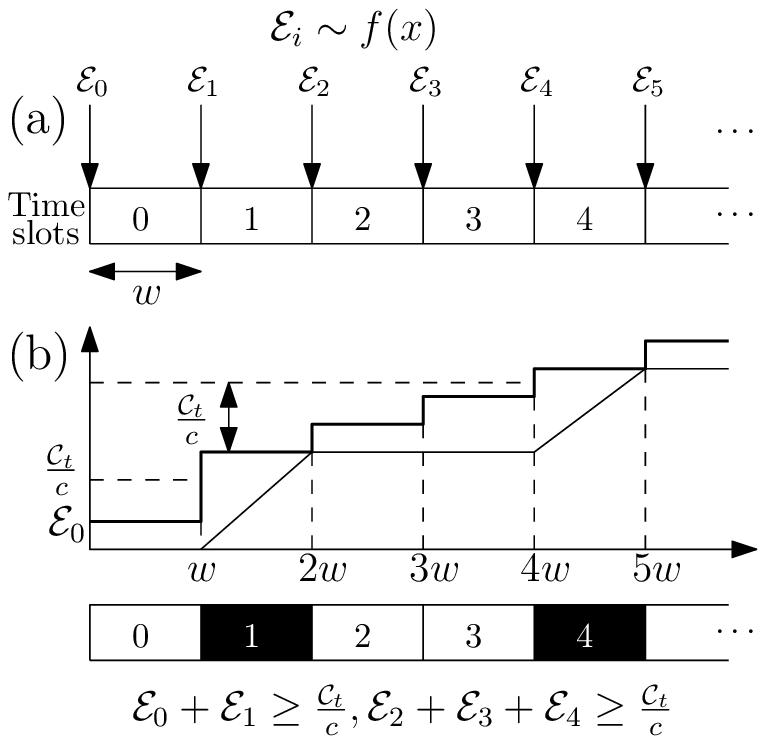}}
	\caption{(a) Transmitter model for slotted energy arrival (b) An example for Accumulate\&Dump algorithm.}
	\label{online_batt}
\end{figure}
\\
\textit{Analysis:} Consider the sum process of the i.i.d. random variables $\ETx_i$, $\sum_{i=0}^{\mathcal{N}-1} \ETx_i$, where 
$\mathcal{N}$ defined in \eqref{min_slots}. Let us denote condition $H$ as
\begin{equation}
H\equiv\left(\sum_{i=0}^{\mathcal{N}-1} \ETx_i\ge \mathcal{C}_t/c\;\text{ and } \;\sum_{i=0}^{\mathcal{N}-2} \ETx_i< \mathcal{C}_t/c\right).
\label{this_is_H}
\end{equation}
Note that the stopping condition in \eqref{min_slots} is equivalent to $H$, as $f(x)=0$ for $x<0$. Lemma \ref{lemma_walds} formulates an expression for the expected value of $\mathcal{N}$. First, we state the form of Walds' equation \cite{grimmett1985probability} that we use in Lemma \ref{walds_equation}.
\begin{lemma}
$\left[\textbf{Walds equation}\right]$ 
If $\mathcal{S}$ is a stopping time with respect to an i.i.d. sequence
$\{X_n : n \ge 1\}$, and if $\mathbf{E}[\mathcal{S}] < \infty $ and $\mathbf{E}[|X_i|] < \infty$, then $\mathbf{E}[\sum_{i=1}^\mathcal{S} X_i]=\mathbf{E}[\mathcal{S}]\mathbf{E}[X_i]$.
\label{walds_equation}
\end{lemma}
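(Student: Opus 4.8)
The plan is to prove Wald's equation by the classical indicator-decomposition argument. First I would rewrite the random sum pathwise as an infinite series,
\begin{equation}
\sum_{i=1}^{\mathcal{S}} X_i = \sum_{i=1}^{\infty} X_i\,\mathbbm{1}\{\mathcal{S}\ge i\},
\end{equation}
which is valid on every sample path. The structural point is that $\{\mathcal{S}\ge i\}=\{\mathcal{S}\le i-1\}^{c}$ is determined by $X_1,\dots,X_{i-1}$ alone — this is exactly the defining property of a stopping time — and is therefore independent of $X_i$. Combined with the i.i.d. hypothesis this gives $\mathbf{E}[X_i\,\mathbbm{1}\{\mathcal{S}\ge i\}]=\mathbf{E}[X_i]\,\mathbf{P}(\mathcal{S}\ge i)=\mathbf{E}[X_1]\,\mathbf{P}(\mathcal{S}\ge i)$.

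Next I would take expectations term by term and sum the resulting series:
\begin{equation}
\mathbf{E}\!\left[\sum_{i=1}^{\mathcal{S}} X_i\right]=\sum_{i=1}^{\infty}\mathbf{E}[X_1]\,\mathbf{P}(\mathcal{S}\ge i)=\mathbf{E}[X_1]\sum_{i=1}^{\infty}\mathbf{P}(\mathcal{S}\ge i)=\mathbf{E}[X_1]\,\mathbf{E}[\mathcal{S}],
\end{equation}
where the last equality is the standard tail-sum formula for the mean of a nonnegative integer-valued random variable, and $\mathbf{E}[X_1]=\mathbf{E}[X_i]$ by identical distribution.

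The only delicate step — and the main obstacle — is justifying the interchange of expectation and the infinite sum. I would dispatch it via Fubini--Tonelli: applying the same decomposition to $|X_i|$ and using independence of $\{\mathcal{S}\ge i\}$ from $X_i$ once more yields $\sum_{i=1}^{\infty}\mathbf{E}[|X_i|\,\mathbbm{1}\{\mathcal{S}\ge i\}]=\mathbf{E}[|X_1|]\,\mathbf{E}[\mathcal{S}]$, which is finite precisely because the hypotheses $\mathbf{E}[|X_i|]<\infty$ and $\mathbf{E}[\mathcal{S}]<\infty$ hold. Absolute summability then licenses the term-by-term exchange, completing the argument. Since this is a textbook result, I would alternatively simply cite \cite{grimmett1985probability} for the full measure-theoretic details and retain the display above as the essential computation.
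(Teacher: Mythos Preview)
Your proof is correct and is precisely the classical indicator-decomposition argument for Wald's identity. However, the paper does not actually prove this lemma at all: it is stated as a named textbook result with a citation to \cite{grimmett1985probability}, and is then invoked as a black box in the subsequent lemma. So there is no ``paper's own proof'' to compare against; you have supplied the standard proof where the authors chose simply to cite it --- and indeed you anticipated this possibility in your final sentence.
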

\begin{lemma}
 \label{lemma_walds}
 $\mathbf{E}[\mathcal{N}]=\dfrac{\mathbf{E}\left[\sum_{i=0}^{\mathcal{N}-1} \ETx_i \right]}{\mathbf{E}[\ETx_0]}$.\end{lemma}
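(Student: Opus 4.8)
The plan is to obtain the identity directly from Wald's equation (Lemma~\ref{walds_equation}), applied to the i.i.d.\ sequence $X_n := \ETx_{n-1}$, $n\ge 1$, with the stopping time $\mathcal{S}:=\mathcal{N}$; with this reindexing $\sum_{n=1}^{\mathcal{N}} X_n = \sum_{i=0}^{\mathcal{N}-1}\ETx_i$, so once the hypotheses of Lemma~\ref{walds_equation} are checked, it yields $\mathbf{E}[\sum_{i=0}^{\mathcal{N}-1}\ETx_i]=\mathbf{E}[\mathcal{N}]\,\mathbf{E}[\ETx_0]$ and the claim follows on dividing by $\mathbf{E}[\ETx_0]$. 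Throughout I would assume $\mathbf{E}[\ETx_0]>0$ (equivalently $P(\ETx_0>0)>0$), which is necessary for the algorithm ever to transmit, and hence for the statement to be meaningful.

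The three hypotheses to verify are that $\mathcal{N}$ is a stopping time, that $\mathbf{E}[|\ETx_0|]<\infty$, and that $\mathbf{E}[\mathcal{N}]<\infty$. Two of these are routine. First, $\mathcal{N}$ is a stopping time with respect to the natural filtration of $\ETx_0,\ETx_1,\dots$: the event $\{\mathcal{N}=n\}$ equals $\{\sum_{i=0}^{n-2}\ETx_i<\mathcal{C}_t/c\}\cap\{\sum_{i=0}^{n-1}\ETx_i\ge\mathcal{C}_t/c\}$, which depends only on $\ETx_0,\dots,\ETx_{n-1}$ --- this is exactly condition $H$ of~\eqref{this_is_H}. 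Second, $\mathbf{E}[|\ETx_0|]=\mathbf{E}[\ETx_0]<\infty$ since $0\le\ETx_0\le\mathcal{C}_t$ by the standing assumption $f(x)=0$ for $x\notin[0,\mathcal{C}_t]$.

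The hard part will be showing $\mathbf{E}[\mathcal{N}]<\infty$. Since $P(\ETx_0>0)>0$, I can fix $\delta>0$ with $P(\ETx_0\ge\delta)>0$ and an integer $k$ with $k\delta\ge\mathcal{C}_t/c$, so that $p:=P\big(\sum_{i=0}^{k-1}\ETx_i\ge\mathcal{C}_t/c\big)\ge P(\ETx_0\ge\delta)^k>0$. Grouping the slots into consecutive blocks of $k$ slots, the events that successive blocks each accumulate at least $\mathcal{C}_t/c$ are independent with probability at least $p$; if block $m$ is the first such block then $\mathcal{N}\le km$, so $\mathcal{N}$ is stochastically dominated by $kG$ with $G$ geometric of parameter $p$. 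Hence $\mathbf{E}[\mathcal{N}]\le k/p<\infty$.

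With the three hypotheses in hand, Lemma~\ref{walds_equation} gives $\mathbf{E}\big[\sum_{i=0}^{\mathcal{N}-1}\ETx_i\big]=\mathbf{E}[\mathcal{N}]\,\mathbf{E}[\ETx_0]$, and since $\mathbf{E}[\ETx_0]>0$ this rearranges to $\mathbf{E}[\mathcal{N}]=\mathbf{E}\big[\sum_{i=0}^{\mathcal{N}-1}\ETx_i\big]/\mathbf{E}[\ETx_0]$, as required.
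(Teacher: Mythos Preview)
Your proof is correct and follows essentially the same route as the paper: verify that $\mathcal{N}$ is a stopping time, that $\mathbf{E}[|\ETx_0|]<\infty$, and that $\mathbf{E}[\mathcal{N}]<\infty$, then invoke Wald's equation. The only difference is in the finiteness step: the paper (Appendix~\ref{AppendixA}) bounds $\mathbf{E}[\mathcal{N}]=\sum_{n\ge 1}P[\mathcal{N}>n]$ by introducing indicators $Y_i=\mathbbm{1}_{\ETx_i>x}$ and summing binomial tail probabilities, whereas you use a block/geometric domination argument; your version is a bit more direct but the underlying idea---pick a threshold hit with positive probability and count how many increments are needed---is the same.
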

\begin{proof} Clearly, $\mathcal{N}$ is a stopping time. The proof directly follows from Lemma \ref{walds_equation}, once we show that $\mathbf{E}[\mathcal{N}]$ is finite that is proved in Appendix \ref{AppendixA}.
\end{proof}

Next, we analyze the competitive ratio of the  online algorithm Accumulate\&Dump. Since, $\ETx_i$'s are random variables, we use expected competitive ratio analysis for Accumulate\&Dump, and prove an upper bound to it in Theorem \ref{online_bat_trans}. In doing so, we primarily consider a class of distributions that satisfy the following condition.

\begin{assumption}
$\mathbf{E}[\ETx_0|\ETx_0\ge \gamma]\le \gamma+\mathbf{E}[\ETx_0],\;\; \forall \gamma \in [0,\mathcal{C}_t).$\label{new_condition}
\end{assumption} This assumption simply means that the expected jump size given that it is larger than $\gamma$ is no more than if the origin is shifted to $\gamma$ and the process takes an i.i.d. jump from there. 
Note that most light-tailed continuous distributions satisfy Assumption \ref{new_condition}. For example, uniform distribution satisfies Assumption \ref{new_condition} with a strict inequality, while the exponential distribution satisfies Assumption \ref{new_condition} with an equality.
\begin{remark} To find a bound on $\mathbf{E}[\mathcal{N}]$, we need an upper bound on 
$\mathbf{E}\left[\sum_{i=0}^{\mathcal{N}-1} \ETx_i \right]$. Towards that end, we need a bound on the expected value of the $\mathcal{N}^{th}$ increment of process $\ETx_i$ given that $\sum_{i=0}^{\mathcal{N}-1} \ETx_i > \frac{\mathcal{C}_t}{c}$.
Without Assumption \ref{new_condition},  knowing that $\sum_{i=0}^{\mathcal{N}-1} \ETx_i > \mathcal{C}_t/c$, there is no easy way of bounding the value ${\mathbf E}\left[\sum_{i=0}^{\mathcal{N}-1} \ETx_i\right]$ except of course the trivial bound of $\mathcal{C}_t$. For example, suppose $\mathcal{E}_i$'s have a Bernoulli distribution over 
$\{0,x\}$ with probability $\{ p, 1-p\}$. If $x$ is large, and we condition on $x> 0$, then $\mathbf{E}[x] = x$. Heavy tailed distributions also do not allow any bound on the expected value of ${\mathbf E}\left[\sum_{i=0}^{\mathcal{N}} \ETx_i\right]$ at the cross-over point. As we will see in Proof of Theorem \ref{online_bat_trans}, Assumption \ref{new_condition} is sufficient to obtain non-trivial bound on the expected jump size given that the jump is larger than a certain threshold. 
\end{remark}
\begin{theorem}
The expected competitive ratio of Accumulate\&Dump algorithm is finite.
\label{online_bat_trans}
\end{theorem}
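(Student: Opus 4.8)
The plan is to bound the expected finish time of Accumulate\&Dump from above and the expected optimal offline finish time from below, then take the ratio. First I would set up the following picture: Accumulate\&Dump proceeds in independent "epochs," where in each epoch it waits $\mathcal{N}$ slots (defined in \eqref{min_slots}) to accumulate at least $\mathcal{C}_t/c$ energy, then dumps all of it in one slot. By the i.i.d.\ assumption on the $\ETx_i$'s, the epoch lengths $\mathcal{N}^{(1)},\mathcal{N}^{(2)},\dots$ are i.i.d., and by Lemma \ref{lemma_walds} (which rests on Wald's equation, Lemma \ref{walds_equation}, and the finiteness of $\mathbf{E}[\mathcal{N}]$ proved in Appendix \ref{AppendixA}), $\mathbf{E}[\mathcal{N}] = \mathbf{E}[\sum_{i=0}^{\mathcal{N}-1}\ETx_i]/\mathbf{E}[\ETx_0]$. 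The key quantitative step is to upper bound $\mathbf{E}[\sum_{i=0}^{\mathcal{N}-1}\ETx_i]$: decompose it as $\mathbf{E}[\sum_{i=0}^{\mathcal{N}-2}\ETx_i] + \mathbf{E}[\ETx_{\mathcal{N}-1}]$; the first term is at most $\mathcal{C}_t/c$ by the definition of $\mathcal{N}$ (condition $H$ in \eqref{this_is_H}), and for the overshoot term I would use Assumption \ref{new_condition}: conditioned on the partial sum $S$ just before the last jump being $\gamma \in [0,\mathcal{C}_t/c)$, the last jump is drawn i.i.d.\ but conditioned to push the sum past $\mathcal{C}_t/c - \gamma$, and Assumption \ref{new_condition} gives $\mathbf{E}[\ETx_{\mathcal{N}-1}\mid \cdot] \le (\mathcal{C}_t/c - \gamma) + \mathbf{E}[\ETx_0] \le \mathcal{C}_t/c + \mathbf{E}[\ETx_0]$. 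Hence $\mathbf{E}[\sum_{i=0}^{\mathcal{N}-1}\ETx_i] \le 2\mathcal{C}_t/c + \mathbf{E}[\ETx_0]$, so $\mathbf{E}[\mathcal{N}] \le (2\mathcal{C}_t/c)/\mathbf{E}[\ETx_0] + 1$, a finite constant depending only on $f$ and $c$.

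Next I would count epochs. In each epoch, Accumulate\&Dump transmits with constant power $\big(\sum_{i=0}^{\mathcal{N}-1}\ETx_i\big)/w \ge (\mathcal{C}_t/c)/w$ for one slot of length $w$, hence sends at least $w\, g\!\big(\mathcal{C}_t/(cw)\big)$ bits per epoch (using that each dump uses at least $\mathcal{C}_t/c$ energy and monotonicity of $g$). Therefore the number of epochs $K$ needed to deliver $B_0$ bits is deterministically bounded: $K \le \big\lceil B_0 / (w\, g(\mathcal{C}_t/(cw))) \big\rceil =: K_0$, a finite constant. The total finish time of Accumulate\&Dump is at most $\sum_{j=1}^{K_0} w(\mathcal{N}^{(j)} )$ slots-worth of time (waiting plus one dump slot each, all absorbed into $\mathcal{N}^{(j)}$ counting the dump slot), so $\mathbf{E}[T_{A\&D}] \le w K_0\, \mathbf{E}[\mathcal{N}] < \infty$. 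For the lower bound on $T_O$: any offline (or online) algorithm must respect causality of energy arrivals and the finite battery; crucially, for the \emph{same realization}, the optimal offline finish time is at least the time of the first energy arrival plus the minimal transmission time, and more simply $T_O \ge w$ always (at least one slot is needed), or better, $T_O$ is lower bounded by a constant times the number of slots over which $\ETx$ accumulates enough energy — but the cleanest route is to observe that on any realization $T_{A\&D}/T_O \le T_{A\&D}/w$ since $T_O \ge w$ (and in fact $T_O \ge$ the finish time is at least one slot). Then $\mathbf{E}[T_{A\&D}/T_O] \le \mathbf{E}[T_{A\&D}]/w \le K_0\,\mathbf{E}[\mathcal{N}] < \infty$.

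I expect the main obstacle to be making the overshoot argument rigorous — specifically, handling the conditioning on the pre-jump partial sum $S = \sum_{i=0}^{\mathcal{N}-2}\ETx_i$ and justifying that the distribution of the final increment $\ETx_{\mathcal{N}-1}$, given $S=\gamma$ and given that it causes the crossing, is exactly the distribution of $\ETx_0$ conditioned on $\ETx_0 \ge \mathcal{C}_t/c - \gamma$, so that Assumption \ref{new_condition} applies verbatim. This requires a careful application of the strong Markov / renewal structure of the i.i.d.\ sum, and care that $S$ itself has a density supported on $[0,\mathcal{C}_t/c)$ (true since $f(x)=0$ for $x<0$ and by definition of $\mathcal{N}$). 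A secondary (minor) obstacle is choosing the constant $c$: any fixed $c \ge 1$ already yields a finite bound, and one can subsequently optimize the resulting explicit bound $K_0(c)\,\mathbf{E}[\mathcal{N}](c)$ over $c$, but for the qualitative statement of the theorem this optimization is not needed. Everything else — the epoch count, Wald's equation, and the trivial $T_O \ge w$ bound — is routine.
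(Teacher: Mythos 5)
Your upper bound on the expected number of slots used by Accumulate\&Dump is essentially the paper's argument: the same epoch decomposition, the same use of Wald's equation via Lemma \ref{lemma_walds}, the same per-epoch bit count of at least $w\,g(\mathcal{C}_t/(cw))$, and the same use of Assumption \ref{new_condition} to control the overshoot of the stopped sum. (Two small remarks there: by bounding $\mathbf{E}[\sum_{i=0}^{\mathcal{N}-2}\ETx_i]$ and the overshoot separately you get $2\mathcal{C}_t/c+\mathbf{E}[\ETx_0]$ instead of the paper's $\mathcal{C}_t/c+\mathbf{E}[\ETx_0]$ — the paper keeps the pre-jump level $k$ and the threshold gap $\mathcal{C}_t/c-k$ together so they telescope; and the conditioning issue you flag is handled in Appendix \ref{AppendixB} exactly as you anticipate, by iterated expectations over the pre-jump sum.)

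The consequential deviation is your treatment of the denominator. You lower bound the offline finish time by the trivial $T_O\ge w$, which yields $\mathbf{E}[\mathsf{r}]\le K_0\,\mathbf{E}[\mathcal{N}]$ with $K_0=\lceil B_0/(w\,g(\mathcal{C}_t/(cw)))\rceil$. This is finite for each fixed $B_0$, so it does prove the literal statement, but the bound grows linearly in $B_0$ and therefore gives no meaningful competitive-ratio guarantee for large $B_0$; it also cannot support the $B_0$-independent constants claimed in \eqref{upper_bound} and used in Examples \ref{example1}--\ref{example2} and in the choice $c=\mathcal{C}_t/\mathbf{E}[\ETx_0]$. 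The missing (one-line) idea is the nontrivial lower bound on the offline algorithm: since $f(x)=0$ for $x>\mathcal{C}_t$, at most $\mathcal{C}_t$ energy arrives per slot, so by concavity of $g$ no algorithm can send more than $w\,g(\mathcal{C}_t/w)$ bits per slot, whence $\mathcal{S}_{\text{off}}\ge B_0/(w\,g(\mathcal{C}_t/w))$. Dividing your numerator bound by this makes the $B_0$ dependence cancel and recovers the paper's constant bound $\left(\frac{\mathcal{C}_t/c}{\mathbf{E}[\ETx_0]}+1\right)\frac{g(\mathcal{C}_t/w)}{g(\mathcal{C}_t/(cw))}$. I would replace the $T_O\ge w$ step with this observation; everything else in your plan goes through.
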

\begin{proof}
Let the number of slots taken by the optimal offline algorithm to finish transmitting $B_0$ bits is $\mathcal{S}_{\text{off}}$ and the number of slots taken by Accumulate\&Dump to complete is $\mathcal{S}_{on}$. Since the maximum amount of energy harvested in one slot is bounded by $\mathcal{C}_t$, to bound $\mathcal{S}_{\text{off}}$, we consider the best case scenario (that is the fastest completion of transmission) where $\mathcal{C}_t$ amount of energy arrives in each slot. In this best case, the number of bits transmitted per slot, i.e. slot width of $w$, is $wg(\mathcal{C}_t/w)$, and the total number of slots taken to transmit $B_0$ bits is $\Bigg{\lceil}\dfrac{B_0}{wg(\mathcal{C}_t/w)}\Bigg{\rceil}$. Therefore, 
\begin{equation}
\mathcal{S}_{\text{off}}\ge \frac{B_0}{wg(\mathcal{C}_t/w)}.
\end{equation} 
Now, we can write the expected competitive ratio as, 
\begin{align}
&\mathbf{E}[{\mathsf r}]=\mathbf{E}\left[\frac{\mathcal{S}_{on}}{\mathcal{S}_{\text{off}}}\right]\le \mathbf{E}\left[\frac{\mathcal{S}_{on}}{\frac{B_0}{wg(\mathcal{C}_t/w)}}\right]=\frac{\mathbf{E}\left[\mathcal{S}_{on}\right]}{\frac{B_0}{wg(\mathcal{C}_t/w)}}.
\label{competitve_dist}
\end{align}
In each iteration, Accumulate\&Dump waits for $(\mathcal{N}-1)$ slots by which time 
it accumulates at least $\mathcal{C}_t/c$ amount of energy, and then uses all the accumulated energy in the $\mathcal{N}^{th}$ slot for transmission. Hence, at least $wg\left(\frac{\mathcal{C}_t}{cw}\right)$ bits are transmitted in the $\mathcal{N}^{th}$ slot by Accumulate\&Dump, where $\mathcal{N}$  is defined in \eqref{min_slots}. Note that $\mathcal{N}$ is  i.i.d. random variable over all iterations of Accumulate\&Dump. Thus the number of bits transmitted by Accumulate\&Dump in time $\mathcal{N}w$  is $wg\left(\frac{\mathcal{C}_t}{cw}\right)$. This implies that the maximum number of iterations (say $m$) taken by Accumulate\&Dump to transmit $B_0$ bits is $\Bigg{\lceil}\dfrac{B_0}{wg\left(\frac{\mathcal{C}_t}{cw}\right)}\Bigg{\rceil}$. Hence,
\begin{align}
\mathbf{E}\left[\mathcal{S}_{on}\right]&=\mathbf{E}[\mathcal{N}\times m]
\\
&\le \mathbf{E}\left[\mathcal{N}\right]\Bigg{\lceil}\dfrac{B_0}{wg\left(\frac{\mathcal{C}_t}{cw}\right)}\Bigg{\rceil}\stackrel{(a)}{\approx} \mathbf{E}\left[\mathcal{N}\right]\dfrac{B_0}{wg\left(\frac{\mathcal{C}_t}{cw}\right)},
\label{n_on}
\end{align}
where $(a)$ follows under the assumption that $m>>1$. 

Under Assumption \ref{new_condition}, as shown in Appendix \ref{AppendixB},
 \begin{equation}\label{eq:jumpspecial}
\mathbf{E}[\mathcal{N}]\le\frac{\mathcal{C}_t/c}{\mathbf{E}[\ETx_0]}+1.
\end{equation}

 Without assumption \ref{new_condition}, using the trivial upper bound 
 $\mathbf{E}\left[\sum_{i=0}^{\mathcal{N}-1} \ETx_i \right] \le \mathcal{C}_t$, we get 

 \begin{equation}\label{eq:jumpgeneral}
\mathbf{E}[\mathcal{N}]\le\frac{\mathcal{C}_t/c+\mathcal{C}_t}{\mathbf{E}[\ETx_0]}
\end{equation}
 for any general distribution $f(x)$, which is also shown in Appendix \ref{AppendixB}.

Thus, from \eqref{competitve_dist},  \eqref{n_on}, and \eqref{eq:jumpspecial}, we can write the competitive ratio under Assumption \ref{new_condition} as,
\begin{align}
\nonumber \mathbf{E}[{\mathsf r}]&\le \frac{\left(\dfrac{\mathcal{C}_t/c}{\mathbf{E}[\ETx_0]}+1
   \right) \dfrac{B_0}{wg\left(\frac{\mathcal{C}_t}{cw}\right)}}{\frac{B_0}{wg\left(\mathcal{C}_t/w\right)}},
\\
&=\left(\dfrac{\mathcal{C}_t/c}{\mathbf{E}[\ETx_0]}+1\right) \dfrac{g\left(\frac{\mathcal{C}_t}{w}\right)}{g\left(\frac{\mathcal{C}_t}{cw}\right)}.
\label{upper_bound}
\end{align}

Recall that we can choose the parameter $c$. For $c=\dfrac{\mathcal{C}_t}{\mathbf{E}[\ETx_0]}$, \eqref{upper_bound} reduces to $\mathbf{E}[{\mathsf r}] \le \dfrac{2 g\left(\frac{\mathcal{C}_t}{w}\right)}{g\left(\frac{\mathbf{E}[\ETx_0]}{w}\right)}$. With $g(p)=0.5 \log_2(1+p)$, $\dfrac{g\left(\frac{\mathcal{C}_t}{w}\right)}{g\left(\frac{\mathbf{E}[\ETx_0]}{w}\right)}$ is constant for any distribution $f(x)$, where $\mathcal{C}_t$ scales polynomially with $\mathbf{E}[\ETx_0]$. Thus, we get a constant upper bound for $\mathbf{E}[{\mathsf r}]$ under Assumption \ref{new_condition}. 

For any general distribution $f(x)$, from \eqref{eq:jumpgeneral}, 
$$\mathbf{E}[{\mathsf r}]\le \left(\dfrac{\mathcal{C}_t/c+\mathcal{C}_t}{\mathbf{E}[\ETx_0]}\right) \dfrac{g\left(\frac{\mathcal{C}_t}{w}\right)}{g\left(\frac{\mathcal{C}_t}{cw}\right)},$$ which can shown to be finite for $c = \dfrac{\mathcal{C}_t}{\mathbf{E}[\ETx_0]}$, as above, but now the bound depends on system parameters $\mathcal{C}_t$ and $\mathbf{E}[\ETx_0]$.
\end{proof}
Next, we evaluate the derived bounds for particular energy arrival distributions.
\begin{example}
For uniform distribution $f(x)=1/\mathcal{C}_t,0\le x\le \mathcal{C}_t$, we have $\mathbf{E}[\ETx_0]=\mathcal{C}_t/2$ and we can reduce \eqref{upper_bound} to $\mathbf{E}[{\mathsf r}] \le 2\dfrac{\log_2\left(1+\frac{\mathcal{C}_t}{w}\right)}{\log_2\left(1+\frac{\mathcal{C}_t}{2w}\right)}< 4$.
\label{example1}
\end{example}
\begin{example} 
For exponential energy arrival at the transmitter, we can write the probability density function $f(x)$ as, 
\begin{align}
f(x)&=\lambda e^{-\lambda x}, && 0\le x<\mathcal{C}_t,
\\
&=e^{-\lambda \mathcal{C}_t}, && x=\mathcal{C}_t,
\\
&=0, && x>\mathcal{C}_t, x<0.
\end{align}

Let us assume that the probability of the energy arrival being more than the battery capacity is given by $10^{-\epsilon}=f(\mathcal{C}_t)$ for $\epsilon>0$. Note that $\mathbf{E}[\ETx_0]=(1-10^{-\epsilon})/\lambda$. So, with $c=\frac{\mathcal{C}_t}{\mathbf{E}[\ETx_0]}$, the upper bound \eqref{upper_bound} reduces to
\begin{align}
 2\dfrac{\log_2\left(1+\frac{ \epsilon\ln 10}{\lambda w}\right)}{\log_2\left(1+\frac{(1-10^{-\epsilon})}{\lambda w}\right)}&<2\max\left( 1,\frac{\epsilon\ln 10}{1-10^{-\epsilon}}\right) 
\label{exp_upper_boundexact} 
 \\
 & \approx 2 ,\;\;\;\;\;\;\;\;\;\;\;\text{ for }\epsilon<0.43 \\
 &  \approx 4.6 \epsilon, \;\;\;\;\;\;\;\text{ for }\epsilon\ge 0.43. 
 \label{exp_upper_bound}
\end{align}
\label{example2}
\vspace{-1cm}
\end{example}

\subsection{EH at both transmitter and receiver:}
\label{sec:finbatonlinereceiver}
After analyzing the expected competitive ratio when only the transmitter is powered by EH, in this subsection, 
we generalize expected competitive ratio analysis of subsection \ref{sec:finbatonlinetrans} to allow for both 
transmitter and receiver to be powered by EH and where both have finite battery capacities. The transmitter model remains as defined in subsection \ref{sec:finbatonlinetrans}, while for the receiver we assume that 
$\cR_i$, the energy arriving at each slot, is i.i.d. with PDF $h(x)$. The receiver has a finite battery capacity $\mathcal{C}_r$ and it uses $P_{r}$ amount of power to be \textit{on}. Also, $h(x)=0$ for $x>\mathcal{C}_r$ and $x<0$.

In this model, we propose a natural extension of Accumulate\&Dump as follows, \textit{Algorithm modified Accumulate\&Dump:}  In the first iteration, the algorithm waits for $\mathcal{N}$ energy arrivals such that at least $\mathcal{C}_t/c$ amount of energy is harvested at the transmitter, and at least $P_rw$ amount is accumulated at the receiver, where $c\ge 1$. That is, 
\begin{align}
&\mathcal{N}=\min \left\{ n : \sum_{i=0}^{n-1} \ETx_i\ge \frac{\mathcal{C}_t}{c} \text{ and } \sum_{i=0}^{n-1} \cR_i\ge P_rw\right\}.
\label{min_slots_rec}
\end{align}
In the $\mathcal{N}^{th}$ slot, the transmitter uses all the accumulated energy to transmit at a constant rate, and the receiver is also \textit{on}. After this, the system is essentially reset and the algorithm proceeds to the next iteration.

For this modified Accumulate\&Dump algorithm we provide a expected competitive ratio bound in Theorem \ref{online_bat_transrec}.
\begin{theorem}
The expected competitive ratio when both transmitter and receiver are powered by EH is upper bounded by
\begin{align*}
\mathbf{E}[{\mathsf r}] \le \Bigg{(}\dfrac{\mathcal{C}_r+P_rw}{\mathbf{E}[\cR_0]}&+\dfrac{\mathcal{C}_t+\mathcal{C}_t/c}{\mathbf{E}[\ETx_0]}\Bigg{)} \dfrac{g(\frac{\mathcal{C}_t}{w})}{g\left(\frac{\mathcal{C}_t}{cw}\right)},
\end{align*}
for any general distribution of $f(x)$ and $h(x)$, and 
\begin{align*}
\mathbf{E}[{\mathsf r}] \le \Bigg{(}\dfrac{P_rw}{\mathbf{E}[\cR_0]}&+\dfrac{\mathcal{C}_t/c}{\mathbf{E}[\ETx_0]}+2\Bigg{)} \dfrac{g(\frac{\mathcal{C}_t}{w})}{g\left(\frac{\mathcal{C}_t}{cw}\right)},
\end{align*}
when $f(x)$ and $h(x)$ satisfy Assumption \ref{new_condition}.
\label{online_bat_transrec}

\end{theorem}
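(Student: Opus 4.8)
The plan is to mirror the proof of Theorem~\ref{online_bat_trans}, replacing the single stopping time by the joint one in \eqref{min_slots_rec} and decoupling it into a transmitter part and a receiver part. First I would lower bound the offline completion time exactly as before: in any slot at most $\mathcal{C}_t$ energy is available at the transmitter, so at most $wg(\mathcal{C}_t/w)$ bits can be sent per slot regardless of the receiver, giving $\mathcal{S}_{\text{off}}\ge B_0/(wg(\mathcal{C}_t/w))$. Next I would upper bound $\mathbf{E}[\mathcal{S}_{on}]$ for modified Accumulate\&Dump: in each iteration the algorithm waits $\mathcal{N}$ slots, by which time at least $\mathcal{C}_t/c$ energy is stored at the transmitter and the receiver holds at least $P_rw$ energy, i.e.\ enough to be \textit{on} for the whole dump slot, so at least $wg(\mathcal{C}_t/(cw))$ bits are transmitted per iteration; hence the number of iterations is at most $m=\lceil B_0/(wg(\mathcal{C}_t/(cw)))\rceil$. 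Since the per-iteration waiting times $\mathcal{N}_1,\mathcal{N}_2,\dots$ are i.i.d.\ and the realized number of iterations $M\le m$ almost surely, $\mathcal{S}_{on}\le\sum_{k=1}^{m}\mathcal{N}_k$, so $\mathbf{E}[\mathcal{S}_{on}]\le m\,\mathbf{E}[\mathcal{N}]\approx\mathbf{E}[\mathcal{N}]\,B_0/(wg(\mathcal{C}_t/(cw)))$ for $m\gg1$. Combining the two bounds in \eqref{competitve_dist} yields $\mathbf{E}[{\mathsf r}]\le\mathbf{E}[\mathcal{N}]\,g(\mathcal{C}_t/w)/g(\mathcal{C}_t/(cw))$, so everything reduces to bounding $\mathbf{E}[\mathcal{N}]$.

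The key new step is the bound on $\mathbf{E}[\mathcal{N}]$ for the joint stopping time. I would write $\mathcal{N}=\max(\mathcal{N}_t,\mathcal{N}_r)$, where $\mathcal{N}_t=\min\{n:\sum_{i=0}^{n-1}\ETx_i\ge\mathcal{C}_t/c\}$ and $\mathcal{N}_r=\min\{n:\sum_{i=0}^{n-1}\cR_i\ge P_rw\}$, which holds because both accumulation thresholds must be met and the partial sums are nondecreasing; then $\max(a,b)\le a+b$ for $a,b\ge0$ gives $\mathbf{E}[\mathcal{N}]\le\mathbf{E}[\mathcal{N}_t]+\mathbf{E}[\mathcal{N}_r]$. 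Each of $\mathcal{N}_t$ and $\mathcal{N}_r$ is a stopping time with respect to its own i.i.d.\ sequence with finite expectation (the Appendix~\ref{AppendixA} argument applied coordinatewise), so Wald's equation (Lemma~\ref{walds_equation}, Lemma~\ref{lemma_walds}) gives $\mathbf{E}[\mathcal{N}_t]=\mathbf{E}[\sum_{i=0}^{\mathcal{N}_t-1}\ETx_i]/\mathbf{E}[\ETx_0]$ and similarly for $\mathcal{N}_r$. For general $f,h$ I would use the crude overshoot bounds $\mathbf{E}[\sum_{i=0}^{\mathcal{N}_t-1}\ETx_i]\le\mathcal{C}_t/c+\mathcal{C}_t$ and $\mathbf{E}[\sum_{i=0}^{\mathcal{N}_r-1}\cR_i]\le P_rw+\mathcal{C}_r$ (threshold plus one maximal jump, valid since $f,h$ are supported on $[0,\mathcal{C}_t]$ and $[0,\mathcal{C}_r]$), which produces the first displayed bound. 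Under Assumption~\ref{new_condition} I would instead invoke \eqref{eq:jumpspecial} (via its Appendix~\ref{AppendixB} proof) for each coordinate to obtain $\mathbf{E}[\mathcal{N}_t]\le\frac{\mathcal{C}_t/c}{\mathbf{E}[\ETx_0]}+1$ and $\mathbf{E}[\mathcal{N}_r]\le\frac{P_rw}{\mathbf{E}[\cR_0]}+1$, whose sum gives the second displayed bound.

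The main obstacle is the bookkeeping in this decoupling: I must justify $\mathcal{N}=\max(\mathcal{N}_t,\mathcal{N}_r)$, check that Wald is applied legitimately to $\mathcal{N}_t$ and $\mathcal{N}_r$ as stopping times of their own sequences, and verify that under Assumption~\ref{new_condition} the one-dimensional argument of Appendix~\ref{AppendixB} transfers unchanged with $(\mathcal{C}_t/c,\ETx_0)$ replaced by $(P_rw,\cR_0)$ --- which it does, since that argument uses only the threshold and the i.i.d.\ light-tailed property. Everything else (the offline lower bound, the per-iteration bit count, the ceiling-drop approximation for $m\gg1$, and the final substitution into \eqref{competitve_dist}) carries over mechanically from the proof of Theorem~\ref{online_bat_trans}.
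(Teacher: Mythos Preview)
Your proposal is correct and essentially identical to the paper's proof: the paper also writes $\mathcal{N}=\max(\mathcal{N}',\mathcal{N}'')$ for the separate transmitter and receiver stopping times, bounds $\mathbf{E}[\mathcal{N}]\le\mathbf{E}[\mathcal{N}']+\mathbf{E}[\mathcal{N}'']$, and then declares that the rest follows along the lines of Theorem~\ref{online_bat_trans}. If anything, you have spelled out more of the mechanical details (the offline lower bound, the per-iteration bit count, and the explicit overshoot bounds for each coordinate) than the paper does.
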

\begin{proof}
Let us define $\mathcal{N}'$ and $\mathcal{N}''$ as 
\begin{align*}
&\mathcal{N}'=\min\left\{ n :  \sum_{i=0}^{n-1} \ETx_i\ge \frac{\mathcal{C}_t}{c}\right\},
\\
&\mathcal{N}''=\min\left\{ n :  \sum_{i=0}^{n-1} \cR_i\ge P_rw\right\}.
\label{min_slots_1}
\end{align*}
Hence, from \eqref{min_slots_rec}, $\mathcal{N}=\max (\mathcal{N}',\mathcal{N}'')$. Using $\mathbf{E}[\mathcal{N}]< \mathbf{E}[\mathcal{N}']+\mathbf{E}[\mathcal{N}'']$, the rest of the proof follows along similar lines as the proof of Theorem \ref{online_bat_trans}.
\end{proof}

{\it Discussion:} In this section, we proposed a simple online algorithm that accumulates energy upto a certain threshold and transmits (dumps) all of it as soon as it crosses the threshold. The chosen threshold controls the rate at which algorithm transmits power, larger the threshold less slots are active but with more power and vice versa. The idea behind this algorithm is that given that the rate function is concave (e.g., $\log$), the effect of not transmitting power in every slot is not too large, and one can tradeoff the threshold appropriately to find the best threshold given the energy arrival distribution information. We show that for most `nice' energy arrival distributions that do not have arbitrarily large jumps, we can bound the expected competitive ratio by a constant that does not depend on the system parameters, thus showing that the proposed algorithm is close to optimal and has reasonable performance.

\section{SIMULATION RESULTS}

In this section, we first present a sample run of algorithm $\mathsf{OFFM}$ with $B_0=1$ as shown in Fig. \ref{figure_OFFM_example} for a given energy arrival sequence at the transmitter and receiver.
We can see from the transmitter-receiver energy profiles that the finish time decreases through policy $X_1$ to $X_4$. At the same, the transmission time increases from policy $X_1$ to $X_4$. Among policies $X_i$, $i=1,2,3,4$,  $X_4$ is the first policy whose start time is less than $O_{i+1}$(not shown in Fig. \ref{figure_OFFM_example}). Hence, by Lemma \ref{lemma_first_solution}, $X_4$ is optimal.


 Next, we perform simulations to illustrate the competitive ratio performance of the online algorithm from section \ref{sec:onlinereceiver} with no battery capacity constraints. The amount of energy harvested at the transmitter, and the energy (or time) harvested at the receiver are drawn from a uniform distribution in $[0,1]$. The inter-arrival distribution of energy harvests at transmitter and receiver is uniform in $[0,1]$. The rate function is assumed to be $g(p)=0.5\log_2(1+p)$.  Comparison between the online algorithm and $\mathsf{OFFM}$ is shown in Fig. \ref{figure_offon_example}. We can observe that the competitive ratio is close to $1.5$ for different values of $B_0$ bits, which is far better than the worst case bound of $2$ as derived in Theorem \ref{thm:onlinecomp}.

\begin{figure*}
\centering
  	\centerline{\includegraphics[width=14cm]{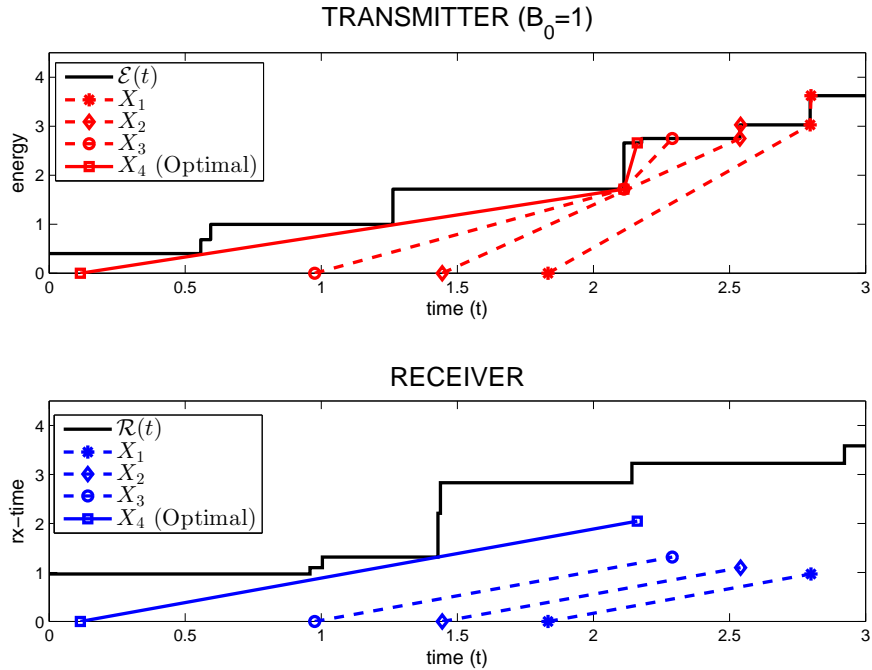}}

	\caption{An example for $\mathsf{OFFM}$ algorithm.}
	\label{figure_OFFM_example}

\end{figure*}
\begin{figure}
\centering
\centerline{\includegraphics[width=8cm]{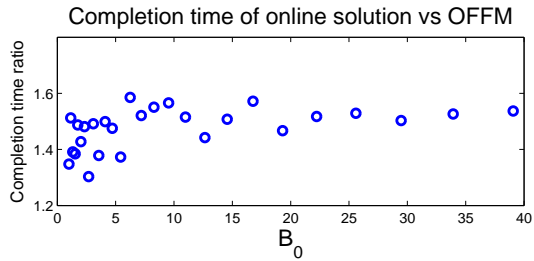}}
	\caption{$\mathsf{OFFM}$ algorithm vs online algorithm with no battery capacity constraints.}
	\label{figure_offon_example}

\end{figure}
%
%

For the finite battery setting, in Fig. \ref{figure_finitebat_example} (a), we first simulate the case when only the transmitter is powered by EH, where energy arrivals follow an exponential distribution, and demonstrate the competitive ratio of Accumulate\&Dump compared to the optimal offline algorithm \cite{Yener2012optbat}. In this experiment, we assume the rate function to be given by $g(p)=0.5\log_2(1+p)$, the battery capacity to be $115$ units, the slot width to be $5$ units, and the energy arrival distribution to be exponential with mean $25$. As described in Example \ref{example2}, the distribution is truncated, i.e. any energy arrival of amount more that $\mathcal{C}_t$ is assumed to have a value of exactly $\mathcal{C}_t$. We have chosen our battery capacity so that the value for $\epsilon$ comes out to be $2$. That is, there is a probability of $0.01$ that the energy harvested is more that $\mathcal{C}_t$. Minimizing the upper bound on the expected competitive ratio given in \eqref{upper_bound}, the optimal value comes out to be $3.56$ for $c=5.07$. Although the theoretical upper bound computed is $3.56$, we can see that the simulated competitive ratio converges around $1.27$. 

Then we consider the case when both transmitter and receiver are powered with EH in the finite battery setting, and simulate the competitive ratio in Fig. \ref{figure_finitebat_example} (b). In this model, both the transmitter and the receiver are assumed to harvest energy from   exponential distribution with mean $25$, and both have a battery capacity of $115$. The receiver \textit{on} power is assumed to be  $P_r=7$. With $w=5$ and $c=5.07$, we can see that the upper bound on expected competitive ratio calculated using Theorem  \ref{online_bat_transrec} turns out to be $8$. An important point to note here is that the optimal offline algorithm is not known for the  finite battery setting when both the transmitter and the receiver are powered by EH. Thus, to compute the competitive ratio, we consider the optimal offline algorithm \cite{Yener2012optbat} when only the transmitter is powered by EH with  finite battery, which clearly is an upper bound on the performance when both transmitter and receiver are powered by EH.
In simulations, we compare the modified Accumulate\&Dump algorithm from section \ref{sec:finbatonlinereceiver} with the optimal offline algorithm presented in \cite{Yener2012optbat}. The simulated competitive ratio converges around $1.65$. 



\begin{figure}
\centering
  	\centerline{\includegraphics[width=8cm]{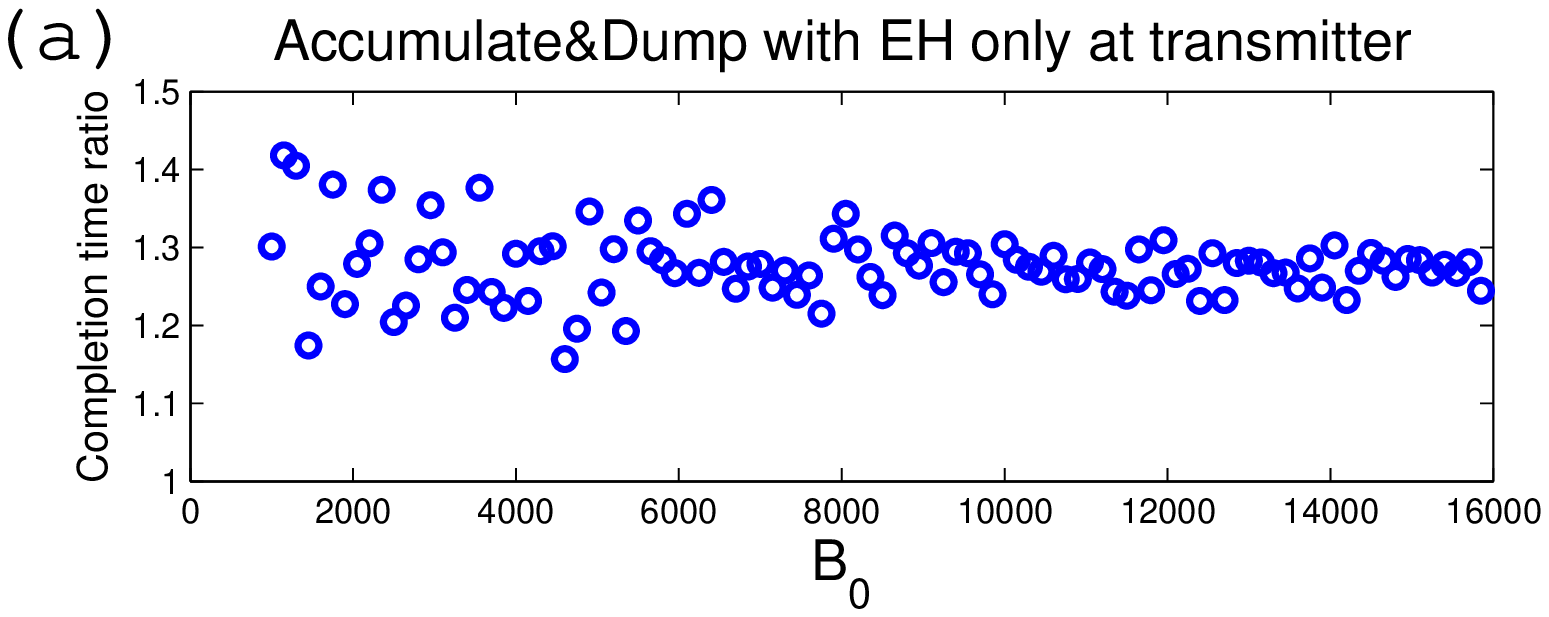}}
	\centerline{\includegraphics[width=8cm]{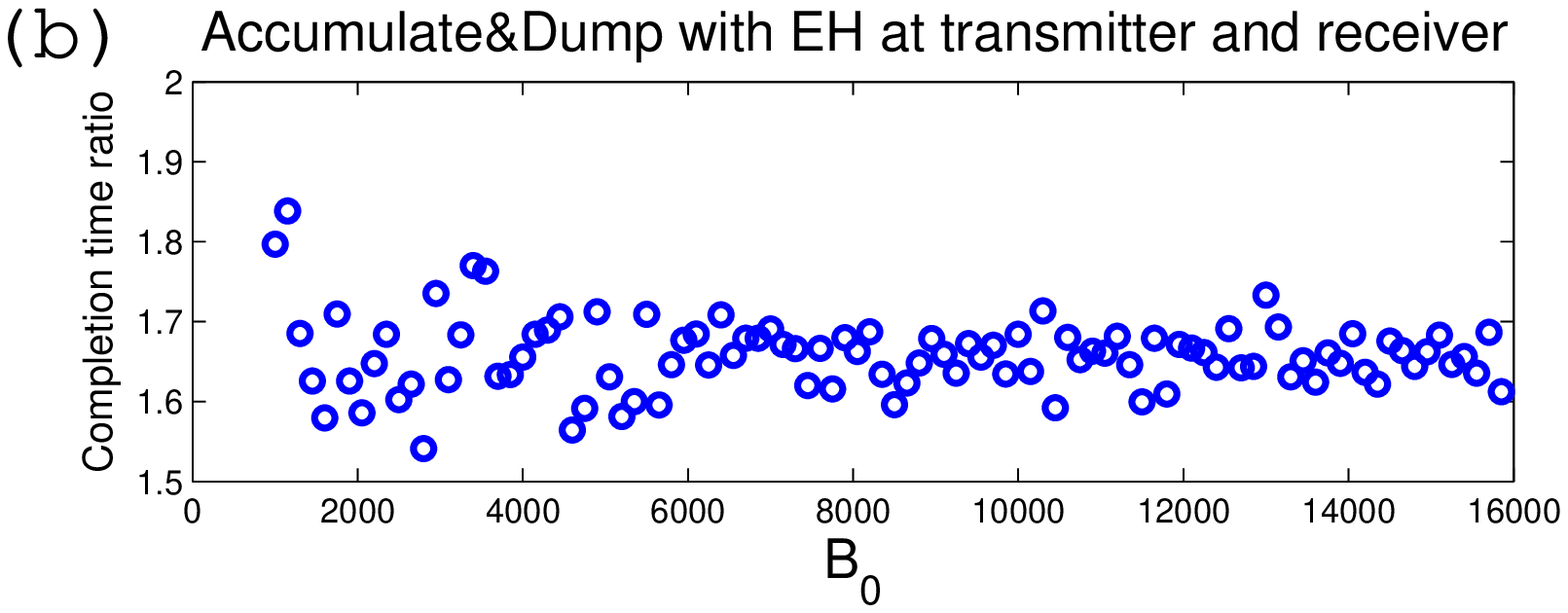}}

	\caption{Comparison of algorithm Accumulate\&Dump with the optimal offline algorithm presented in \cite{Yener2012optbat} with finite battery capacity in the (a) transmitter model and (b) transmitter-receiver model.}
	\label{figure_finitebat_example}

\end{figure}



\section{Conclusions}
In this paper, we have made significant progress in finding optimal transmission strategies when EH is employed at both the transmitter and the receiver. As is evident, EH at both ends is fundamentally different than the case when only the transmitter is powered by EH. With EH at both ends, we have not only found an optimal offline algorithm, which has been accomplished for many other similar but simpler models in past, but also proposed ``good" online algorithms for both finite and infinite battery capacities that have provably efficient competitive ratio compared to the offline algorithms. In particular, in the infinite battery case, the proposed online algorithm is also shown to be optimal. One limitation of tx-rx EH model that we glossed over 
is if there is no centralized controller, how to make transmitter and receiver aware of each others' battery states. This is actually a fundamental issue, and it would require more sophisticated techniques to solve this more general problem. Some limited results are available in  \cite{VazeEH2014}.

\bibliographystyle{IEEEtran}
\bibliography{IEEEabrv,Research}
\appendices

\renewcommand{\theequation}{A.\arabic{equation}}
\numberwithin{equation}{section}
\section{Proof of Lemma \ref{lemma_walds}}
\label{AppendixA}
We seek to apply Wald's equation from Lemma \ref{walds_equation}, for which we have to prove that $\mathbf{E}[N]$ and $\mathbf{E}[\ETx_0]$ are finite.\\
$\mathbf{E}[\ETx_0]<\infty$ follows from the fact that $f(x)=0$ for $x>\mathcal{C}_t$.\\
We now proceed to prove that $\mathbf{E}[N]<\infty$.

\begin{equation}
\mathbf{E}[N]=\sum_{n=1}^{\infty}P[N>n]\le  \sum_{n=1}^{\infty} P\Bigg{[}\displaystyle \sum_{i=0}^{n-1} \ETx_i\le\frac{\mathcal{C}_t}{c}\Bigg{]}.
\label{eq_wald_1}
\end{equation}
Let us choose a constant $x\in (0, \mathcal{C}_t]$ such that $P[\ETx_i>x]>0$ and say $q=P[\ETx_i>x]$.
Define $Y_i = \mathbbm{1}_{\ETx_i>x}$. Clearly all $Y_i$'s are i.i.d random variables. Let $k=\big{\lceil}\frac{\mathcal{C}_t}{cx}\big{\rceil}\Rightarrow kx>\frac{B}{c}$. Now, for any $n>0$ ,
\begin{align}
\displaystyle \sum_{i=0}^{n-1} \ETx_i\le\frac{\mathcal{C}_t}{c}&\Rightarrow \displaystyle \sum_{i=0}^n Y_i\le k,
\\
\nonumber P\Bigg{[}\displaystyle \sum_{i=0}^{n-1} \ETx_i\le\frac{\mathcal{C}_t}{c}\Bigg{]}&\le P\Bigg{[}\displaystyle \sum_{i=0}^{n-1} Y_i\le k\Bigg{]},
\\
&= \displaystyle \sum_{r=0}^k {n\choose r}q^r(1-q)^{n-r}.
\label{eq_wald_2}
\end{align}
From \eqref{eq_wald_1} and \eqref{eq_wald_2},
\begin{align}
&\mathbf{E}[N]\le\displaystyle \sum_{n=1}^\infty \displaystyle \sum_{r=0}^k {n\choose r}q^r(1-q)^{n-r},
\\
& \stackrel{(a)}{\le} \sum_{n=1}^\infty q'^n\displaystyle \sum_{r=0}^k {n\choose r},
\\
& \stackrel{(b)}{\le} \alpha \sum_{n=1}^\infty q'^n n^{k+1}\stackrel{(c)}{<}\infty.
\end{align}
where  $q'=\min(q,1-q)$ in $(a)$. As $\displaystyle \sum_{r=0}^k {n\choose r}$ is a polynomial in $n$ with degree ${k+1}$, $(b)$ follows with some constant $\alpha$. $(c)$ follows since sequence $q'^nn^{k+1}$ converges in $n$, which can be easily verified with the ratio test.
 
Therefore, with $\mathbf{E}[\ETx_0]<\infty$ and $\mathbf{E}[N]< \infty$, we use Wald's equation to write,
\begin{equation}
\mathbf{E}[N]{\mathbf{E}[\ETx_0]}={\mathbf{E}\left[\sum_{i=0}^{N-1} \ETx_i \right]},
\end{equation}
under stopping condition $H$ defined in \eqref{this_is_H}.

\section{}
\label{AppendixB}
From Lemma \ref{lemma_walds}, $\mathbf{E}[\mathcal{N}]$
\begin{align}
&=\nonumber\frac{\mathbf{E}\left[\displaystyle\sum_{i=0}^{\mathcal{N}-1} \ETx_i \right]}{\mathbf{E}[\ETx_0]},
\\
&=\nonumber\frac{\mathbf{E}\left[\mathbf{E}\left[\displaystyle\sum_{i=0}^{\mathcal{N}-1} \ETx_i|H \right]\right]}{\mathbf{E}[\ETx_0]},
\\
&\nonumber\stackrel{(a)}{=}\frac{\mathbf{E}\left[\mathbf{E}\left[\mathbf{E}\left[\displaystyle\sum_{i=0}^{\mathcal{N}-1} \ETx_i\Big{|} \ETx_{\mathcal{N}-1}\ge \mathcal{C}_t/c-k, \displaystyle\sum_{i=0}^{\mathcal{N}-2} \ETx_i=k\right]\right]\right]}{\mathbf{E}[\ETx_0]},
\\
&\nonumber=\frac{\mathbf{E}\left[\mathbf{E}\left[\mathbf{E}\left[k+ \ETx_{\mathcal{N}-1}\Big{|} \ETx_{\mathcal{N}-1}\ge \mathcal{C}_t/c-k, \displaystyle\sum_{i=0}^{\mathcal{N}-2} \ETx_i=k\right]\right]\right]}{\mathbf{E}[\ETx_0]},
\\
&\nonumber=\frac{\mathbf{E}\left[\mathbf{E}\left[\mathbf{E}\left[k+ \ETx_{\mathcal{N}-1}\Big{|} \ETx_{\mathcal{N}-1}\ge \mathcal{C}_t/c-k\right]\right]\right]}{\mathbf{E}[\ETx_0]},
\\
&\nonumber\stackrel{(b)}{\le}\frac{\mathbf{E}\left[\mathbf{E}\left[k+\mathcal{C}_t/c-k+ \mathbf{E}\left[ \ETx_{\mathcal{N}-1}\right]\right]\right]}{\mathbf{E}[\ETx_0]},
\\
&\nonumber=\frac{\mathbf{E}\left[\mathbf{E}\left[\mathcal{C}_t/c+ \mathbf{E}\left[ \ETx_{0}\right]\right]\right]}{\mathbf{E}[\ETx_0]}
\\
&=\frac{\mathcal{C}_t/c}{\mathbf{E}[\ETx_0]}+1,
\label{hitting}
\end{align}
where $k$ is a constant in $(a)$ with $0\le k < \mathcal{C}_t/c$ and $(b)$ follows under Assumption \ref{new_condition}.
For general energy arrival distributions, we can write  $\mathbf{E}[\mathcal{N}]$ as,
\begin{align}
\mathbf{E}[\mathcal{N}]&=\nonumber\frac{\mathbf{E}\left[\displaystyle\sum_{i=0}^{\mathcal{N}-1} \ETx_i \right]}{\mathbf{E}[\ETx_0]},
\\
&=\nonumber\frac{\mathbf{E}\left[\displaystyle\sum_{i=0}^{\mathcal{N}-2} \ETx_i \right]}{\mathbf{E}[\ETx_0]}+\frac{\mathbf{E}\left[ \ETx_{\mathcal{N}-1} \right]}{\mathbf{E}[\ETx_0]},
\\
&\nonumber\stackrel{(a)}{\le}\frac{\mathcal{C}_t/c}{\mathbf{E}[\ETx_0]}+\frac{\mathbf{E}\left[ \ETx_{\mathcal{N}-1} \right]}{\mathbf{E}[\ETx_0]},
\\
&\stackrel{(b)}{\le}\frac{\mathcal{C}_t/c}{\mathbf{E}[\ETx_0]}+\frac{\mathcal{C}_t }{\mathbf{E}[\ETx_0]},
\end{align}
where $(a)$ follows under stopping condition $H$ defined in  \eqref{this_is_H} and $(b)$ follows since, $f(x)=0$ for $x>\mathcal{C}_t$.
 

\end{document}